\documentclass{article}

\usepackage{microtype}
\usepackage{graphicx}
\usepackage{subcaption}
\usepackage{booktabs}
\usepackage{longtable}
\usepackage{array}
\usepackage[normalem]{ulem}
\usepackage{makecell}
\usepackage{algorithm}
\usepackage{algorithmic}
\usepackage{hyperref}
\usepackage{titletoc}
\usepackage[nice]{nicefrac}

\usepackage[preprint]{neurips_2026}

\usepackage{amsmath}
\usepackage{amssymb}
\usepackage{mathtools}
\usepackage{amsthm}
\usepackage{enumitem}

\usepackage{tabularx}
\usepackage[table]{xcolor}
\usepackage{pifont}
\newcommand{\cmark}{\ding{51}}
\newcommand{\xmark}{\ding{55}}

\usepackage[capitalize,noabbrev]{cleveref}

\theoremstyle{plain}
\newtheorem{theorem}{Theorem}[section]
\newtheorem{proposition}{Proposition}[section]
\newtheorem{lemma}{Lemma}[section]
\newtheorem{corollary}{Corollary}[section]
\theoremstyle{definition}
\newtheorem{definition}{Definition}[section]
\newtheorem{assumption}{Assumption}[section]
\theoremstyle{remark}
\newtheorem{remark}{Remark}[section]
\theoremstyle{plain}
\newtheorem{problem}{Problem}

\usepackage{caption}
\usepackage{subcaption}
\newcommand{\arxiv}[1]{}

\usepackage[normalem]{ulem}
\usepackage[textsize=tiny]{todonotes}
\usepackage{siunitx}
\DeclareSIUnit{\rad}{rad}

\title{Submodular Multi-Agent Policy Learning for Online Distributed Task Allocation in Open Multi-Agent Systems}

\author{%
\makebox[\textwidth][c]{%
\parbox{0.95\textwidth}{%
\centering
Jing Liu$^{1}$\thanks{Equal contribution.}
\quad
Yangyang Yang$^{1}$\footnotemark[1]
\quad
Luca Ballotta$^{2}$
\\
Fangfei Li$^{1}$\thanks{Corresponding authors: Fangfei Li and Yang Tang.}
\quad
Yang Tang$^{3}$\footnotemark[2]
\quad
Ruggero Carli$^{2}$
\\[0.6em]
{\small
$^{1}$School of Mathematics, East China University of Science and Technology, Shanghai, China
\\
$^{2}$Department of Information Engineering, University of Padua, Padua, Italy
\\
$^{3}$School of Information Science and Engineering, East China University of Science and Technology, Shanghai, China
}
}%
}%
}

\begin{document}

\maketitle

\begin{abstract}
This paper studies multi-agent reinforcement learning with submodular team utilities, which models scenarios where $N$ agents solve a non-additive task allocation problem in a distributed manner online. Since each agent selects one action from a local categorical distribution at each time step, feasible joint actions form a partition matroid over agent-action pairs. The standard continuous relaxation of set utility functions, the Multilinear Extension, does not encode categorical constraints on  factorized policies and may yield inconsistent gradient estimation. To remedy this, we propose the \emph{Partition Multilinear Extension}, a continuous relaxation that equals the expected team utility with factorized categorical policies under partition matroid constraint. We prove that submodular difference rewards provide unbiased PME marginal-gradient information and induce a stagewise score-function policy-gradient estimator for factorized categorical policies. Building on these results, we propose \emph{SubMAPG}, a centralized training with decentralized execution (CTDE) multi-agent policy-gradient framework that implements submodular difference-reward training signals and masked categorical policies for partition-feasible decentralized execution. For the associated PME marginal-space projected stochastic-gradient dynamics, we establish a stagewise $\nicefrac{1}{2}$-approximation guarantee and sublinear dynamic regret under slowly varying environments, measured by the path length of the optimal PME marginals. Finally, to handle open systems where agents and targets may leave and join over time (e.g., modeling failure and recovery of robots or smart sensors), we implement SubMAPG with a graph neural network policy model. Numerical experiments on multi-robot coverage and multi-target tracking show that SubMAPG outperforms local greedy and shared-reward baselines, and is competitive with centralized myopic greedy strategies.
\end{abstract}

\section{Introduction}
\label{sec:intro}
Distributed multi-agent task allocation arises in dynamic target tracking, active sensing, and coverage, where agents must coordinate under limited communication, partial observability, and non-additive team utilities. These problems are naturally sequential and time-varying, as targets move, observations change, and environments evolve over time.  \autoref{fig:system} illustrates a representative sensing scenario with mobile agents, dynamic targets, and limited sensing and peer-to-peer communication; agents and tasks may even vary during prolonged deployments~\citep{zhang2018fully,deplano2026optimization}.

A central difficulty in these problems is that team utilities are typically non-additive across agents. For example, overlapping fields of view in \autoref{fig:system} reduce the total area covered by robots, such that their respective coverage contributions do not add linearly. This so-called diminishing-return structure is formally captured by monotone submodular utilities. Submodularity provides a principled basis for combinatorial coordination since, if the feasible set is a matroid, they enable suboptimality  guarantees for greedy selection \citep{nemhauser1978analysis,fisher1978analysis,xu2023online,liuj2024distributed} and continuous relaxations such as the Multilinear Extension (MLE) with rounding \citep{calinescu2011maximizing,zhang2025near}. However, these methods are designed for static, fully observable optimization problems with centralized coordination and do not inherently accommodate distributed policies for sequential decision-making, with the notable exception of online greedy selection in \citep{xu2023online}. Moreover, even static monotone submodular maximization over a matroid is NP-hard, and directly searching the discrete joint action space is exponentially costly in the number of agents and actions \citep{prajapat2024submodular,chen2026multi}. 

\begin{figure}[t]
  \centering
  \begin{minipage}[c]{0.58\linewidth}
    \centering
    \includegraphics[width=\linewidth]{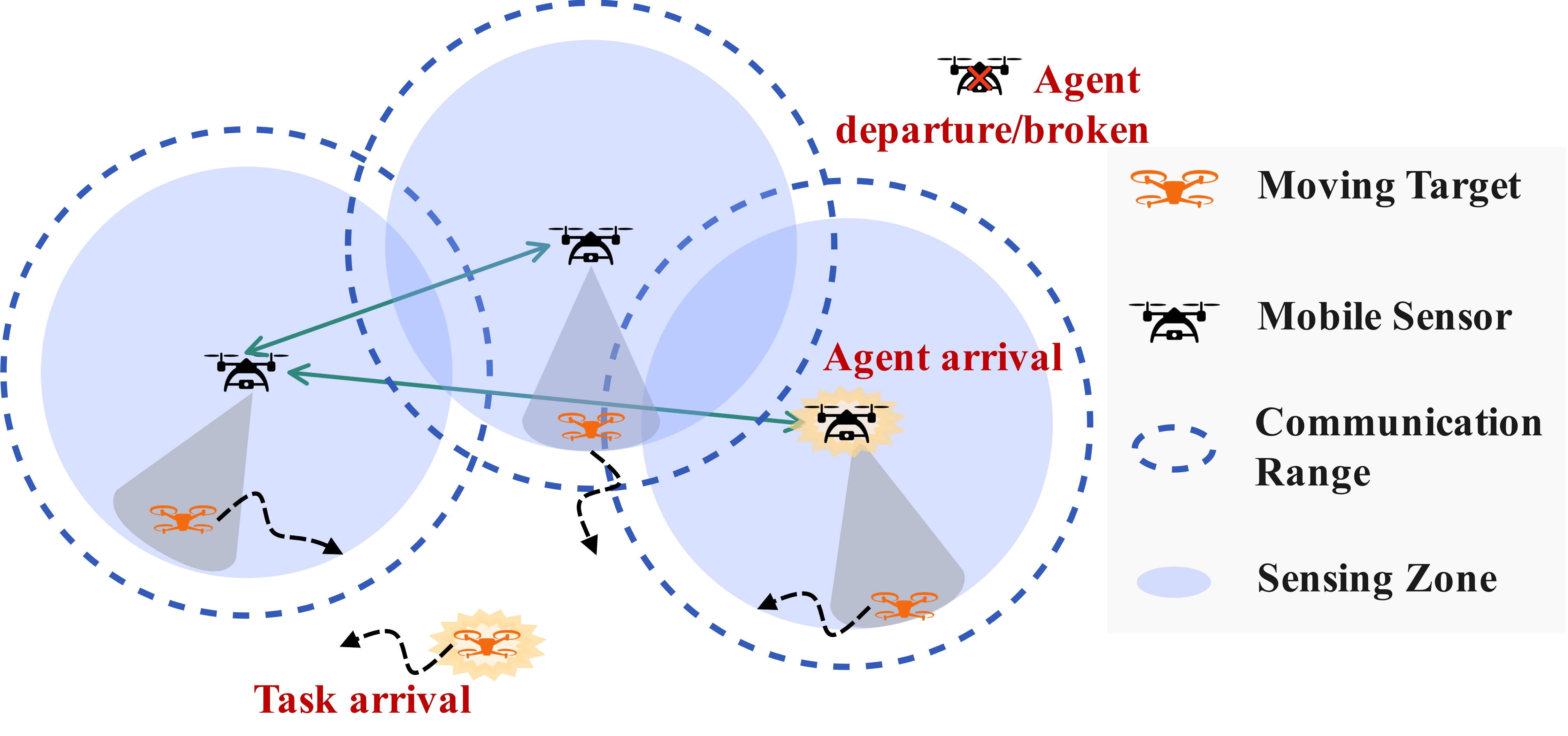}
  \end{minipage}
  \hfill
  \begin{minipage}[c]{0.38\linewidth}
    \caption{
    An open multi-agent system for task allocation. Mobile sensors collaboratively track mobile targets under time-varying agent populations, dynamic tasks, limited communication, and bounded sensing.}
    \label{fig:system}
  \end{minipage}
\end{figure}

In contrast, multi-agent reinforcement learning (MARL) offers a way to learn distributed coordination policies for sequential decision-making under partial observability and environment dynamics. However, recently proposed submodular RL and MARL approaches focus on single-agent or centralized decision-making~\citep{prajapat2024submodular,desanti2024global,chen2026multi}. Distributed submodular task-allocation methods provide approximation guarantees for online optimization but do not use MARL~\citep{zhang2025near}. 

Moreover, credit assignment under non-additive utilities is inherently harder than the classical additive case because each agent's marginal contribution depends critically on the actions of others. Difference rewards and counterfactual baselines can mitigate this difficulty, yet whether they estimate a consistent gradient direction for any continuous optimization objective with submodular utilities remains unknown~\citep{foerster2018coma,castellini2025difference}. This motivates our central question:

\emph{Can we establish a rigorous mathematical connection between credit assignment and MARL policy optimization for distributed submodular task allocation under partition matroid constraints?}


Achieving this goal requires resolving three challenges. First, consistent policy gradient requires a continuous objective that is formally equivalent to the expected utility of the executed distributed policy. The classical Multilinear Extension (MLE) smooths a set function via independent Bernoulli sampling and typically relies on rounding, creating a mathematical mismatch between the optimization objective and the categorical policies executed by the agents. Second, difference rewards for multi-agent credit assignment are typically treated as variance-reduction or reward-shaping signals but it is unclear if they estimate unbiased, and therefore formally consistent, gradient information for any partition-feasible continuous objective. Third, task allocation requires scalable local policies that can handle variable agent and task sets, whereas standard MARL assumes fixed-size inputs~\citep{lowe2017multi,yu2022surprising,anil2025mohito}.

We address these challenges by proposing a \emph{Partition Multilinear Extension} (PME), a continuous relaxation that naturally accommodates partition matroid constraints. Unlike the MLE, the PME equals the expected stage utility of the \textit{executed} policies. We prove that submodular difference rewards yield unbiased stagewise PME marginal-gradient information, thereby turning marginal-contribution credit assignment into consistent stochastic gradient information for partition-feasible objectives. Building on this connection, we propose \textit{SubMAPG}, a policy-gradient framework with centralized utility evaluations during training and decentralized masked-categorical execution. Our main contributions are:
\vspace{-1em}
\begin{enumerate}[label=(\arabic*), leftmargin=*]
  \item We formulate distributed multi-agent task allocation under monotone submodular utilities and partition-constrained categorical policies. While recent works study single-agent RL with submodular global rewards~\citep{prajapat2024submodular,desanti2024global}, we tackle multi-agent policy gradient for centralized training with decentralized execution (CTDE) with submodular stagewise team utilities. In this setting, we introduce the PME, a continuous relaxation that equals the expected team utility at each time step under factorized categorical policies.

\item We propose \textit{SubMAPG}, a CTDE MARL framework for submodular multi-agent task allocation with masked categorical policies. Submodular difference rewards provide unbiased stagewise marginal-gradient information for the partition-feasible PME objective. For the associated PME marginal-space projected stochastic-gradient dynamics, we prove a stagewise $\nicefrac{1}{2}$-approximation guarantee and dynamic regret scaling as $O(\sqrt{(1+\mathcal{P}_T)T})$ under bounded problem parameters, which is sublinear whenever the path length $\mathcal{P}_T$  (which measures how much the optimal PME marginal solutions vary between consecutive time steps) satisfies $\mathcal{P}_T \in o(T)$. Unlike online submodular task-allocation methods that optimize instantaneous allocation decisions without learning reusable decentralized policies~\citep{zhang2025effective,zhang2025near} and centralized submodular RL and MARL~\citep{prajapat2024submodular,chen2026multi}, SubMAPG links submodular task-allocation guarantees to decentralized categorical policy learning.

\item We instantiate SubMAPG with tabular softmax and GNN policies. The tabular variant follows decentralized softmax gradient-play parameterizations~\citep{zhang2022on}, while the GNN variant provides permutation-invariant decentralized policies for variable-size agent and task sets and enables open-system evaluation. Compared with task-open MARL methods focused primarily on adaptability~\citep{anil2025mohito}, SubMAPG exploits submodular team utilities to obtain PME-based credit-assignment interpretation and partition-feasible decentralized policy learning. Experiments on coverage and multi-target tracking show that SubMAPG improves over shared-reward and online/local greedy baselines and is competitive with centralized myopic greedy methods in several settings.
\end{enumerate}

\vspace{-1em}
\paragraph{Related Work.}
Due to space limitations, we defer the comprehensive literature review and detailed comparison with submodular optimization, online coordination, and MARL baselines to Appendix~\ref{app:related}.

\section{Multi-Agent Online Submodular Coordination}
\label{sec:problem}

We formulate task allocation in a Multi-Agent System (MAS) as a decentralized partially observable sequential decision problem with a monotone submodular team utility. At each round, the active agent set, available actions, feasible joint-action space, and utility may change over time.

\subsection{Problem Formulation}
\label{sec:environment}

\paragraph{States, Agents, and Actions.}
Let $t \in [T] \triangleq \{1,\ldots,T\}$ denote the round index and let $\mathcal{N}_t$ be the set of active agents at round $t$. The global state is $s_t \in \mathcal{S}$ and includes the active agents, tasks, and environment variables. Each agent $i \in \mathcal{N}_t$ chooses an action $a_{i,t}$ from a finite action set $\mathcal{A}_{i,t}$, and the joint action is $\mathbf{a}_t = (a_{i,t})_{i \in \mathcal{N}_t}$. The system evolves according to a possibly time-varying controlled transition kernel $P_t(\cdot \mid s_t,\mathbf{a}_t)$.
We define the ground set of available agent-action pairs as $\Omega_t \triangleq
  \{(i,a): i \in \mathcal{N}_t,\ a \in \mathcal{A}_{i,t}\}$. The action semantics are application-dependent. An action may directly assign a task, choose a sensing target, or select a motion primitive that affects future task allocation.
\vspace{-1em}
\paragraph{Partition Matroid Constraint.}
Each active agent can execute at most one action per round. Therefore, the ground set
decomposes into per-agent subsets $\Omega_{i,t} \triangleq \{(i,a): a \in \mathcal{A}_{i,t}\},
 i \in \mathcal{N}_t$. The feasible joint-action sets form a partition matroid $(\Omega_t,\mathcal{I}_t)$ with
  $\mathcal{I}_t \triangleq
  \bigl\{
    A \subseteq \Omega_t:
    |A \cap \Omega_{i,t}| \leq 1,\ \forall i \in \mathcal{N}_t
  \bigr\}$.
Each feasible joint action $A_t \in \mathcal{I}_t$ selects at most one action per agent. 
\vspace{-1em}
\paragraph{Observations and Decentralized Policies.}
Each agent receives a local observation
$o_{i,t} = \mathcal{O}(s_t,i)$, which may include onboard sensing and messages
from neighboring agents. Let $\mathbf{o}_t \triangleq (o_{i,t})_{i \in \mathcal{N}_t}$ denote the
joint observation vector. We consider decentralized
factorized categorical policies
\begin{equation}
\label{eq:factorized_policy}
  \pi(\mathbf{a}_t \mid \mathbf{o}_t)
  =
  \prod_{i \in \mathcal{N}_t}
  \pi^i(a_{i,t} \mid o_{i,t}),
\end{equation}
where $\pi^i(\cdot \mid o_{i,t})$ is a categorical distribution over $\mathcal{A}_{i,t}$. This policy class matches decentralized execution and ensures partition feasibility by construction, since each active agent samples one action from its own categorical distribution.
\vspace{-1em}
\paragraph{Submodular Team Utility.}
The MAS's instantaneous performance at each round $t$ is
  $F_t(\cdot\,;s_t): 2^{\Omega_t} \to \mathbb{R}_{\geq 0}$.
Throughout the paper, $F_t$ is assumed to be normalized, monotone, and
submodular for every $s_t$ (see Appendix~\ref{app:assumptions}). This model covers common multi-agent objectives such as coverage~\citep{Welikala2022ANP}, facility location~\citep{zhou2019sensor}, and information gain~\citep{xu2025communication}, where overlapping agent contributions create diminishing returns.

The goal is to learn a decentralized factorized policy that maximizes cumulative submodular utility while satisfying the partition matroid constraint at every round.

\begin{problem}[Online Multi-Agent Submodular Coordination]
\label{prob:main}
Given the time-varying MAS, find a factorized policy $\pi$ of the form \eqref{eq:factorized_policy} that solves
\begin{equation}
\label{eq:objective}
  \begin{aligned}
    \max_{\pi}\quad
    & J(\pi)
      \triangleq
      \mathbb{E}_{\pi}\!\left[
      \sum_{t=1}^{T} F_t(A_t;s_t)
      \right] \\
    \mathrm{s.t.}\quad
    & A_t \triangleq
      \{(i,a_{i,t}): i \in \mathcal{N}_t\}
      \in \mathcal{I}_t,
      \qquad \forall t \in [T].
  \end{aligned}
\end{equation}
The expectation is over the trajectory distribution induced by $\pi$ and $\{P_t\}_{t=1}^T$.
\end{problem}

Problem~\ref{prob:main} combines partial observability, partition-feasible categorical actions, and sequential state dependence, making standard centralized submodular optimization relaxations insufficient.

\section{Partition Multilinear Extension}
\label{sec:pme}

Problem~\eqref{eq:objective} is a combinatorial optimization problem over the partition matroid family $\mathcal{I}_t$, and is NP-hard even in static submodular maximization settings. Therefore, we introduce a continuous representation of the stage utility that matches the factorized categorical policies used by decentralized MARL. For each agent-action pair $(i,a)\in\Omega_t$, the decision variable $x_{(i,a)}\in[0,1]$ denotes the marginal probability that agent $i$ selects action $a \in \mathcal{A}_{i,t}$, i.e., $x_{(i,a)}=\pi^i(a\mid o_{i,t})$ under the factorized policy $\pi=\prod_i\pi^i$.
The standard Multilinear Extension (MLE)~\citep{calinescu2011maximizing} samples each ground-set element independently. Under the partition matroid, this Bernoulli sampling may select multiple actions for one agent, so the optimized distribution differs from the categorical policy executed by decentralized agents. We therefore use a categorical extension over the partition matroid, called the \emph{Partition Multilinear Extension} (PME). A detailed comparison with MLE and related policy-based extensions~\citep{zhang2025effective} is given in Appendix~\ref{app:comparison}.

\subsection{Definition and Key Properties}
\label{sec:pme_definition}

We first define the continuous domain associated with the partition matroid $(\Omega_t, \mathcal{I}_t)$. The associated matroid polytope is $\mathcal{P}(\mathcal{I}_t) = \left\{ \mathbf{x}\in[0,1]^{|\Omega_t|}: \sum_{a\in\mathcal{A}_{i,t}}x_{(i,a)}\le 1,\ \forall i\in\mathcal{N}_t
\right\}$. Any vector $\mathbf{x} \in \mathcal{P}(\mathcal{I}_t)$ defines a product distribution in which each agent selects at most one feasible action, with the remaining probability assigned to idling.

\begin{definition}[Partition Multilinear Extension]\label{def:pme}
    Given the partition matroid $(\Omega_t, \mathcal{I}_t)$
    and the set function $F_t(\,\cdot\,;\,s_t) : 2^{\Omega_t} \to
    \mathbb{R}_{\geq 0}$, the \emph{partition multilinear extension}
    $\tilde{f}_t(\,\cdot\,;\,s_t) : \mathcal{P}(\mathcal{I}_t) \to \mathbb{R}$ at state $s_t$ is
    \begin{equation}\label{eq:pmv_def}
        \tilde{f}_t(\mathbf{x};\,s_t) \triangleq
        \mathbb{E}_{A \sim \mathcal{D}(\mathbf{x})}[F_t(A;\,s_t)]
        = \sum_{A \in \mathcal{I}_t} F_t(A;\,s_t)
        \prod_{i \in \mathcal{N}_t} p_i(A; \mathbf{x}),
    \end{equation}
    where $\mathcal{D}(\mathbf{x})$ denotes the product of independent categorical distributions over agents, and agent $i$'s factor is
    \begin{equation}\label{eq:agent_prob}
        p_i(A; \mathbf{x}) =
        \begin{cases}
            x_{(i,a)}, & \text{if } A \cap \Omega_{i,t} = \{(i,a)\}, \\[4pt]
            1 - \displaystyle\sum_{a' \in \mathcal{A}_{i,t}} x_{(i,a')},
            & \text{if } A \cap \Omega_{i,t} = \emptyset.
        \end{cases}
    \end{equation}
\end{definition}
The PME samples at most one element from each agent subset: agent $i$ selects action $a$ with probability $x_{(i,a)}$ and remains idle with probability $1-\sum_{a\in\mathcal{A}_{i,t}}x_{(i,a)}$.
Although the PME is defined on the full partition matroid polytope
$\mathcal{P}(\mathcal{I}_t)$, the factorized categorical policies in
\eqref{eq:factorized_policy} induce marginals on the smaller face
\begin{equation}\label{eq:boundary_face}
  \mathcal{F}_t
  \triangleq
  \left\{
  \mathbf{x}\in\mathcal{P}(\mathcal{I}_t):
  \sum_{a\in\mathcal{A}_{i,t}}x_{(i,a)}=1,\;
  \forall i\in\mathcal{N}_t
  \right\}.
\end{equation}
On $\mathcal{F}_t$, the idle probability in~\eqref{eq:agent_prob} is zero, so the PME exactly represents the expected utility of the corresponding categorical policy. Since $\tilde f_t$ is monotone, optimizing $\tilde f_t$ over $\mathcal{P}(\mathcal{I}_t)$ has an optimizer on $\mathcal{F}_t$; see Appendix~\ref{app:pme_properties}. The PME is also nonnegative, monotone, smooth, and DR-submodular whenever $F_t$ is monotone submodular.
\vspace{-0.6em}
\paragraph{Gradient Characterization.}
For $e\in\Omega_t$ and $A\subseteq\Omega_t\setminus\{e\}$, write
$F_t(e\mid A;s_t)\triangleq F_t(A\cup\{e\};s_t)-F_t(A;s_t)$.
The next result shows that each coordinate of the PME gradient equals
the expected marginal contribution of the corresponding agent-action pair, thereby linking the continuous relaxation to agent-wise credit assignment.

\begin{lemma}[Gradient of PME]\label{lem:gradient}
    The partial derivative of $\tilde{f}_t$ with respect to $x_{(i,a)}$ is
    given by
    \begin{equation}
        \frac{\partial \tilde{f}_t}{\partial x_{(i,a)}}(\mathbf{x})
        = \mathbb{E}_{A^{-i} \sim \mathcal{D}^{-i}(\mathbf{x})}
        \!\Big[ F_t\big((i,a) \mid A^{-i};\,s_t\big) \Big], \nonumber
    \end{equation}
    where $A^{-i}$ denotes a joint action of all agents except agent $i$, sampled from the product distribution $\mathcal{D}^{-i}(\mathbf{x})$ over $\mathcal{N}_t \setminus \{i\}$.
\end{lemma}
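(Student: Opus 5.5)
The plan is to exploit the product structure of $\mathcal{D}(\mathbf{x})$ and observe that $\tilde f_t$ is affine in agent $i$'s block of coordinates $(x_{(i,b)})_{b\in\mathcal{A}_{i,t}}$ once the other agents' coordinates are fixed. Concretely, I will condition on agent $i$'s choice in \eqref{eq:pmv_def}. Every $A\in\mathcal{I}_t$ decomposes uniquely as $A=A^{-i}\cup B$ with $A^{-i}\in\mathcal{I}_t$ supported on $\Omega_t\setminus\Omega_{i,t}$ and $B\in\{\emptyset\}\cup\{\{(i,b)\}:b\in\mathcal{A}_{i,t}\}$. Moreover, $\prod_{j\neq i}p_j(A;\mathbf{x})$ depends only on $A^{-i}$ and equals the probability of $A^{-i}$ under $\mathcal{D}^{-i}(\mathbf{x})$. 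This gives
\begin{equation*}
\tilde f_t(\mathbf{x};s_t)=\mathbb{E}_{A^{-i}\sim\mathcal{D}^{-i}(\mathbf{x})}\Big[\Big(1-\sum_{b\in\mathcal{A}_{i,t}}x_{(i,b)}\Big)F_t(A^{-i};s_t)+\sum_{b\in\mathcal{A}_{i,t}}x_{(i,b)}F_t\big(A^{-i}\cup\{(i,b)\};s_t\big)\Big],
\end{equation*}
where the expectation over $\mathcal{D}^{-i}(\mathbf{x})$ involves only the coordinates $x_{(j,\cdot)}$ with $j\neq i$.

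Next I differentiate with respect to $x_{(i,a)}$. The outer expectation is a finite sum whose weights do not involve agent $i$'s coordinates. The bracket is affine in $x_{(i,a)}$, with coefficient $F_t(A^{-i}\cup\{(i,a)\};s_t)-F_t(A^{-i};s_t)$. By definition this coefficient is $F_t((i,a)\mid A^{-i};s_t)$, which is legitimate notation because $(i,a)\notin A^{-i}$. Exchanging the derivative with the finite sum then yields the stated formula.

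The main care point is bookkeeping rather than analysis. First, I must check the bijection between $\mathcal{I}_t$ and pairs $(A^{-i},B)$. Second, I must verify that the idle factor $1-\sum_b x_{(i,b)}$ also depends on $x_{(i,a)}$. Its derivative, $-1$, is exactly what produces the subtracted term $F_t(A^{-i};s_t)$, which turns the expected value into an expected marginal gain. I will also note how the derivative is interpreted at the boundary of $\mathcal{P}(\mathcal{I}_t)$ and on the face $\mathcal{F}_t$. Since $\tilde f_t$ is a polynomial in $\mathbf{x}$, it extends naturally to a neighborhood of the polytope, and the partial derivative is that of this polynomial. The formula therefore holds everywhere, including on $\mathcal{F}_t$, where the idle probability vanishes but the polynomial is still well defined.
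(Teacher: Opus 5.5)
Your proposal is correct and is essentially the paper's argument. The paper differentiates the sum in \eqref{eq:pmv_def} term by term, using $\partial p_i/\partial x_{(i,a)}\in\{1,0,-1\}$ according to whether agent $i$ picks $a$, picks another action, or idles, and then regroups via the same correspondence $A\leftrightarrow A^{-i}$ and the identification of $\prod_{k\neq i}p_k$ with the $\mathcal{D}^{-i}(\mathbf{x})$-probability; you instead regroup first, writing $\tilde f_t$ as a $\mathcal{D}^{-i}$-average of a function affine in agent $i$'s block, and then differentiate, which is only a reordering of the same steps (your remark on reading the derivative as that of the polynomial extension on $\mathcal{F}_t$ is a useful clarification the paper leaves implicit).
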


This lemma shows that each coordinate of the PME gradient is an expected marginal contribution of one agent-action pair. This observation is the basis for the difference-reward estimator introduced in Section~\ref{sec:algorithm}.

\subsection{Objective Equivalence}
\label{sec:approximation}

With the PME and its properties established, we now relax the discrete
optimization problem~\eqref{eq:objective} into a continuous maximization problem. We first show that, for any factorized categorical policy, the PME evaluated at its marginals exactly equals the policy's expected stage utility.

\begin{lemma}[Objective Equivalence]\label{lem:equivalence}
   Let $\pi = \prod_{i \in \mathcal{N}_t} \pi^i$ be a
    factorized categorical policy as per~\eqref{eq:factorized_policy} and define the state-dependent marginal vector $\mathbf{x}_t(s_t)$ by
$x_{(i,a)}(s_t)\triangleq \pi^i(a\mid \mathcal{O}(s_t,i))$. Then, for any fixed state $s_t$, the conditional expected stage utility under $\pi$ equals the PME evaluated at $\mathbf{x}(\pi)$:
    $\mathbb{E}_{A_t \sim \pi}\!\left[
    F_t(A_t;\,s_t)\,\middle|\,s_t
    \right]
    =
    \tilde{f}_t(\mathbf{x}_t(s_t);\,s_t)$. 
    Consequently, by the total expectation,
$J_t(\pi)
=
\mathbb{E}_{s_t\sim\mu_t^\pi}
\left[
\tilde f_t(\mathbf{x}_t(s_t);s_t)
\right]$, where $\mu_t^{\pi}$ denotes the marginal state distribution at round $t$ induced by $\pi$.
    Note that for categorical policies,
   $\sum_{a} x_{(i,a)}(s_t) = 1$, so the idle mass
    $1 - \sum_{a} x_{(i,a)}$ in \eqref{eq:agent_prob} is exactly zero, ensuring exact equivalence at every state $s_t$.
\end{lemma}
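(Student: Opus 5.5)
The argument is a direct computation: conditional on $s_t$, write the law of the executed joint action $A_t$ under $\pi$, and check that it coincides term by term with the product distribution $\mathcal{D}(\mathbf{x}_t(s_t))$ in Definition~\ref{def:pme}. Fix $s_t$. Then each observation $o_{i,t}=\mathcal{O}(s_t,i)$ is deterministic. By \eqref{eq:factorized_policy}, the agents' actions are conditionally independent, and agent $i$ draws $a$ with probability $\pi^i(a\mid o_{i,t})=x_{(i,a)}(s_t)$. The map $\mathbf{a}_t\mapsto A_t=\{(i,a_{i,t}):i\in\mathcal{N}_t\}$ is a bijection from $\prod_i\mathcal{A}_{i,t}$ onto the sets $A\in\mathcal{I}_t$ with $|A\cap\Omega_{i,t}|=1$ for every $i$. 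Therefore, for any such $A$ with $A\cap\Omega_{i,t}=\{(i,a_i)\}$,
\[
\Pr(A_t=A\mid s_t)=\prod_{i\in\mathcal{N}_t}x_{(i,a_i)}(s_t)=\prod_{i\in\mathcal{N}_t}p_i\bigl(A;\mathbf{x}_t(s_t)\bigr),
\]
where the second equality is the first case of \eqref{eq:agent_prob}.

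Next, consider the remaining terms of the sum in \eqref{eq:pmv_def}, namely those $A\in\mathcal{I}_t$ for which some agent $i$ has $A\cap\Omega_{i,t}=\emptyset$. Since $\pi^i(\cdot\mid o_{i,t})$ is a categorical distribution on $\mathcal{A}_{i,t}$, we have $\sum_a x_{(i,a)}(s_t)=1$, i.e.\ $\mathbf{x}_t(s_t)\in\mathcal{F}_t$. The idle factor in the second case of \eqref{eq:agent_prob} is therefore $0$, so every such term vanishes. These sets also have probability zero under $\pi$. Summing $F_t(A;s_t)$ against the two matching distributions gives $\mathbb{E}_{A_t\sim\pi}[F_t(A_t;s_t)\mid s_t]=\tilde f_t(\mathbf{x}_t(s_t);s_t)$. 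No convergence issues arise, because $\Omega_t$ is finite.

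For the consequence, let $J_t(\pi)\triangleq\mathbb{E}_\pi[F_t(A_t;s_t)]$ be the round-$t$ term of \eqref{eq:objective}. Under $\pi$ and $\{P_t\}$, the marginal law of $s_t$ is $\mu_t^\pi$. Given $s_t$, the action $A_t$ is drawn from $\pi(\cdot\mid\mathbf{o}_t)$ with $\mathbf{o}_t$ a function of $s_t$, so the conditional law of $A_t$ given $s_t$ depends only on $s_t$ and not on the past trajectory. The tower property then yields $J_t(\pi)=\mathbb{E}_{s_t\sim\mu_t^\pi}[\tilde f_t(\mathbf{x}_t(s_t);s_t)]$. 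Nonnegativity of $F_t$ makes the expectation well defined.

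I expect the main obstacle to be a measure-theoretic subtlety rather than the algebra. One must justify that conditioning on $s_t$ alone, rather than on the full history, suffices. This holds because the policy is memoryless in the observation $\mathcal{O}(s_t,i)$: the action at round $t$ is conditionally independent of the history given $s_t$. I would state this explicitly before applying total expectation.
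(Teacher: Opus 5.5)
Your proposal is correct and follows essentially the same route as the paper's proof. Fix $s_t$, show $\Pr(A_t=A\mid s_t)=\prod_{i\in\mathcal{N}_t}p_i(A;\mathbf{x}_t(s_t))$ for every $A\in\mathcal{I}_t$ (one-action-per-agent sets via the first case of \eqref{eq:agent_prob}, sets with an idle agent via the vanishing idle factor), sum against $F_t$, and conclude with total expectation over $\mu_t^\pi$. Your remark that the policy depends on the history only through $\mathcal{O}(s_t,i)$ is what makes the conditional law of $A_t$ given $s_t$ equal to $\mathcal{D}(\mathbf{x}_t(s_t))$, which the paper uses only implicitly; once that is in place, the tower step itself needs no further justification.
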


\cref{lem:equivalence} implies exact distributional equivalence on the categorical-policy face $\mathcal{F}_t$. Since $F_t$ is monotone, any optimizer of $\tilde f_t$ over the full partition matroid polytope $\mathcal{P}(\mathcal{I}_t)$ can be chosen on $\mathcal{F}_t$; hence the continuous optimum over $\mathcal{P}(\mathcal{I}_t)$ matches the optimum achievable by factorized categorical marginals. Thus, the continuous stage problem can be written as $\max_{\mathbf{x}\in\mathcal{P}(\mathcal{I}_t)}
    \tilde f_t(\mathbf{x};s_t)$.
In the next section, we optimize this objective through policy gradients. The gradient identity in Lemma~\ref{lem:gradient} implies that submodular difference rewards provide unbiased gradient information for the PME, which connects the continuous relaxation to decentralized MARL credit assignment.

\section{Policy Gradient as DR-Submodular Optimization}
\label{sec:algorithm}

The PME reduces the stagewise coordination problem to continuous optimization over policy-induced marginals. We now show how this structure leads to a policy-gradient algorithm for decentralized MARL. The key observation is that submodular marginal contributions provide stochastic gradient information for the PME, while masked categorical policies keep all sampled joint actions feasible under the partition matroid. This yields \textbf{SubMAPG}, a submodular multi-agent policy gradient method trained with centralized
utility evaluations and executed using decentralized local policies. The architectural details are deferred to Appendix~\ref{app:architecture}.

\subsection{Masked Categorical Policies}
\label{sec:softmax_scope}

For each active agent $i\in\mathcal{N}_t$, SubMAPG outputs logits over the feasible action set $\mathcal{A}_{i,t}$ and applies a masked softmax:
\begin{equation}
\label{eq:masked_softmax_main}
  \pi_\theta^i(a\mid o_{i,t})
  =
  \frac{
  \exp\!\left([\mathbf{W}_h h_{i,t}]_a + M_a\right)}
  {\sum_{a'\in\mathcal{A}_{i,t}}
  \exp\!\left([\mathbf{W}_h h_{i,t}]_{a'} + M_{a'}\right)},
  \qquad a\in\mathcal{A}_{i,t},
\end{equation}
where $M_a=0$ for feasible actions and $M_a=-\infty$ for infeasible actions. 
The induced marginal is $x_{(i,a)}(\theta)\triangleq \pi_\theta^i(a\mid o_{i,t})$.

By construction, $\sum_{a\in\mathcal{A}_{i,t}}\pi_\theta^i(a\mid o_{i,t})=1$, so $\mathbf{x}(\theta)\in\mathcal{F}_t$ and every sampled joint action is feasible under the partition matroid. Thus the masked categorical policy matches the feasible marginal space on which the PME equals the expected categorical-policy utility; the formal statement is given in Appendix~\ref{app:parameterization}. For tabular softmax policies, Appendix~\ref{app:proof_tabular_equivalence} shows that the parameter update induces a first-order feasible ascent direction on $\mathcal{F}_t$. For neural policies, our approximation and regret guarantees characterize the induced $x$-space projected stochastic-gradient dynamics rather than general parameter-space convergence.

\subsection{Submodular Difference Rewards}
\label{sec:diff_rewards}

A shared team reward $F_t(A_t;s_t)$ gives all agents the same scalar feedback and therefore obscures each agent's contribution to the PME gradient. We use the submodular marginal contribution as the agent's reward:
$ r_{i,t}
  \triangleq
  F_t\!\left((i,a_{i,t})\mid A_t^{-i};s_t\right)
  =
  F_t(A_t;s_t)-F_t(A_t^{-i};s_t)$,
where $A_t^{-i}\triangleq A_t\setminus\{(i,a_{i,t})\}$ is the joint action set with agent $i$ removed.

By Lemma~\ref{lem:gradient}, for any fixed action $a\in\mathcal{A}_{i,t}$, the random variable $F_t((i,a)\mid A_t^{-i};s_t)$ is an unbiased estimator of the PME coordinate gradient
$\partial\tilde f_t(\mathbf{x};s_t)/\partial x_{(i,a)}$ when
$A_t^{-i}\sim\mathcal{D}^{-i}(\mathbf{x})$. 

Thus, submodular difference rewards are not only variance-reduction
heuristics, but provide gradient information for the PME objective.
Proposition~\ref{prop:unbiased} formalizes this observation by showing that
agent-action marginal gains give unbiased estimates of the PME marginal
gradient; see Appendix~\ref{app:unbiased}. This PME marginal-gradient estimator is used in the PME marginal-space analysis in Section~\ref{sec:theory}.
\vspace{-0.5em}
\begin{algorithm}[t]
\small
   \caption{Submodular Multi-Agent Policy Gradient (SubMAPG)}
   \label{alg:gspg}
\begin{algorithmic}[1]
   \STATE {\bfseries Input:} horizon $T$, step size $\eta$, policy-gradient optimizer \texttt{PG-Optimizer}, policy parameters $\theta_0$
   \FOR{$k = 1,\ldots,K$}
       \STATE Collect trajectory
       $\tau=\{(s_t,A_t,F_t(A_t;s_t))\}_{t=1}^T$
       by executing $\pi_{\theta_k}$
       \FOR{$t=1,\ldots,T$}
           \FOR{each agent $i\in\mathcal{N}_t$}
               \STATE Compute
               $r_{i,t}=F_t(A_t;s_t)-F_t(A_t^{-i};s_t)$
           \ENDFOR
       \ENDFOR
       \STATE Compute difference returns $\Psi_{i,t}$ using~\eqref{eq:Psi}
     \STATE Estimate the policy-gradient surrogate
$\widehat{\nabla_\theta J}$ using~\eqref{eq:spg_surrogate}   
       \STATE Update
       $\theta_{k+1}\leftarrow
       \texttt{PG-Optimizer}(\theta_k,\widehat{\nabla_\theta J};\eta)$
   \ENDFOR
   \STATE {\bfseries Output:} trained policy parameters $\theta_K$
\end{algorithmic}
\end{algorithm}
\vspace{-0.5em}

\subsection{Submodular Policy Gradient}
\label{sec:spg}
We next state the stagewise policy-gradient identity that connects the PME marginal-gradient estimator to parameter updates of the factorized categorical policy.

\begin{lemma}[Stagewise Submodular Policy Gradient Estimator]
\label{lem:spg}
Fix a round $t$ and state $s_t$. Let
$x_{(i,a)}(\theta)=\pi_\theta^i(a\mid o_{i,t})$ and define the conditional stage objective $J_t(\theta;s_t) \triangleq
\mathbb{E}_{A_t\sim\pi_\theta} \!\left[F_t(A_t;s_t)\mid s_t\right] = \tilde f_t(\mathbf{x}(\theta);s_t)$.
Then
$\nabla_\theta J_t(\theta;s_t)
=
\mathbb{E}_{A_t\sim\pi_\theta}
\left[
\sum_{i\in\mathcal{N}_t}
\nabla_\theta \log \pi_\theta^i(a_{i,t}\mid o_{i,t})
\,
\bigl(F_t((i,a_{i,t})\mid A_t^{-i};s_t)-b_{i,t}
\bigr)
\right]$,
where $b_{i,t}$ is any term independent of $a_{i,t}$ conditional on $s_t$, $o_{i,t}$, and the other agents' sampled actions.
\end{lemma}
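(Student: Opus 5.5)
The plan is to obtain the identity from the standard score-function (log-derivative) trick applied to the factorized policy. The key algebraic point is that the shared team reward $F_t(A_t;s_t)$ can be replaced agent-by-agent by the difference reward, because the subtracted counterfactual term carries no dependence on agent $i$'s own action.

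\textbf{Step 1 (score-function form of the stage gradient).} Fix $s_t$ and write
\[
J_t(\theta;s_t)=\sum_{\mathbf a_t}\prod_{j\in\mathcal N_t}\pi_\theta^j(a_{j,t}\mid o_{j,t})\,F_t(A_t;s_t).
\]
The sum is finite, since each $\mathcal A_{i,t}$ is finite, so we may differentiate term by term. Using $\nabla_\theta\pi=\pi\nabla_\theta\log\pi$ on every sampled action (masked actions have zero probability and are excluded from the support) together with the product rule over agents, we get
\[
\nabla_\theta J_t=\mathbb E_{A_t\sim\pi_\theta}\Bigl[\sum_{i}\nabla_\theta\log\pi_\theta^i(a_{i,t}\mid o_{i,t})\,F_t(A_t;s_t)\Bigr].
\]
The equality $J_t=\tilde f_t(\mathbf x(\theta);s_t)$ is exactly Lemma~\ref{lem:equivalence}, so no further work is needed there.

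\textbf{Step 2 (subtracting action-independent terms).} For each $i$, let $c_{i,t}$ be any quantity that, conditional on $s_t$, $o_{i,t}$ and $A_t^{-i}$, does not depend on $a_{i,t}$. Condition on $A_t^{-i}$; this is legitimate because the policy is factorized, so $a_{i,t}$ is independent of the other agents' actions given $s_t$. Then
\[
\mathbb E\bigl[\nabla_\theta\log\pi_\theta^i(a_{i,t}\mid o_{i,t})\,c_{i,t}\mid A_t^{-i}\bigr]=c_{i,t}\sum_{a}\nabla_\theta\pi_\theta^i(a\mid o_{i,t})=c_{i,t}\,\nabla_\theta 1=0.
\]
Here we use that the masked softmax sums to one on $\mathcal A_{i,t}$, as stated in Section~\ref{sec:softmax_scope}. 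We apply this with $c_{i,t}=F_t(A_t^{-i};s_t)+b_{i,t}$. The set $A_t^{-i}$ depends only on the other agents' actions, so $F_t(A_t^{-i};s_t)$ is a valid choice. Subtracting these zero-mean terms inside the sum over $i$ turns $F_t(A_t;s_t)$ into $F_t((i,a_{i,t})\mid A_t^{-i};s_t)-b_{i,t}$, which is the claimed formula.

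\textbf{Step 3 (cross-check via Lemma~\ref{lem:gradient}).} Optionally, verify the formula through the chain rule: $\nabla_\theta J_t=\sum_{(i,a)}\partial_{x_{(i,a)}}\tilde f_t\,\nabla_\theta x_{(i,a)}$. By Lemma~\ref{lem:gradient}, $\partial_{x_{(i,a)}}\tilde f_t=\mathbb E[F_t((i,a)\mid A^{-i})]$, and $\nabla_\theta x_{(i,a)}=x_{(i,a)}\nabla_\theta\log\pi^i_\theta(a\mid o_{i,t})$. Summing over $a$ with weights $x_{(i,a)}$ reproduces the expectation over $a_{i,t}\sim\pi_\theta^i$. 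This makes the link to the PME marginal gradients explicit.

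\textbf{Main obstacle.} The delicate step is Step 2's conditioning argument. Independence of $a_{i,t}$ from $A_t^{-i}$ given $s_t$ must be justified from the factorization \eqref{eq:factorized_policy}, since the observations $o_{j,t}$ are deterministic functions of $s_t$. A second subtlety is that $\theta$ may be shared across agents (the GNN case). This causes no difficulty, because the product rule handles each factor's $\theta$-dependence separately. Finally, $-\infty$ masks must be handled by restricting every sum to the feasible support, where the log-probabilities are finite and differentiable.
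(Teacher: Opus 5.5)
Your proposal is correct and follows the paper's proof essentially step for step. Both use the score-function identity with the factorized log-likelihood, decompose $F_t(A_t;s_t)=F_t((i,a_{i,t})\mid A_t^{-i};s_t)+F_t(A_t^{-i};s_t)$, and kill both $F_t(A_t^{-i};s_t)$ and $b_{i,t}$ by conditioning on $A_t^{-i}$ and using $\sum_{a}\nabla_\theta\pi_\theta^i(a\mid o_{i,t})=\nabla_\theta 1=0$; your optional Step 3, a chain rule through Lemma~\ref{lem:gradient}, is not part of the paper's proof of this lemma but is a valid independent check of the $b_{i,t}=0$ case.
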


Lemma~\ref{lem:spg} follows from the score-function identity, policy factorization, and the action-independence of $F_t(A_t^{-i};s_t)$ given $s_t$ and $A_t^{-i}$; see Appendix~\ref{app:unbiased}. It justifies the sampled-action difference reward in Algorithm~\ref{alg:gspg} as a stagewise policy-gradient signal, while Section~\ref{sec:theory} analyzes projected stochastic-gradient dynamics over PME marginals.

For the sequential objective, an action at time $t$ may affect future states and future utilities. Therefore, accumulated counterfactual terms are not generally action-independent baselines for the full return, and the following return should not be interpreted as an unbiased gradient estimator of the full trajectory objective in~\eqref{eq:objective}. 
In the implementation, SubMAPG uses accumulated difference returns as a Monte Carlo surrogate training signal~\citep{sutton2018reinforcement}:
\begin{equation}
\label{eq:Psi}
\Psi_{i,t}=\sum_{k=t}^{T}F_k((i,a_{i,k})\mid A_k^{-i};s_k)-b_{i,t}.
\end{equation}
The corresponding sample-based stagewise surrogate is
\begin{equation}
\label{eq:spg_surrogate}
\widehat{\nabla_\theta J}
=
\sum_{t=1}^{T}
\sum_{i\in\mathcal{N}_t}
\nabla_\theta \log \pi_\theta^i(a_{i,t}\mid o_{i,t})\,
\Psi_{i,t}.
\end{equation}

Algorithm~\ref{alg:gspg} summarizes SubMAPG. Training uses centralized evaluations of $F_t(A_t;s_t)$ and $F_t(A_t^{-i};s_t)$ to compute submodular difference rewards, while execution is decentralized through local masked categorical policies. The optimizer can be instantiated by tabular stochastic gradient ascent or neural actor-critic updates; implementation details are in Appendix~\ref{app:architecture}.

\section{Main Results: Performance Guarantees}
\label{sec:theory}

This section establishes theoretical guarantees for projected stochastic-gradient dynamics on the PME marginal space $\mathcal{F}_t$. By Lemma~\ref{lem:equivalence}, each marginal iterate $\mathbf{x}_t\in\mathcal{F}_t$ corresponds to a factorized categorical policy with identical expected stage utility at the realized state. 
These results are not parameter-space convergence guarantees for general neural policies; rather, they characterize the PME marginal-space dynamics targeted by SubMAPG. Tabular softmax provides a clean parameterized instantiation that induces feasible PME marginal-ascent directions on $\mathcal{F}_t$, while the neural implementation is evaluated empirically as a scalable approximation.

\subsection{Stagewise Approximation Guarantee}
\label{sec:stagewise}

We first analyze a fixed round $t$ with fixed state-dependent utility $F_t(\cdot;s_t)$ and feasible family $\mathcal{I}_t$. Our goal is to bound the gap between the expected utility of a categorical policy induced by the projected marginal iterate and the optimal discrete action set for this fixed stage.

\begin{definition}[Stagewise $(\alpha,\epsilon)$-Approximation]
\label{def:stagewise}
Given $\alpha > 0$ and $\epsilon \geq 0$, a policy $\pi$ achieves a
\emph{stagewise $(\alpha,\epsilon)$-approximation} if, for every round $t$,
$ \mathbb{E}_{A_t \sim \pi(\cdot \mid \mathbf{o}_t)}
  [F_t(A_t;\,s_t)]
  \geq \alpha\, \mathrm{OPT}_t(s_t) - \epsilon$,
where
$\mathrm{OPT}_t(s_t) \triangleq \max_{A \in \mathcal{I}_t} F_t(A;\,s_t)$
is the optimal discrete utility at state $s_t$.
\end{definition}

The proof uses the fact that the categorical-policy face $\mathcal{F}_t$ is a convex face of the partition matroid polytope, that the PME satisfies the restricted DR first-order inequality on this face, and that monotonicity ensures an optimal PME solution exists on $\mathcal{F}_t$; see Lemma~\ref{lem:dr_boundary}.

\begin{theorem}[Stagewise Approximation]
\label{thm:stagewise}
Consider the projected stochastic gradient ascent iterates on $\mathcal{F}_t$: $\mathbf{x}_{k+1}=\Pi_{\mathcal{F}_t}(\mathbf{x}_k+\eta\mathbf{g}_k)$, where $\mathbf{g}_k$ satisfies Lemma~\ref{lem:gradient_properties}. Let $\eta = D/\sqrt{K(G^2+\sigma^2)}$, where $D=\max_{\mathbf{x},\mathbf{y}\in\mathcal{F}_t}
\|\mathbf{x}-\mathbf{y}\|_2$. Under Assumption~\ref{ass:submodular} and Lemma~\ref{lem:dr_boundary},
\begin{equation}
\label{eq:stagewise_bound}
\frac{1}{K}\sum_{k=0}^{K-1}
\mathbb{E}\!\left[\tilde{f}_t(\mathbf{x}_k;\,s_t)\right]
\geq
\frac{1}{2}\mathrm{OPT}_t(s_t)
-
\frac{D\sqrt{G^2+\sigma^2}}{2\sqrt{K}}.
\end{equation}
Consequently, if $k^*$ is sampled uniformly from $\{0,\ldots,K-1\}$ and $\pi_{k^*}$ is the factorized categorical policy
with marginal vector $\mathbf{x}_{k^*}$, then
$ \mathbb{E}\!\left[
  \mathbb{E}_{A_t\sim\pi_{k^*}}\![F_t(A_t;\,s_t)]
  \right]
  \geq
  \frac{1}{2}\mathrm{OPT}_t(s_t)
  -
  \frac{D(G+\sigma)}{2\sqrt{K}}$.
The outer expectation is over the random choice of $k^*$ and the stochastic gradient iterates.
\end{theorem}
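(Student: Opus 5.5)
The plan is to run the standard projected stochastic-gradient analysis for monotone DR-submodular maximization (Hassani et al.\ style), restricted to the face $\mathcal{F}_t$. Fix $s_t$ and drop it from the notation. Let $\mathbf{x}^*\in\arg\max_{\mathbf{x}\in\mathcal{P}(\mathcal{I}_t)}\tilde f_t(\mathbf{x})$. By Lemma~\ref{lem:dr_boundary} (and the monotonicity argument behind Lemma~\ref{lem:equivalence}), we may choose $\mathbf{x}^*\in\mathcal{F}_t$. Moreover, the indicator vector of an optimal set $A^\star\in\mathcal{I}_t$, completed to a full selection by monotonicity, lies in $\mathcal{F}_t$. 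Since $\tilde f_t$ there equals $F_t(A^\star)$, we get $\tilde f_t(\mathbf{x}^*)\ge \mathrm{OPT}_t(s_t)$.

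\textbf{Step 1 (first-order inequality).} We use the restricted DR inequality from Lemma~\ref{lem:dr_boundary}: for all $\mathbf{x},\mathbf{y}\in\mathcal{F}_t$,
\[
\langle \nabla\tilde f_t(\mathbf{x}),\mathbf{y}-\mathbf{x}\rangle\ \ge\ \tilde f_t(\mathbf{y})-2\tilde f_t(\mathbf{x}).
\]
The usual derivation goes through $\mathbf{x}\vee\mathbf{y}$ and $\mathbf{x}\wedge\mathbf{y}$, which may leave the face. This is why the lemma is invoked rather than reproved, and I expect this step to be the main subtlety. If it had to be shown directly, I would use the coordinatewise concavity of $\tilde f_t$ along nonnegative directions plus nonnegativity, and check that $\mathbf{x}\vee\mathbf{y}\in\mathcal{P}$ is not needed by working with the multilinear identity agentwise.

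\textbf{Step 2 (one-step recursion).} By nonexpansiveness of $\Pi_{\mathcal{F}_t}$ (a convex set) and $\mathbf{x}^*\in\mathcal{F}_t$,
\[
\|\mathbf{x}_{k+1}-\mathbf{x}^*\|^2\le\|\mathbf{x}_k-\mathbf{x}^*\|^2+2\eta\langle\mathbf{g}_k,\mathbf{x}_k-\mathbf{x}^*\rangle+\eta^2\|\mathbf{g}_k\|^2.
\]
Taking conditional expectations, Lemma~\ref{lem:gradient_properties} supplies unbiasedness, $\mathbb{E}[\mathbf{g}_k\mid\mathbf{x}_k]=\nabla\tilde f_t(\mathbf{x}_k)$, and the second-moment bound $\mathbb{E}\|\mathbf{g}_k\|^2\le G^2+\sigma^2$. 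Combining these with Step~1 gives
\[
2\,\mathbb{E}\tilde f_t(\mathbf{x}_k)-\tilde f_t(\mathbf{x}^*)\ \ge\ \frac{\mathbb{E}\|\mathbf{x}_{k+1}-\mathbf{x}^*\|^2-\mathbb{E}\|\mathbf{x}_k-\mathbf{x}^*\|^2}{2\eta}-\frac{\eta}{2}(G^2+\sigma^2).
\]

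\textbf{Step 3 (telescope and tune).} Summing over $k=0,\dots,K-1$, bounding $\|\mathbf{x}_0-\mathbf{x}^*\|\le D$, and dividing by $2K$ yields
\[
\frac1K\sum_k\mathbb{E}\tilde f_t(\mathbf{x}_k)\ \ge\ \tfrac12\tilde f_t(\mathbf{x}^*)-\frac{D^2}{4\eta K}-\frac{\eta(G^2+\sigma^2)}{4}.
\]
Substituting $\eta=D/\sqrt{K(G^2+\sigma^2)}$ makes the error equal to $D\sqrt{G^2+\sigma^2}/(2\sqrt K)$, and $\tilde f_t(\mathbf{x}^*)\ge\mathrm{OPT}_t$ gives \eqref{eq:stagewise_bound}.

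\textbf{Step 4 (the corollary).} Sample $k^*$ uniformly. The left side of \eqref{eq:stagewise_bound} equals $\mathbb{E}[\tilde f_t(\mathbf{x}_{k^*})]$, and since $\mathbf{x}_{k^*}\in\mathcal{F}_t$, Lemma~\ref{lem:equivalence} identifies this with the expected utility of $\pi_{k^*}$. Finally, $\sqrt{G^2+\sigma^2}\le G+\sigma$ weakens the constant to the stated form.
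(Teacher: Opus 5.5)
Your proposal is correct and follows essentially the same route as the paper: the restricted first-order gap inequality of Lemma~\ref{lem:dr_boundary}(ii) is combined with the standard projected-gradient recursion, then telescoped, tuned with $\eta = D/\sqrt{K(G^2+\sigma^2)}$, and finished with Lemma~\ref{lem:dr_boundary}(iii) for $\tilde f_t(\mathbf{x}^*;s_t)\ge \mathrm{OPT}_t(s_t)$ and Lemma~\ref{lem:equivalence} for the uniformly sampled iterate. Two small points: you make explicit that $\sqrt{G^2+\sigma^2}\le G+\sigma$ is needed to reach the stated constant $D(G+\sigma)/(2\sqrt{K})$, which the paper leaves implicit since its proof ends with the $\sqrt{G^2+\sigma^2}$ form; and the fact that an optimizer can be taken on $\mathcal{F}_t$ comes from Lemma~\ref{lem:dr_boundary}(iii), not from the argument behind Lemma~\ref{lem:equivalence}.
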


Theorem~\ref{thm:stagewise} characterizes the approximation power of projected stochastic gradient ascent when it is allowed to take multiple steps on the same stage objective $\tilde f_t$.
In the sequential setting, this result serves as a static PME marginal-space benchmark that justifies the constant $\alpha=1/2$ in our dynamic regret analysis (Section~\ref{sec:regret}). 
\vspace{-1em}
\subsection{Dynamic Regret Analysis}
\label{sec:regret}
We now consider the open-system setting where the utility $F_t$, active agent set $\mathcal{N}_t$, ground set $\Omega_t$, and feasible face $\mathcal{F}_t$ may change over time. Following online learning theory~\citep{zinkevich2003online,hazan2016introduction}, we use a constant step size to balance stochastic gradient noise and the ability to track a moving comparator.
We measure performance against a dynamic $\alpha$-approximation benchmark.

\begin{definition}[Dynamic $\alpha$-Regret]
\label{def:regret}
Given $\alpha \in (0,1]$, the expected dynamic $\alpha$-regret of a policy $\pi_\theta$ over horizon $T$ is
$\mathrm{Regret}_T^{\alpha}(\pi_\theta)
   \triangleq
   \sum_{t=1}^{T}
  \left[
  \alpha\,\mathbb{E}_{s_t \sim \mu_t^{\pi_\theta}}
  \!\left[\mathrm{OPT}_t(s_t)\right]
  -
  \mathbb{E}_{\pi_\theta}\!\left[F_t(A_t;\,s_t)\right]
  \right]$,
where $\mathrm{OPT}_t(s_t) \triangleq \max_{A\in\mathcal{I}_t}F_t(A;\,s_t)$ and $\mu_t^{\pi_\theta}$ is the state distribution at round $t$ induced by $\pi_\theta$.
\end{definition}
This regret is policy-consistent: both the benchmark and the achieved utility are evaluated under the same state distribution $\mu_t^{\pi_\theta}$. This avoids comparing against a clairvoyant trajectory distribution that is unavailable in an online Dec-POMDP.

To quantify non-stationarity across time-varying 
agent sets, we first embed all marginal vectors into a common 
Euclidean space. Let $N_{\max} = \max_t |\mathcal{N}_t|$ and
$N^a_{\max} = \max_{t,i}|\mathcal{A}_{i,t}|$. We define an 
isometric zero-padding embedding
$\iota_t : \mathcal{F}_t \rightarrow 
\mathbb{R}^{N_{\max}N^a_{\max}}$
by assigning each agent $i \in \mathcal{N}_t$ a fixed index slot 
in $\{1,\ldots,N_{\max}\}$ and padding coordinates of inactive 
agents with zeros. Since $\iota_t$ preserves $\ell_2$ norms and 
inner products on the support of $\mathcal{F}_t$, all subsequent 
distances are well-defined in the common ambient space.
The \emph{expected path length} of the optimal continuous 
solutions is defined as
\begin{equation}\label{eq:path_length}
  \mathcal{P}_T \;\triangleq\; \mathbb{E}\!\left[\sum_{t=1}^{T-1}
  \left\|\iota_t(\mathbf{x}_t^*) 
  - \iota_{t+1}(\mathbf{x}_{t+1}^*)\right\|_2\right],
\end{equation}
where
$\mathbf{x}_t^* \in
\arg\max_{\mathbf{x}\in\mathcal{F}_t}
\tilde f_t(\mathbf{x};s_t)$
is an optimal continuous solution at the realized state $s_t$, and the expectation is over the trajectory induced by $\pi_\theta$. A small $\mathcal{P}_T$ means that the optimal PME marginals drift slowly over time, even though the open system changes dimension.

For readability, define the polytope diameter $D \triangleq
\max_{t,u\in[T]}
\max_{\mathbf{x}\in\iota_t(\mathcal{F}_t),\,
      \mathbf{y}\in\iota_u(\mathcal{F}_u)}
\|\mathbf{x}-\mathbf{y}\|_2$, largest gradient magnitude
$G \triangleq
\max_t\max_{\mathbf{x}\in\mathcal{F}_t}
\sup_{s_t\in\mathcal{S}}
\|\nabla \tilde f_t(\mathbf{x};s_t)\|_2$,
and gradient estimator variance bound $\sigma^2
\triangleq
\max_{t\in[T]}
\mathbb{E}\!\left[
\|\mathbf{g}_t-\nabla \tilde f_t(\mathbf{x}_t;s_t)\|_2^2
\right]$.
Because each active agent's marginal vector lies in a probability simplex of diameter $\sqrt{2}$, the embedded feasible sets satisfy
$D\leq\sqrt{2N_{\max}}$.

\begin{theorem}[Dynamic Regret Bound]
\label{thm:regret}
Suppose $F_t(\cdot;s_t)$ satisfies Assumption~\ref{ass:submodular}
for all $t\in[T]$. Then $\tilde f_t(\cdot;s_t)$ is nonnegative, monotone, and
DR-submodular on the partition matroid polytope
$\mathcal{P}(\mathcal{I}_t)$. Since
$\mathcal{F}_t\subseteq \mathcal{P}(\mathcal{I}_t)$, the
first-order gap inequality in Lemma~\ref{lem:dr_boundary}(ii)
applies to all $\mathbf{x},\mathbf{y}\in\mathcal{F}_t$.
Let $\{\mathbf{x}_t\}_{t=1}^T$ denote the feasible PME marginal sequence evaluated at the realized states and generated by the two-step projected stochastic-gradient dynamics
$\bar{\mathbf{x}}_{t+1}
=
\Pi_{\mathcal{F}_t}
\!\left(\mathbf{x}_t+\eta\mathbf{g}_t\right)$, and
$\mathbf{x}_{t+1}
=
\Pi_{\mathcal{F}_{t+1}}
\!\left(\bar{\mathbf{x}}_{t+1}\right)$.
Here $\mathbf{g}_t$ is a stochastic PME gradient estimator satisfying Lemma~\ref{lem:gradient_properties}, and $\eta>0$ is constant. Then
\begin{equation}
\label{eq:regret_master}
  \mathbb{E}\!\left[\mathrm{Regret}_T^{1/2}(\pi)\right]
  \leq
  \frac{D^2}{4\eta}
  +
  \frac{D\mathcal{P}_T}{2\eta}
  +
  \frac{\eta T(G^2+\sigma^2)}{4}.
\end{equation}
Moreover, setting $\eta^*
=
\sqrt{
\frac{D(D+2\mathcal{P}_T)}
{T(G^2+\sigma^2)}
}$ gives
\begin{equation}
\label{eq:regret_exact}
  \mathbb{E}\!\left[\mathrm{Regret}_T^{1/2}(\pi)\right]
  \leq
  \frac{1}{2}
  \sqrt{D(D+2\mathcal{P}_T)\,T(G^2+\sigma^2)}.
\end{equation}
\end{theorem}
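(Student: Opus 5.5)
We follow the standard online-gradient-ascent analysis for monotone DR-submodular objectives with a moving comparator, working throughout in the common padded space via the embeddings $\iota_t$. \textbf{Step 1 (per-round reduction).} By Lemma~\ref{lem:equivalence}, for each realized $s_t$ we have $\mathbb{E}[F_t(A_t;s_t)\mid s_t]=\tilde f_t(\mathbf{x}_t;s_t)$. Since $\tilde f_t$ is monotone, $\max_{\mathbf{x}\in\mathcal{F}_t}\tilde f_t=\max_{\mathcal{P}(\mathcal{I}_t)}\tilde f_t\ge \mathrm{OPT}_t(s_t)$, because indicator vectors of feasible sets lie in $\mathcal{P}(\mathcal{I}_t)$ and the optimizer can be pushed to $\mathcal{F}_t$. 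Hence each summand of $\mathrm{Regret}^{1/2}_T$ is at most $\mathbb{E}[\tfrac12\tilde f_t(\mathbf{x}_t^*)-\tilde f_t(\mathbf{x}_t)]$. Lemma~\ref{lem:dr_boundary}(ii) gives $\langle \nabla\tilde f_t(\mathbf{x}_t),\mathbf{x}_t^*-\mathbf{x}_t\rangle\ge \tilde f_t(\mathbf{x}_t^*)-2\tilde f_t(\mathbf{x}_t)$. Dividing by two, the per-round gap is bounded by $\tfrac12\langle\nabla\tilde f_t(\mathbf{x}_t),\mathbf{x}_t^*-\mathbf{x}_t\rangle$.

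\textbf{Step 2 (one-step projection inequality).} We replace the true gradient by $\mathbf{g}_t$. Unbiasedness from Lemma~\ref{lem:gradient_properties} and the tower property over the filtration generated by $(s_t,\mathbf{x}_t)$ justify this, since $\mathbf{x}_t^*$ is measurable with respect to $s_t$. The nonexpansiveness of $\Pi_{\mathcal{F}_t}$ with $\mathbf{x}_t^*\in\mathcal{F}_t$ gives
\[
\|\bar{\mathbf{x}}_{t+1}-\mathbf{x}_t^*\|^2\le\|\mathbf{x}_t-\mathbf{x}_t^*\|^2+2\eta\langle\mathbf{g}_t,\mathbf{x}_t-\mathbf{x}_t^*\rangle+\eta^2\|\mathbf{g}_t\|^2 .
\]
Rearranging, we obtain
\[
\langle\mathbf{g}_t,\mathbf{x}_t^*-\mathbf{x}_t\rangle\le \frac{\|\mathbf{x}_t-\mathbf{x}_t^*\|^2-\|\bar{\mathbf{x}}_{t+1}-\mathbf{x}_t^*\|^2}{2\eta}+\frac{\eta}{2}\|\mathbf{g}_t\|^2,
\]
where $\mathbb{E}\|\mathbf{g}_t\|^2\le G^2+\sigma^2$ by the bias-variance split.

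\textbf{Step 3 (telescoping with a moving comparator; the main obstacle).} Two things must be controlled. First, the second projection onto $\mathcal{F}_{t+1}$ must not increase the distance to $\mathbf{x}_{t+1}^*\in\mathcal{F}_{t+1}$. Second, the comparator shift must be absorbed. I would handle the projection through the padded embedding: $\iota_{t+1}(\mathcal{F}_{t+1})$ is closed and convex in the ambient space, so projecting onto it is nonexpansive toward any of its points. This gives $\|\mathbf{x}_{t+1}-\mathbf{x}_{t+1}^*\|\le\|\bar{\mathbf{x}}_{t+1}-\mathbf{x}_{t+1}^*\|$. Expanding the square, the shift from $\mathbf{x}_t^*$ to $\mathbf{x}_{t+1}^*$ contributes
\[
\|\bar{\mathbf{x}}_{t+1}-\mathbf{x}_{t+1}^*\|^2-\|\bar{\mathbf{x}}_{t+1}-\mathbf{x}_t^*\|^2\le 2D\|\mathbf{x}_t^*-\mathbf{x}_{t+1}^*\|,
\]
using that both distances are at most $D$. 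Telescoping the distance terms then leaves at most $\|\mathbf{x}_1-\mathbf{x}_1^*\|^2\le D^2$ plus $2D\mathcal{P}_T$. The delicate point is that the padded coordinates of agents that join or leave are handled consistently, so that every point of $\bar{\mathbf{x}}_{t+1}$, $\mathbf{x}_t^*$ and $\mathbf{x}_{t+1}^*$ lies within the diameter $D$ of the others. I would verify this by noting that the projection of $\bar{\mathbf{x}}_{t+1}$ stays inside the convex hull of the embedded faces.

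\textbf{Step 4 (assembly and tuning).} Summing over $t$ and multiplying by $\tfrac12$ yields
\[
\mathbb{E}[\mathrm{Regret}^{1/2}_T]\le \frac{D^2+2D\mathcal{P}_T}{4\eta}+\frac{\eta T(G^2+\sigma^2)}{4},
\]
which is \eqref{eq:regret_master}. Minimizing over $\eta$ gives $\eta^*$ as stated. The two terms are then equal, and their sum is $2\cdot\tfrac14\sqrt{D(D+2\mathcal{P}_T)T(G^2+\sigma^2)}$, which gives \eqref{eq:regret_exact}.
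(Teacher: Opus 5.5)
Your proposal is correct and follows the paper's proof essentially step for step. The shared steps are: the per-round reduction via Lemma~\ref{lem:equivalence} and Lemma~\ref{lem:dr_boundary}(ii)--(iii); the first-projection inequality onto $\mathcal{F}_t$ with the current comparator; nonexpansiveness of the second projection onto $\mathcal{F}_{t+1}$; the drift bound $\|\bar{\mathbf{x}}_{t+1}-\mathbf{x}_{t+1}^*\|_2^2-\|\bar{\mathbf{x}}_{t+1}-\mathbf{x}_t^*\|_2^2\le 2D\|\mathbf{x}_t^*-\mathbf{x}_{t+1}^*\|_2$; and the $C_1/\eta+C_2\eta$ step-size tuning. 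Your convex-hull remark in Step 3 is unnecessary: $\bar{\mathbf{x}}_{t+1}\in\iota_t(\mathcal{F}_t)$, and $D$ is already a maximum over pairs from $\iota_t(\mathcal{F}_t)$ and $\iota_u(\mathcal{F}_u)$ for all $t,u$, so both distances are at most $D$ directly.
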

The first two terms in~\eqref{eq:regret_master} are the initialization and tracking costs, while the last term is the cumulative stochastic-gradient cost. Combining Lemma~\ref{lem:gradient_properties} with $D\le\sqrt{2N_{\max}}$ gives the explicit dimension-dependent scaling in Appendix~\ref{app:proof_bounds}.

\begin{corollary}[Sublinear Regret] \label{cor:sublinear}
If $D$, $G$, and $\sigma$ are uniformly bounded and the path length satisfies $\mathcal{P}_T=o(T)$, then
 $\limsup_{T\to\infty}
  \frac{1}{T}
  \mathbb{E}\!\left[\mathrm{Regret}_T^{1/2}(\pi)\right]
  \le 0$.
Equivalently, the policy is asymptotically $1/2$-optimal on average: $\liminf_{T\to\infty}\frac{1}{T}\sum_{t=1}^T
  \mathbb{E}_{\pi}\!\left[F_t(A_t;\,s_t)\right]
  \;\geq\; \frac{1}{2}\cdot
  \liminf_{T\to\infty}\frac{1}{T}\sum_{t=1}^T
\mathbb{E}_{s_t\sim\mu_t^{\pi}}\!\left[\mathrm{OPT}_t(s_t)\right]$.
\end{corollary}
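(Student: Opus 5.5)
The plan is to derive Corollary~\ref{cor:sublinear} directly from the closed-form bound \eqref{eq:regret_exact} of Theorem~\ref{thm:regret}. As in the theorem, I will use the tuned step size $\eta^*$, which is permissible whenever $\mathcal{P}_T$, or any known upper bound on it, is available. If we prefer a step size that does not depend on $\mathcal{P}_T$, we can substitute a bound $\bar{\mathcal{P}}_T\ge\mathcal{P}_T$ with $\bar{\mathcal{P}}_T=o(T)$ into \eqref{eq:regret_master}. The argument below goes through unchanged in that case.

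Let $\bar D,\bar G,\bar\sigma$ be the uniform bounds on $D,G,\sigma$. Then \eqref{eq:regret_exact} gives
\begin{equation*}
\frac{1}{T}\mathbb{E}\!\left[\mathrm{Regret}_T^{1/2}(\pi)\right]
\le \frac{1}{2}\sqrt{\bar D\,(\bar G^2+\bar\sigma^2)}\,
\sqrt{\frac{\bar D+2\mathcal{P}_T}{T}}.
\end{equation*}
Since $\mathcal{P}_T=o(T)$, the ratio $(\bar D+2\mathcal{P}_T)/T$ tends to $0$, so the right-hand side tends to $0$. This yields the $\limsup\le 0$ claim. The left side may be negative, which is why the conclusion is stated as $\le 0$ rather than as a limit equal to zero.

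For the equivalent form, I expand Definition~\ref{def:regret}, noting that the outer expectation is already contained in it:
\begin{equation*}
\frac{1}{T}\sum_{t=1}^T\mathbb{E}_\pi[F_t(A_t;s_t)]
\ge \frac{1}{2}\cdot\frac{1}{T}\sum_{t=1}^T\mathbb{E}_{s_t\sim\mu_t^\pi}[\mathrm{OPT}_t(s_t)] - \epsilon_T,
\end{equation*}
where $\epsilon_T \triangleq \max\{0,\tfrac{1}{T}\mathbb{E}[\mathrm{Regret}_T^{1/2}]\}$, so $\epsilon_T\to 0$. Taking $\liminf$ on both sides and using $\liminf(a_T-\epsilon_T)=\liminf a_T$ when $\epsilon_T\to0$ gives the stated inequality. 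The quantity $\mathrm{OPT}_t$ is bounded, since $F_t\ge 0$ and is bounded under the standing assumptions, so both $\liminf$s are finite. Conversely, the $\liminf$ statement together with boundedness recovers the $\limsup$ statement on the averaged difference, which justifies calling the two forms equivalent.

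There is no real obstacle here. The only point needing care is that $\eta^*$ depends on $\mathcal{P}_T$, which I handle with the upper-bound substitution above. The constant-$\eta$ bound \eqref{eq:regret_master} with $\eta\propto\sqrt{(1+\bar{\mathcal{P}}_T)/T}$ gives the same $O(\sqrt{(1+\bar{\mathcal{P}}_T)T})=o(T)$ rate.
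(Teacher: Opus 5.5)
Your proof is correct and follows the paper's route. You substitute the uniform bounds into \eqref{eq:regret_exact} and divide by $T$ to get a bound of order $\sqrt{(1+\mathcal{P}_T)/T}\to 0$. You then rearrange Definition~\ref{def:regret} and take $\liminf$. Your $\epsilon_T=\max\{0,\cdot\}$ device does the same job as the paper's $\liminf_T(a_T-b_T)\ge\liminf_T a_T-\limsup_T b_T$.

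The one thing to drop is your closing ``conversely'' remark, because that implication is false. Write $a_T=\frac{1}{T}\sum_{t\le T}\mathbb{E}[\mathrm{OPT}_t(s_t)]$ and $c_T=\frac{1}{T}\sum_{t\le T}\mathbb{E}_\pi[F_t(A_t;s_t)]$. The $\liminf$ inequality then only gives $\limsup_T(\tfrac12 a_T-c_T)\le\tfrac12(\limsup_T a_T-\liminf_T a_T)$, which is strictly positive whenever the Ces\`aro averages $a_T$ oscillate. Boundedness does not rule this out. For example, let $\mathbb{E}[\mathrm{OPT}_t]$ alternate between $1$ and $3$ on blocks of doubling length, so $\liminf_T a_T=5/3$ and $\limsup_T a_T=7/3$, and take $\mathbb{E}[F_t]\equiv 5/6$. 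Then the $\liminf$ inequality holds with equality, but $\limsup_T \frac{1}{T}\mathbb{E}[\mathrm{Regret}_T^{1/2}]=1/3>0$. The paper likewise proves only the forward implication, so read ``equivalently'' in the statement as ``in particular.''
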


Together, Theorems~\ref{thm:stagewise}--\ref{thm:regret} show that PME marginal-space projected stochastic ascent admits static approximation and dynamic tracking guarantees. These guarantees characterize marginal-space dynamics; neural SubMAPG approximates them through masked-softmax policies and GNN encoders.
\vspace{-1em}
\section{Numerical Experiments}                  
\label{sec:experiments}

\begin{figure*}[t]
    \centering
    \vspace{-0.5em}

    \begin{subfigure}[t]{0.24\linewidth}
        \centering
        \includegraphics[width=\linewidth]{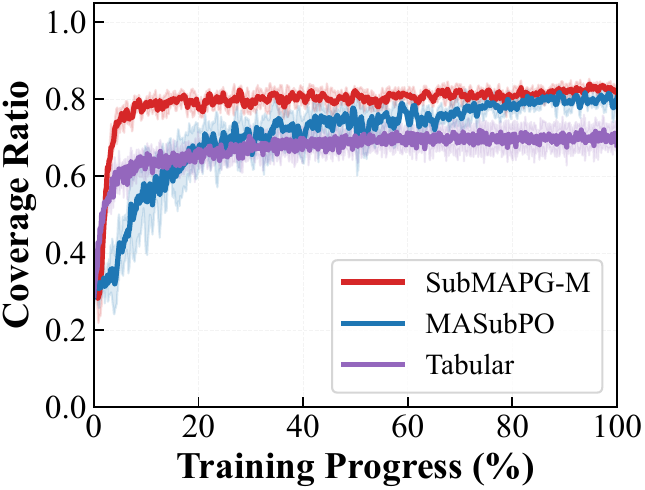}
         \caption{Coverage train (GP)}
        \label{fig:main_cov_train_gp}
    \end{subfigure}
    \hfill
    \begin{subfigure}[t]{0.24\linewidth}
        \centering
        \includegraphics[width=\linewidth]{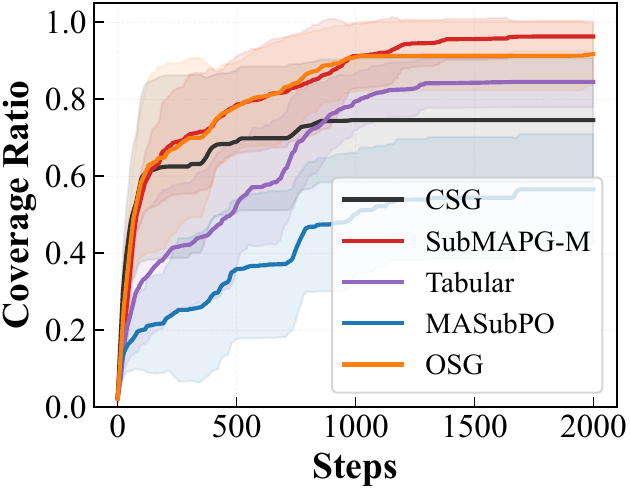}
        \caption{Open cov. ratio (GP)}
        \label{fig:main_cov_ratio_gp}
    \end{subfigure}
    \hfill
    \begin{subfigure}[t]{0.24\linewidth}
        \centering
        \includegraphics[width=\linewidth]{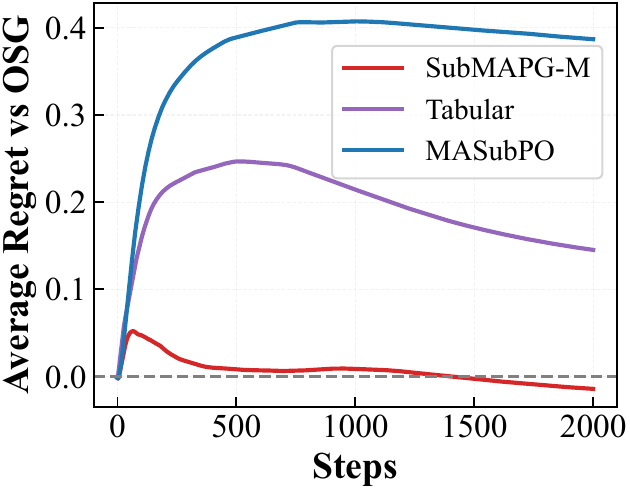}
         \caption{Open cov. gap (GP)}
        \label{fig:main_cov_regret_gp}
    \end{subfigure}
    \hfill
    \begin{subfigure}[t]{0.24\linewidth}
        \centering
        \includegraphics[width=\linewidth]{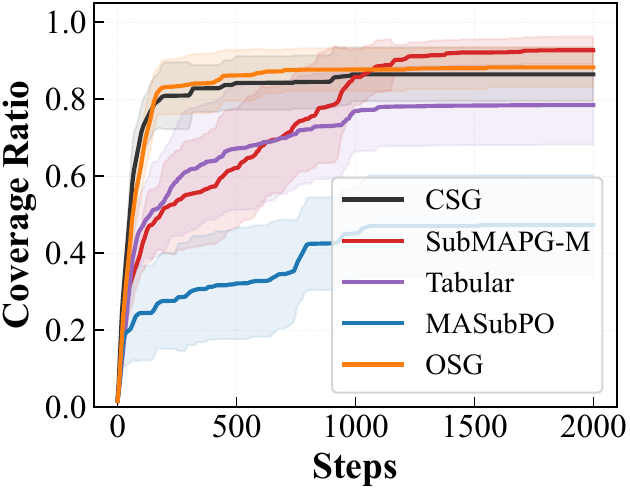}
        \caption{Open cov. (Bimodal)}
        \label{fig:main_cov_ratio_bimodal}
    \end{subfigure}

    \vspace{0.15em}

    \begin{subfigure}[t]{0.24\linewidth}
        \centering
        \includegraphics[width=\linewidth]{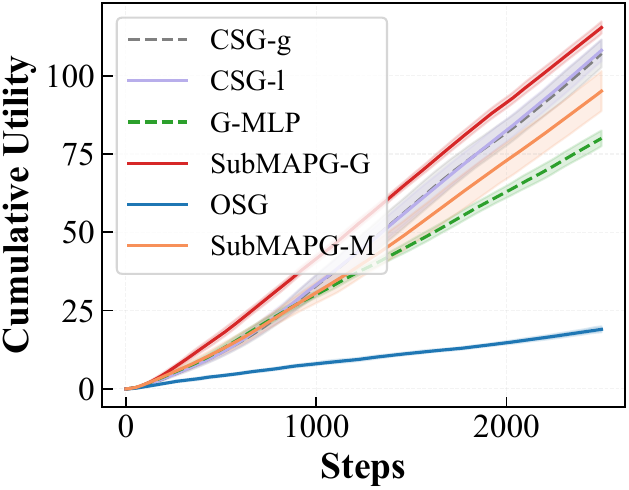}
        \caption{Tracking utility}
        \label{fig:main_track_utility}
    \end{subfigure}
    \hfill
    \begin{subfigure}[t]{0.24\linewidth}
        \centering
        \includegraphics[width=\linewidth]{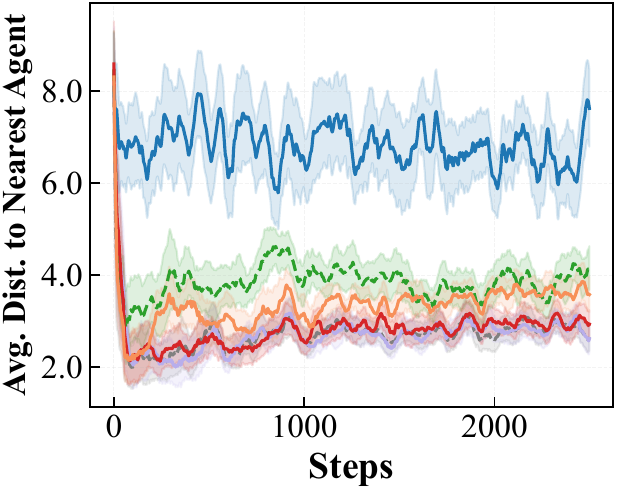}
        \caption{Tracking distance}
        \label{fig:main_track_distance}
    \end{subfigure}
    \hfill
    \begin{subfigure}[t]{0.24\linewidth}
        \centering
        \includegraphics[width=\linewidth]{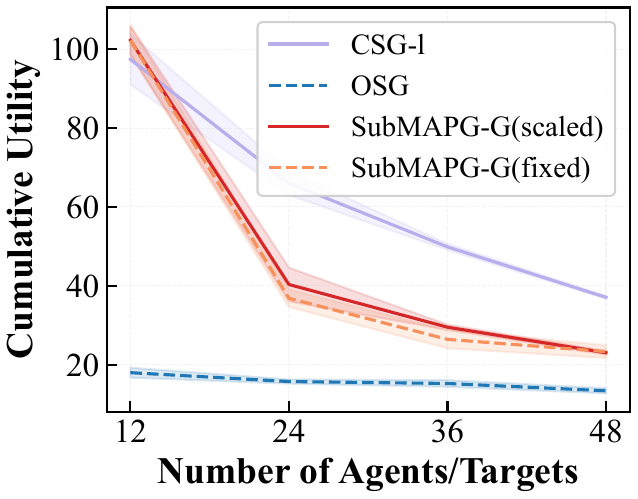}
        \caption{Zero-shot scaling}
        \label{fig:main_track_scaling}
    \end{subfigure}
    \hfill
    \begin{subfigure}[t]{0.24\linewidth}
        \centering
        \includegraphics[width=\linewidth]{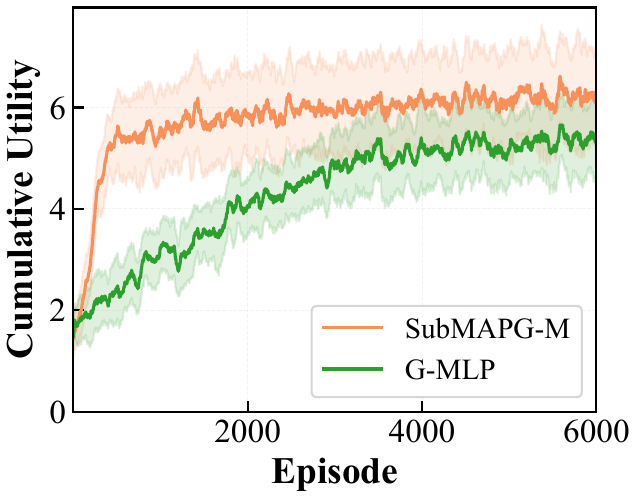}
        \caption{Reward ablation}
        \label{fig:main_reward_ablation}
    \end{subfigure}
  \caption{Main empirical results. Top row: information coverage on the Gaussian Process and Bimodal density field. Bottom row: dynamic target tracking under open-system evaluation, zero-shot scaling, and reward ablation.}
    \label{fig:main_results}
    \vspace{-1.0em}
\end{figure*}

We evaluate SubMAPG on information coverage~\citep{prajapat2024submodular} and dynamic multi-target tracking~\citep{xu2023online}. SubMAPG-M and SubMAPG-G denote MLP- and GNN-encoder instantiations, respectively; both use masked categorical policies and submodular difference rewards. Full protocols and additional results are in Appendix~\ref{app:experiments}.

\vspace{-1.2em}
\paragraph{Multi-Agent Information Coverage.}
We evaluate SubMAPG-M on a $30\times30$ grid coverage task with up to five agents. Each agent chooses from five motion actions and covers cells within radius $r_{\mathrm{cov}}=1$; the team utility is the total covered information value
$F(A_t)=\sum_{v\in\cup_{i\in\mathcal{N}_t}D_{i,t}}\rho(v)$.
Policies are trained in closed environments and evaluated in open environments over $T=2000$ steps with agent arrivals and departures. We compare with centralized sequential greedy with global sensing (CSG), a tabular softmax \citep{zhang2022on}, MASubPO~\citep{prajapat2024submodular},  and online submodular greedy (OSG)~\citep{xu2023online}, and  
Figures~\ref{fig:main_results}(a)--(d) show that SubMAPG-M converges faster than MASubPO, maintains high coverage under openness, and achieves a smaller empirical utility gap relative to OSG on the Gaussian Process and Bimodal Gaussian density fields.

\vspace{-1.2em}
\paragraph{Dynamic Multi-Target Tracking.}
We evaluate the GNN-based SubMAPG-G in a $100\mathrm{m}\times100\mathrm{m}$ tracking environment with mobile agents and targets. Agents use unicycle dynamics, select one of 12 steering actions, observe targets within $r_{\mathrm{sen}}=10\mathrm{m}$, and communicate within $r_{\mathrm{com}}=25\mathrm{m}$. In open evaluation, agents and targets arrive and depart over $T=2500$ steps. We compare with CSG-g, centralized greedy with local sensing (CSG-l), a shared-reward ablation (G-MLP), OSG, and SubMAPG-M. Figures~\ref{fig:main_results}(e)--(h) show that SubMAPG-G achieves cumulative utility comparable to centralized greedy, keeps target-agent distances low using only local observations, and transfers zero-shot from systems with up to 12 agents and targets to systems with up to 48; the reward ablation confirms that submodular difference rewards outperform shared global rewards under the same MLP encoder, isolating the benefit of marginal-contribution credit assignment.

\section{Conclusion}
\label{sec:conclusion}

In this paper, we introduced SubMAPG, a distributed policy-gradient framework for submodular multi-agent task allocation. The PME exactly represents the expected stage utility of partition-feasible categorical policies, while submodular difference rewards provide unbiased stagewise PME marginal-gradient information. For projected stochastic-gradient updates in the PME marginal space, we established stagewise approximation and dynamic regret guarantees. The tabular-softmax results empirically corroborate the PME marginal-space analysis, while the MLP and GNN results demonstrate scalable performance on coverage, tracking, open-system evaluation, and variable-size generalization. These guarantees are stated in the PME marginal space rather than for general neural parameter updates; extending the analysis to neural parameterizations, richer communication constraints, and broader utility classes is left for future work.

\bibliography{refs}
\bibliographystyle{plainnat}

\newpage
\appendix
\numberwithin{equation}{section}

\section*{Appendix}
\addcontentsline{toc}{section}{Appendix}

\startcontents[appendix]
\printcontents[appendix]{l}{0}{\section*{Appendix Contents}}

\newpage

\section{List of Symbols}
\label{app:symbols}

\renewcommand{\arraystretch}{1.12}
\setlength{\tabcolsep}{6pt}

\begin{longtable}{@{}>{$}l<{$} @{\quad} c @{\quad} p{0.78\linewidth}@{}}
\toprule
\endfirsthead
\toprule
\endhead
\bottomrule
\endfoot

\multicolumn{3}{c}{\textbf{General Indices and Sets}}\\
\midrule
t                      & $\triangleq$ & Time/round index, $t\in[T]\triangleq\{1,\dots,T\}$.\\
T                      & $\triangleq$ & Horizon length / total number of rounds.\\
K                      & $\triangleq$ & Number of policy update iterations.\\
i                      & $\triangleq$ & Agent index, $i\in\mathcal{N}_t$.\\
j                      & $\triangleq$ & Task/target index, $j\in\mathcal{M}_t$ when tasks or targets are considered.\\
\mathcal{N}_t          & $\triangleq$ & Active agent set at round $t$.\\
\mathcal{M}_t          & $\triangleq$ & Active task/target set at round $t$ in the experiments.\\
N_{\max}               & $\triangleq$ & Maximum number of active agents, $N_{\max}=\max_t|\mathcal{N}_t|$.\\
M_{\max}               & $\triangleq$ & Maximum number of active tasks/targets, $M_{\max}=\max_t|\mathcal{M}_t|$.\\
N^a_{\max}             & $\triangleq$ & Maximum action-set size, $N^a_{\max}=\max_{t,i}|\mathcal{A}_{i,t}|$.\\
N_0                    & $\triangleq$ & Number of active agents at the initial round in tracking experiments.\\
M_0                    & $\triangleq$ & Number of active targets at the initial round in tracking experiments.\\
\addlinespace

\multicolumn{3}{c}{\textbf{MAS State, Observation, and Action Model}}\\
\midrule
s_t                    & $\triangleq$ & Global state at round $t$, including active agents, tasks, and environment variables.\\
\mathcal{S}            & $\triangleq$ & Global state space.\\
P_t(\cdot\mid s_t,\mathbf{a}_t) & $\triangleq$ & Time-varying controlled transition kernel.\\
\mathcal{A}_{i,t}      & $\triangleq$ & Discrete feasible action set of agent $i$ at round $t$.\\
a_{i,t}                & $\triangleq$ & Action sampled or executed by agent $i$ at round $t$.\\
\mathbf{a}_t           & $\triangleq$ & Joint action tuple, $\mathbf{a}_t=(a_{i,t})_{i\in\mathcal{N}_t}$.\\
o_{i,t}                & $\triangleq$ & Local observation of agent $i$ at round $t$.\\
\mathbf{o}_t           & $\triangleq$ & Joint observation tuple, $\mathbf{o}_t=(o_{i,t})_{i\in\mathcal{N}_t}$.\\
\mathcal{O}            & $\triangleq$ & Observation map or observation space, depending on context.\\
\addlinespace

\multicolumn{3}{c}{\textbf{Ground Set and Partition Matroid}}\\
\midrule
\Omega_t          & $\triangleq$ & Ground set of agent-action pairs at round $t$: $\Omega_t=\{(i,a): i\in\mathcal{N}_t,\ a\in\mathcal{A}_{i,t}\}$.\\
\Omega_{i,t}      & $\triangleq$ & Agent-$i$ subset in the partition: $\Omega_{i,t}=\{(i,a): a\in\mathcal{A}_{i,t}\}$.\\
A_t               & $\triangleq$ & Realized joint action set, $A_t=\{(i,a_{i,t}):i\in\mathcal{N}_t\}\in\mathcal{I}_t$.\\
A_t^{-i}          & $\triangleq$ & Joint action set excluding agent $i$: $A_t^{-i}=A_t\setminus\{(i,a_{i,t})\}$.\\
A^{-i}            & $\triangleq$ & Generic partial joint action of all agents except agent $i$.\\
\mathcal{I}_t     & $\triangleq$ & Feasible family of the partition matroid: $\mathcal{I}_t=\{A\subseteq\Omega_t: |A\cap\Omega_{i,t}|\le1,\ \forall i\in\mathcal{N}_t\}$.\\
\mathcal{I}_t^{-i} & $\triangleq$ & Feasible family for all agents except $i$, defined over $\mathcal{N}_t\setminus\{i\}$.\\
\mathcal{P}(\mathcal{I}_t) & $\triangleq$ & Partition matroid polytope: $\{\mathbf{x}\in[0,1]^{|\Omega_t|}:\sum_{a\in\mathcal{A}_{i,t}}x_{(i,a)}\le1,\ \forall i\in\mathcal{N}_t\}$.\\
\mathcal{F}_t & $\triangleq$ & Categorical-policy face of the partition matroid polytope: $\{\mathbf{x}\in\mathcal{P}(\mathcal{I}_t):\sum_{a\in\mathcal{A}_{i,t}}x_{(i,a)}=1,\ \forall i\in\mathcal{N}_t\}$.\\
x_{(i,a)}         & $\triangleq$ & Marginal probability that agent $i$ selects action $a$.\\
\mathbf{x}        & $\triangleq$ & Vector of action marginals over $\Omega_t$.\\
\mathbf{x}_t      & $\triangleq$ & Marginal vector at round $t$ in the PME marginal-space dynamics.\\
\mathbf{x}(\theta) & $\triangleq$ & Marginal vector induced by $\pi_\theta$, with $x_{(i,a)}(\theta)=\pi_\theta^i(a\mid o_{i,t})$.\\
\mathbf{x}_t^*    & $\triangleq$ & Optimal continuous solution at round $t$: $\mathbf{x}_t^*\in\arg\max_{\mathbf{x}\in\mathcal{F}_t}\tilde f_t(\mathbf{x};s_t)$.\\
\addlinespace

\multicolumn{3}{c}{\textbf{Submodular Utilities}}\\
\midrule
F_t(\,\cdot\,;\,s_t) & $\triangleq$ & Stage utility set function, $F_t(\,\cdot\,;\,s_t):2^{\Omega_t}\to\mathbb{R}_{\ge0}$, assumed normalized, monotone, and submodular.\\
F_t(e\mid A;\,s_t) & $\triangleq$ & Marginal gain of element $e$: $F_t(e\mid A;\,s_t)=F_t(A\cup\{e\};s_t)-F_t(A;s_t)$.\\
F_t((i,a)\mid A^{-i};s_t) & $\triangleq$ & Marginal contribution of agent-action pair $(i,a)$ given the other agents' action set $A^{-i}$.\\
B                      & $\triangleq$ & Uniform bound on marginal gains: $0\le F_t(e\mid A;s_t)\le B$.\\
F_t^{\mathrm{cov}}     & $\triangleq$ & Coverage utility in the information coverage experiment.\\
F_t^{\mathrm{trk}}     & $\triangleq$ & Tracking utility in the multi-target tracking experiment.\\
\addlinespace

\multicolumn{3}{c}{\textbf{Continuous Relaxations}}\\
\midrule
f_t^{\mathrm{mle}}(\mathbf{x}) & $\triangleq$ & Standard Bernoulli-based Multilinear Extension used in classical submodular maximization.\\
\tilde f_t(\mathbf{x};s_t) & $\triangleq$ & Partition Multilinear Extension (PME): $\tilde f_t(\mathbf{x};s_t)=\mathbb{E}_{A\sim\mathcal{D}(\mathbf{x})}[F_t(A;s_t)]$.\\
\mathcal{D}(\mathbf{x})       & $\triangleq$ & Product distribution of independent per-agent categorical factors induced by $\mathbf{x}$.\\
\mathcal{D}^{-i}(\mathbf{x})  & $\triangleq$ & Product distribution over agents $\mathcal{N}_t\setminus\{i\}$ induced by $\mathbf{x}$.\\
p_i(A;\mathbf{x}) & $\triangleq$ & Agent-$i$ factor in the PME: $x_{(i,a)}$ if $A\cap\Omega_{i,t}=\{(i,a)\}$, and $1-\sum_{a'}x_{(i,a')}$ if $A\cap\Omega_{i,t}=\emptyset$.\\
\boldsymbol{\chi}_e & $\triangleq$ & Standard basis vector associated with element $e$ in the definition of DR-submodularity.\\
\mathbf{e}_{(i,a)} & $\triangleq$ & Standard basis vector associated with coordinate $(i,a)$.\\
\addlinespace

\multicolumn{3}{c}{\textbf{Policies and Policy Gradient}}\\
\midrule
\pi                    & $\triangleq$ & Joint factorized categorical policy, $\pi(\mathbf{a}_t\mid\mathbf{o}_t)=\prod_{i\in\mathcal{N}_t}\pi^i(a_{i,t}\mid o_{i,t})$.\\
\pi^i                  & $\triangleq$ & Local categorical policy of agent $i$ over $\mathcal{A}_{i,t}$.\\
\pi_\theta             & $\triangleq$ & Parameterized factorized policy with parameters $\theta$.\\
\pi_\theta^i           & $\triangleq$ & Parameterized local policy of agent $i$.\\
\pi^{-i}               & $\triangleq$ & Joint policy of all agents except $i$: $\pi^{-i}=\prod_{j\in\mathcal{N}_t\setminus\{i\}}\pi^j$.\\
\theta                 & $\triangleq$ & Policy parameters.\\
\phi                   & $\triangleq$ & Value-function parameters in the actor-critic implementation.\\
J(\pi)                 & $\triangleq$ & Cumulative objective, $J(\pi)=\mathbb{E}_\pi[\sum_{t=1}^T F_t(A_t;s_t)]$.\\
J(\theta)              & $\triangleq$ & Cumulative objective under $\pi_\theta$, $J(\theta)=\mathbb{E}_{\pi_\theta}[\sum_{t=1}^T F_t(A_t;s_t)]$.\\
J_t(\pi)               & $\triangleq$ & Expected stage utility, $J_t(\pi)=\mathbb{E}_{\pi}[F_t(A_t;s_t)]$.\\
J_t(\theta)            & $\triangleq$ & Expected stage utility under $\pi_\theta$, $J_t(\theta)=\mathbb{E}_{A_t\sim\pi_\theta}[F_t(A_t;s_t)]$.\\
\mu_t^\pi              & $\triangleq$ & Marginal state distribution at round $t$ induced by policy $\pi$.\\
\mu_t^{\pi_\theta}     & $\triangleq$ & Marginal state distribution at round $t$ induced by policy $\pi_\theta$.\\
r_{i,t}                & $\triangleq$ & Agent-wise submodular difference reward, $r_{i,t}=F_t(A_t;s_t)-F_t(A_t^{-i};s_t)$.\\
r_t                    & $\triangleq$ & Shared global or temporal team reward used by baselines or ablations.\\
\Psi_{i,t}             & $\triangleq$ & Difference return, $\Psi_{i,t}=\sum_{k=t}^{T}F_k((i,a_{i,k})\mid A_k^{-i};s_k)-b_{i,t}$.\\
b_{i,t}                & $\triangleq$ & Agent- and time-dependent baseline that is independent of $a_{i,t}$ under the conditioning used in Lemma~\ref{lem:spg}.\\
\mathbf{g}_t           & $\triangleq$ & Stochastic PME gradient estimator based on submodular marginal contributions at round $t$.\\
\mathbf{g}_k           & $\triangleq$ & Stochastic gradient estimator used at policy update iteration $k$.\\
\widehat{\nabla_\theta J} & $\triangleq$ & Monte Carlo estimate of the policy-gradient surrogate used in Algorithm~\ref{alg:gspg}.\\
\eta                   & $\triangleq$ & Step size.\\
\eta^*                 & $\triangleq$ & Optimized constant step size in the dynamic regret bound.\\
\addlinespace

\multicolumn{3}{c}{\textbf{Approximation and Dynamic Regret}}\\
\midrule
\mathrm{OPT}_t(s_t) & $\triangleq$ & Optimal discrete utility at state $s_t$: $\mathrm{OPT}_t(s_t)=\max_{A\in\mathcal{I}_t}F_t(A;s_t)$.\\
\alpha              & $\triangleq$ & Approximation factor, e.g., $\alpha=1/2$ in the main guarantees.\\
\epsilon            & $\triangleq$ & Additive approximation error in the definition of stagewise $(\alpha,\epsilon)$-approximation.\\
\mathrm{Regret}_T^\alpha(\pi_\theta) & $\triangleq$ & Dynamic $\alpha$-regret: $\sum_{t=1}^T[\alpha\,\mathbb{E}_{s_t\sim\mu_t^{\pi_\theta}}[\mathrm{OPT}_t(s_t)]-\mathbb{E}_{\pi_\theta}[F_t(A_t;s_t)]]$.\\
\mathcal{P}_T       & $\triangleq$ & Expected path length of optimal continuous solutions: $\mathcal{P}_T=\mathbb{E}[\sum_{t=1}^{T-1}\|\iota_t(\mathbf{x}_t^*)-\iota_{t+1}(\mathbf{x}_{t+1}^*)\|_2]$.\\
\iota_t             & $\triangleq$ & Isometric zero-padding embedding, $\iota_t:\mathcal{F}_t\rightarrow\mathbb{R}^{N_{\max}N^a_{\max}}$.\\
D                   & $\triangleq$ & Uniform diameter of embedded feasible faces, $D=\max_{t,u\in[T]}\max_{\mathbf{x}\in\iota_t(\mathcal{F}_t),\,\mathbf{y}\in\iota_u(\mathcal{F}_u)}\|\mathbf{x}-\mathbf{y}\|_2$.\\
G                   & $\triangleq$ & Uniform gradient norm bound, $G=\max_t\max_{\mathbf{x}\in\mathcal{F}_t}\sup_{s_t\in\mathcal{S}}\|\nabla\tilde f_t(\mathbf{x};s_t)\|_2$.\\
\sigma^2            & $\triangleq$ & Gradient estimator variance bound, $\sigma^2=\max_{t\in[T]}\mathbb{E}[\|\mathbf{g}_t-\nabla\tilde f_t(\mathbf{x}_t;s_t)\|_2^2]$.\\
\Pi_{\mathcal{F}_t} & $\triangleq$ & Euclidean projection onto $\mathcal{F}_t$.\\
T_{\mathcal{F}_t}(\mathbf{x}) & $\triangleq$ & Tangent space of $\mathcal{F}_t$ at $\mathbf{x}$.\\
\mathrm{proj}_{T_{\mathcal{F}_t}(\mathbf{x})} & $\triangleq$ & Projection onto the tangent space $T_{\mathcal{F}_t}(\mathbf{x})$.\\
\addlinespace

\multicolumn{3}{c}{\textbf{Policy Architecture Symbols}}\\
\midrule
\Phi_\theta            & $\triangleq$ & GNN encoder mapping graph-structured observations to node-level latent representations.\\
\mathcal{G}_{i,t}      & $\triangleq$ & Local observation graph used by agent $i$ at round $t$.\\
\mathcal{G}_t          & $\triangleq$ & Global state graph used by the critic in the GNN implementation.\\
\mathcal{E}_t          & $\triangleq$ & Edge set containing communication and sensing links.\\
h_{i,t}                & $\triangleq$ & Latent embedding of agent $i$ at round $t$.\\
\mathbf{W}_h           & $\triangleq$ & Learnable projection from latent embeddings to action logits.\\
M_a                    & $\triangleq$ & Mask value in the masked softmax: $M_a=0$ for feasible actions and $M_a=-\infty$ for infeasible actions.\\
V_\phi                 & $\triangleq$ & Centralized value function used by the critic.\\
k                      & $\triangleq$ & Number of nearest neighboring agents or targets used by the MLP encoder when fixed-dimensional inputs are constructed.\\
\addlinespace

\multicolumn{3}{c}{\textbf{Experiment Symbols}}\\
\midrule
D_{i,t}                & $\triangleq$ & Set of grid cells covered by agent $i$ at round $t$ in the coverage experiment.\\
\rho(v)                & $\triangleq$ & Information density at grid cell $v$.\\
v                      & $\triangleq$ & Grid cell index in the coverage utility.\\
r_{\mathrm{cov}}       & $\triangleq$ & Coverage radius in the information coverage experiment.\\
r_{\mathrm{com}}       & $\triangleq$ & Communication radius.\\
r_{\mathrm{sen}}       & $\triangleq$ & Sensing radius.\\
\mathbf{p}_{i,t}       & $\triangleq$ & Position of agent $i$ at round $t$.\\
\mathbf{q}_{j,t}       & $\triangleq$ & Position of target $j$ at round $t$.\\
p^x_{i,t},p^y_{i,t}    & $\triangleq$ & Cartesian coordinates of agent $i$ at round $t$.\\
v_a                    & $\triangleq$ & Agent speed.\\
v_m                    & $\triangleq$ & Target speed.\\
\Delta t               & $\triangleq$ & Time-step interval in the tracking dynamics.\\
\psi_{i,t}             & $\triangleq$ & Heading angle of agent $i$ at round $t$.\\
\omega_{i,t}           & $\triangleq$ & Angular velocity or steering action of agent $i$ at round $t$.\\
\end{longtable}

\section{Preliminaries}
\label{app:assumptions}

\begin{definition}[Submodular Function \citep{nemhauser1978analysis}]
	\label{def:app_submodular}
	Let $\mathcal{E}$ be a finite ground set.
	A set function $F : 2^\mathcal{E} \to \mathbb{R}$ is
	\emph{submodular} if for every $A \subseteq B \subseteq \mathcal{E}$
	and every $e \in \mathcal{E} \setminus B$,
	\begin{equation}
		\label{eq:app_submodular_def}
		F(A \cup \{e\}) - F(A)
		\ge
		F(B \cup \{e\}) - F(B).
	\end{equation}
	Equivalently, $F$ is submodular if and only if for every
	$A, B \subseteq \mathcal{E}$,
	\begin{equation}
		\label{eq:app_submodular_equiv}
		F(A) + F(B) \ge F(A \cup B) + F(A \cap B).
	\end{equation}
\end{definition}

The defining property \eqref{eq:app_submodular_def} is the \emph{diminishing-returns} condition: the marginal gain of adding element $e$ to a set does not increase as the set grows. This captures the redundancy that arises in multi-agent coordination
when agents sense overlapping regions, cover intersecting areas, or collect correlated information.

\begin{definition}[Normalized Monotone Submodular Function \citep{qu2019distributed}]
	\label{def:app_normalized_monotone}
	A submodular function $F : 2^\mathcal{E} \to \mathbb{R}_{\ge 0}$ is:
	\begin{enumerate}[label=(\roman*)]
		\item \emph{normalized} if $F(\emptyset) = 0$;
		\item \emph{monotone} if $F(A) \le F(B)$ whenever $A \subseteq B$.
	\end{enumerate}
\end{definition}

\begin{definition}[Marginal Gain]
\label{def:marginal}
   The marginal gain of element $e \in \Omega_t$
    with respect to set $A \subseteq \Omega_t \setminus \{e\}$ under state
    $s_t$ is $F_t(e \mid A;\,s_t) \;\triangleq\;
      F_t(A \cup \{e\};\,s_t) - F_t(A;\,s_t)$.
\end{definition}

The stage utility $F_t$ in Problem~\ref{prob:main} is assumed to be
normalized, monotone, and submodular throughout the paper.

\begin{definition}[DR Property and DR-Submodular Functions~\citep{bian2017guaranteed}]
	\label{def:dr-submodular}
	A function $f: \mathcal{X} \to \mathbb{R}$ defined over a convex domain $\mathcal{X} \subseteq [0, 1]^{|\Omega_t|}$ satisfies the diminishing returns (DR) property if for all $\mathbf{a} \leq \mathbf{b} \in \mathcal{X}$, $\forall e \in \Omega_t$, $\forall k \in \mathbb{R}_+$ such that $(k\boldsymbol{\chi}_e + \mathbf{a}) \in \mathcal{X}$ and $(k\boldsymbol{\chi}_e + \mathbf{b}) \in \mathcal{X}$, it holds:
	\begin{equation}
		f(k\boldsymbol{\chi}_e + \mathbf{a}) - f(\mathbf{a}) \geq f(k\boldsymbol{\chi}_e + \mathbf{b}) - f(\mathbf{b}),
	\end{equation}
	where $\boldsymbol{\chi}_e \in \{0, 1\}^{|\Omega_t|}$ is the standard basis vector corresponding to element $e \in \Omega_t$. Such a function $f(\cdot)$ is called a DR-submodular function.
\end{definition}

\begin{assumption}
	\label{ass:submodular}
	For each round $t$, the function $F_t:2^{\Omega_t}\to\mathbb R_{\ge 0}$ satisfies:
	\begin{enumerate}[label=(\arabic*), leftmargin=*]
		\item \textbf{Normalization:} $F_t(\emptyset)=0$.
		\item \textbf{Monotonicity:} $F_t(A;\,s_t)\le F_t(B;\,s_t)$ for all $A\subseteq B\subseteq \Omega_t$.
		\item \textbf{Submodularity:} For all $A\subseteq B\subseteq \Omega_t$ and $e\in \Omega_t\setminus B$,
		\begin{equation}\label{eq:submod_def}
			F_t(e\mid A;\,s_t) \ge F_t(e\mid B;\,s_t),
		\end{equation}
		where $F_t(e \mid A;\,s_t) \triangleq F_t(A \cup \{e\};\,s_t) - F_t(A;\,s_t)$ is the marginal gain.
		\item \textbf{Bounded marginals:} There exists $B < \infty$ such that $0 \leq F_t(e \mid A;\,s_t) \leq B$ for all $A \in \mathcal{I}_t$, $e \in \Omega_t$, and all $s_t \in \mathcal{S}$.
	\end{enumerate}
\end{assumption}

\section{Related Work}
\label{app:related}

\paragraph{Submodular Optimization and Online Learning.}
Submodular maximization under matroid constraints is a classic combinatorial optimization framework with provable suboptimality guarantees~\citep{nemhauser1978analysis,fisher1978analysis,calinescu2011maximizing}. To address non-stationarity, a growing body of work studies online and bandit submodular optimization, including continuous relaxations with full-information or bandit feedback~\citep{zhang2019online,zhang2023online,tajdini2024nearly}. For diminishing-returns submodular functions, gradient-based methods achieve constant-factor guarantees in online regimes~\citep{bian2017guaranteed,hassani2017gradient,zhang2025near}. In control and sensing applications, multilinear extension-based relaxations are used to obtain scalable approximate solutions with performance guarantees~\citep{kazma2025multilinear,rakhshan2025observability}. Concurrent to our work, a policy-based continuous extension structurally analogous to our PME has been proposed for multi-agent online coordination, extending approximation guarantees to weakly submodular objectives via consensus-based distributed gradient ascent~\citep{zhang2025effective}; however, this framework assumes a fixed agent population and targets online regret minimization rather than sequential decision-making in open systems. More broadly, existing relaxations either rely on Bernoulli-based multilinear extensions that generate infeasible joint actions under partition matroid constraints~\citep{calinescu2011maximizing,
zhang2025near}, or optimize continuous surrogates whose gradient signals are not directly interpretable as per-agent contributions. We address this by proposing the PME, a continuous relaxation that natively respects partition matroid feasibility.

\paragraph{Submodular Reinforcement Learning and Credit Assignment in MARL.}
In multi-agent reinforcement learning, credit assignment mechanisms such as difference rewards~\citep{wolpert2002collective}, counterfactual baselines~\citep{foerster2018coma}, and monotonic value decomposition~\citep{rashid2020monotonic} are widely used to stabilize policy-gradient learning.
While effective in practice, these methods are usually motivated by variance reduction or reward shaping and do not consider the structure of team's utility. However, submodular utilities possess a rich structure in connection to difference rewards.
Recent efforts incorporate submodularity into RL, including submodular RL formulations and greedy-style policy optimization~\citep{prajapat2024submodular,desanti2024global}, as well as KL-regularized policy optimization for adaptive submodular maximization~\citep{kveton2025adaptive}.
Independently, cooperative MARL with submodular rewards in tabular MDPs has been studied via sequential greedy policy optimization, achieving a $1/2$-approximation for known dynamics and sublinear regret for unknown dynamics via UCB-based methods~\citep{chen2026multi}; however, this analysis assumes a fixed agent population and fully observable tabular states, and relies on sequential agent ordering rather than simultaneous decentralized execution. Meanwhile, recent discussions highlight that openness and non-stationarity can exacerbate credit misattribution in MARL~\citep{abadi2025challenges}.
Also, policy gradient via difference rewards has been formally analyzed, but with no connection to continuous relaxations under partition matroid constraints~\citep{castellini2025difference}.
We advance this literature and show that submodular difference rewards are unbiased stochastic gradient estimators of the PME, providing a principled bridge between credit assignment in MARL and continuous DR-submodular optimization under categorical sampling constraints.

\paragraph{Multi-Agent Coordination in Open Environments.}
Multi-agent problems such as dynamic target tracking, active sensing, and coverage are often modeled as combinatorial task allocation problems with diminishing returns~\citep{zhang2018dynamic,oliva2023sum,liu2024distributed}. Standard formulations consider fixed agent and task sets and centralized solutions via greedy or relaxation-based methods. Recent work addresses more realistic formulations with time-varying objectives and online decentralized decision-making, e.g., constrained coverage and sensing~\citep{cervino2025constrained}, and online submodular coordination for multi-agent tracking with regret bounds~\citep{xu2023online,zhang2025near}. However, these approaches typically assume a fixed number of agents throughout execution.
Open multi-agent systems (OMAS), where agents and tasks dynamically arrive and depart, pose additional challenges~\citep{deplano2026optimization}. Standard MARL algorithms such as MAPPO~\citep{yu2022surprising} and MADDPG~\citep{lowe2017multi} assume fixed agent sets and stationary state-action spaces, limiting their direct applicability. Recent OMAS research has explored distributed consensus, robustness to population changes~\citep{oliva2023sum,liu2024distributed}, and graph neural networks (GNNs) for permutation-invariant, variable-sized representations~\citep{jiang2020gcrl,xu2022powerful,lai2025roboballet}. Yet, existing GNN-based MARL methods provide limited formal guarantees for combinatorial team objectives. Our work addresses this gap. SubMAPG is built on PME marginal dynamics with provable approximation and regret bounds for partition-constrained submodular coordination, while adopting GNNs for scalable implementation in open environments.

\subsection{Comparison with Existing Methods}
\label{app:comparison}

We provide a systematic comparison between the proposed PME and existing approaches for submodular coordination in multi-agent settings. Table~\ref{tab:applicability} summarizes key properties across different methods.

\begin{table}[t]
\centering
\caption{Systematic comparison of submodular optimization, online coordination,
and MARL methods for multi-agent task allocation.}
\label{tab:applicability}
\resizebox{\textwidth}{!}{%
\renewcommand{\arraystretch}{1.2}
\setlength{\tabcolsep}{5pt}
\begin{tabular}{@{}l c c c c l@{}}
\toprule
\textbf{Method} &
\textbf{\makecell{Req.\\ Comm.}} &
\textbf{\makecell{Part.\\ Feas.}} &
\textbf{\makecell{Decent-\\ ralized}} &
\textbf{OMAS} &
\textbf{Guarantee / Key Limitation} \\
\midrule
  Centralized greedy~\citep{fisher1978analysis} &
  Global & \cmark & \xmark & \xmark &
  $1/2$ under general matroids; myopic; requires full state \\
Continuous greedy + rounding~\citep{calinescu2011maximizing} &
  Global & \cmark & \xmark & \xmark &
  $(1{-}1/e)$; centralized relaxation and rounding \\
MA-OSMA~\citep{zhang2025near} &
  Consensus & \cmark$^{\dag}$ & \cmark & \xmark &
  $(1{-}1/e)$; high comm.\ overhead; fixed agents \\
MA-SPL~\citep{zhang2025effective} &
  Consensus & \cmark$^{\dag}$ & \cmark & \xmark &
  $(1{-}c/e)$; weakly submodular; fixed agents \\
GPO~\citep{chen2026multi} &
  Seq.\ order$^{\ddag}$ & \cmark & \xmark & \xmark &
  $1/2$; tabular MDP; fixed agents; sequential \\
Standard MARL~\citep{yu2022surprising} &
  Local & \cmark & \cmark & \xmark &
  No submodular guarantee; fixed agents \\
\rowcolor{blue!10}
\textbf{Ours (SubMAPG)} &
  \textbf{Local} & \textbf{\cmark} & \textbf{\cmark} & \textbf{\cmark} &
  \textbf{Stagewise approximation; dynamic regret} \\
\bottomrule
\end{tabular}%
}
\vspace{2pt}
{\footnotesize
$^{\dag}$The continuous relaxation may not match partition-feasible categorical execution; feasibility is restored by rounding or post-processing.\quad
$^{\ddag}$Agents are optimized sequentially in a fixed order rather than by simultaneous decentralized execution.
}
\end{table}

Simple centralized greedy achieves a $1/2$-approximation for monotone submodular maximization under general matroid constraints, while the $(1-1/e)$ guarantee is obtained by continuous greedy together with rounding
schemes~\citep{fisher1978analysis,calinescu2011maximizing}.
However, they require global coordination and instantaneous optimization at each decision step, which limits their applicability to sequential settings with learned policies.

MLE-based continuous relaxations preserve the $(1-1/e)$ guarantee via (often centralized) rounding schemes~\citep{calinescu2011maximizing,zhang2025near}. 
Importantly, while rounding \emph{does} restore feasibility, it is a \emph{post-hoc} correction: the optimized surrogate is defined under an independent Bernoulli model, whereas execution in OMAS uses per-agent categorical decisions constrained by the partition structure. 
This objective-execution discrepancy is particularly consequential in policy-gradient and sequential settings, where the learning signal is driven by the surrogate rather than the actually executed distribution. 
Moreover, distributed implementations typically rely on consensus steps to coordinate a shared continuous solution, incurring nontrivial communication overhead.

Standard MARL methods such as MAPPO~\citep{yu2022surprising} and QMIX~\citep{rashid2020monotonic} support decentralized execution and can be extended to handle sequential decisions. However, they provide no formal guarantees for submodular objectives and typically assume fixed agent sets, limiting their direct applicability to open multi-agent systems.

Our PME-based approach bridges this gap by providing a continuous relaxation that natively respects partition matroid constraints. While classical centralized greedy enjoys the stronger $(1-1/e)$ approximation ratio in static fully observable submodular maximization, it requires global state information and sequential one-shot optimization at every round. SubMAPG instead targets a different regime: first-order policy learning with factorized categorical policies, simultaneous decentralized execution, local observations, and open agent/task sets. The resulting $1/2$ factor
is the standard guarantee for first-order methods on monotone
DR-submodular functions~\citep{bian2017guaranteed,hassani2017gradient}.
Thus, the weaker worst-case constant is the price of avoiding centralized rounding or sequential greedy selection, while gaining compatibility with policy-gradient learning and open-system execution. Empirically, SubMAPG can match or exceed myopic greedy baselines in cumulative utility because the learned policy optimizes long-horizon behavior under dynamics, whereas greedy baselines optimize instantaneous stage utilities.

\section{Proofs for Section \ref{sec:pme}}   
\label{app:proofs_sec3}

This appendix provides complete proofs for the theoretical results presented in Section~\ref{sec:pme}.

\subsection{Proof of Gradient of PME (Lemma~\ref{lem:gradient})}

\textbf{Lemma 3.1} (Gradient of PME). 
\emph{The partial derivative of $\tilde{f}_t$ with respect to $x_{(i,a)}$ is
    given by
    \begin{equation}
        \frac{\partial \tilde{f}_t}{\partial x_{(i,a)}}(\mathbf{x})
        = \mathbb{E}_{A^{-i} \sim \mathcal{D}^{-i}(\mathbf{x})}
        \!\Big[ F_t\big((i,a) \mid A^{-i};\,s_t\big) \Big], \nonumber
    \end{equation}
    where $A^{-i}$ denotes a joint action of all agents except agent $i$, sampled from the product distribution $\mathcal{D}^{-i}(\mathbf{x})$ over $\mathcal{N}_t \setminus \{i\}$.}
    
\begin{proof}
    We derive the gradient by differentiating the PME definition with respect to $x_{(i,a)}$ and carefully tracking how this variable appears in the probability terms.

   Based on Definition~\ref{def:pme}, the PME is defined as:
    \begin{equation}
       \tilde{f}_t(\mathbf{x};\,s_t) = \sum_{A \in \mathcal{I}_t} 
F_t(A;\,s_t) \prod_{k \in \mathcal{N}_t} p_k(A; \mathbf{x}),
\label{eq:app_pme_def}
    \end{equation}
    where for each agent $k \in \mathcal{N}_t$, the inclusion probability is:
    \begin{equation}
        p_k(A; \mathbf{x}) = 
        \begin{cases}
            x_{(k,a')}, & \text{if } A \cap \Omega_{k,t} = \{(k, a')\} \text{ for some } a' \in \mathcal{A}_{k,t}, \\[4pt]
            1 - \sum_{a'' \in \mathcal{A}_{k,t}} x_{(k,a'')}, & \text{if } A \cap \Omega_{k,t} = \emptyset.
        \end{cases} \nonumber
    \end{equation}

  We focus on a specific agent $i$ and action $a$. The term $p_i(A; \mathbf{x})$ depends on $x_{(i,a)}$ in three mutually exclusive cases:
    \begin{enumerate}[label=(\alph*)]
        \item If $A \cap \Omega_{i,t} = \{(i, a)\}$, then $p_i(A; \mathbf{x}) = x_{(i,a)}$. The partial derivative is:
        $ \frac{\partial p_i(A; \mathbf{x})}{\partial x_{(i,a)}} = 1. $
        
        \item If $A \cap \Omega_{i,t} = \{(i, a')\}$ for some $a' \neq a$, then $p_i(A; \mathbf{x}) = x_{(i,a')}$. This term is independent of $x_{(i,a)}$, so:
        $ \frac{\partial p_i(A; \mathbf{x})}{\partial x_{(i,a)}} = 0. $
        
        \item If $A \cap \Omega_{i,t} = \emptyset$, then $p_i(A; \mathbf{x}) = 1 - \sum_{a'' \in \mathcal{A}_{i,t}} x_{(i,a'')}$. Since $x_{(i,a)}$ is included in the sum:
        $ \frac{\partial p_i(A; \mathbf{x})}{\partial x_{(i,a)}} = -1. $
    \end{enumerate}

   Crucially, for any other agent $k \neq i$, the probability $p_k(A; \mathbf{x})$ does not depend on agent $i$'s marginals. Applying the product rule to Eq.~\eqref{eq:app_pme_def}, only terms involving agent $i$ contribute non-zero derivatives:
    \begin{align}
        \frac{\partial \tilde{f}_t}{\partial x_{(i,a)}}(\mathbf{x}) 
        &= \sum_{A \in \mathcal{I}_t} F_t(A;\,s_t) \left( \frac{\partial p_i(A; \mathbf{x})}{\partial x_{(i,a)}} \prod_{k \neq i} p_k(A; \mathbf{x}) \right).  \nonumber
    \end{align}
    Splitting the sum based on the cases above (Case (b) vanishes):
      \begin{align} \label{eq:app_grad_split}
        \frac{\partial \tilde{f}_t}{\partial x_{(i,a)}}(\mathbf{x}) 
        &= \underbrace{\sum_{A: (i,a) \in A} F_t(A;\,s_t) \prod_{k \neq i} p_k(A; \mathbf{x})}_{\text{Case (a): } \partial p_i / \partial x_{(i,a)} = +1} 
        - \underbrace{\sum_{A: A \cap \Omega_{i,t} = \emptyset} F_t(A;\,s_t) \prod_{k \neq i} p_k(A; \mathbf{x})}_{\text{Case (c): } \partial p_i / \partial x_{(i,a)} = -1}. 
    \end{align}

    Let $A^{-i}$ denote a feasible joint action for all agents except $i$, drawn from the restricted family $\mathcal{I}_t^{-i}$. We observe a one-to-one correspondence:
    \begin{enumerate}
         \item[1)] Any $A$ containing $(i,a)$ can be written as $A^{-i} \cup \{(i,a)\}$.
        \item[2)] Any $A$ where $i$ is idle is effectively just $A^{-i}$.
    \end{enumerate}
    Furthermore, the product term $\prod_{k \neq i} p_k(A; \mathbf{x})$ is exactly the probability of observing $A^{-i}$ under the product distribution $\mathcal{D}^{-i}(\mathbf{x})$, denoted as $\Pr(A^{-i}; \mathbf{x})$.
    
    Substituting these into Eq.~\eqref{eq:app_grad_split}:
    \begin{align}
        \frac{\partial \tilde{f}_t}{\partial x_{(i,a)}}(\mathbf{x}) 
        &= \sum_{A^{-i} \in \mathcal{I}_t^{-i}} F_t(A^{-i} \cup \{(i,a)\};\,s_t) \cdot \Pr(A^{-i}; \mathbf{x}) 
        - \sum_{A^{-i} \in \mathcal{I}_t^{-i}} F_t(A^{-i};\,s_t) \cdot \Pr(A^{-i}; \mathbf{x}) \nonumber \\
        &= \sum_{A^{-i} \in \mathcal{I}_t^{-i}} \Pr(A^{-i}; \mathbf{x}) \cdot \Big[ F_t(A^{-i} \cup \{(i,a)\};\,s_t) - F_t(A^{-i};\,s_t) \Big] \nonumber \\
        &= \mathbb{E}_{A^{-i} \sim \mathcal{D}^{-i}(\mathbf{x})} \Big[ F_t\big((i,a) \mid A^{-i}\big);\,s_t \Big].  \nonumber
    \end{align}
    This completes the proof.
\end{proof}

\subsection{Properties of the Partition Multilinear Extension} \label{app:pme_properties}

We state and prove the structural properties of the PME that enable gradient-based optimization.

\begin{theorem}[Properties of PME]\label{thm:properties}
  If $F_t$ satisfies Assumption~\ref{ass:submodular}, then $\tilde{f}_t: \mathcal{P}(\mathcal{I}_t) \to \mathbb{R}$ satisfies:
    \begin{enumerate}[label=(\arabic*), leftmargin=*]
        \item \textbf{Normalization:} $\tilde{f}_t(\mathbf{0}) = F_t(\emptyset) = 0$.
        \item \textbf{Non-negativity:} $\tilde{f}_t(\mathbf{x};\,s_t) \geq 0$ for all $\mathbf{x} \in \mathcal{P}(\mathcal{I}_t)$.
        \item \textbf{Smoothness:} $\tilde{f}_t$ is a multilinear polynomial and infinitely differentiable.
         \item \textbf{Monotonicity:} $\nabla \tilde{f}_t(\mathbf{x};\,s_t) \geq \mathbf{0}$ for all $\mathbf{x} \in \mathcal{P}(\mathcal{I}_t)$.
         \item \textbf{DR-submodularity:} For any
$\mathbf{x},\mathbf{y}\in\mathcal{P}(\mathcal{I}_t)$ with $\mathbf{x}\le \mathbf{y}$ componentwise,
$\nabla \tilde f_t(\mathbf{x};s_t)
\ge \nabla \tilde f_t(\mathbf{y};s_t)$
componentwise.
    \end{enumerate}
\end{theorem}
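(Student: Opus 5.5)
The plan is to verify the five properties in order, using the explicit sum \eqref{eq:pmv_def} and the gradient formula of Lemma~\ref{lem:gradient}. Throughout, the key structural fact is that $\tilde f_t$ is affine in each agent block $\mathbf{x}_i=(x_{(i,a)})_{a\in\mathcal{A}_{i,t}}$ separately.

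\textbf{Normalization and non-negativity.} At $\mathbf{x}=\mathbf{0}$, every agent's idle factor equals $1$ and every selection factor equals $0$, so only $A=\emptyset$ survives in \eqref{eq:pmv_def}; hence $\tilde f_t(\mathbf{0})=F_t(\emptyset)=0$. For $\mathbf{x}\in\mathcal{P}(\mathcal{I}_t)$, each $p_i(A;\mathbf{x})\ge 0$ because $x_{(i,a)}\ge 0$ and $\sum_a x_{(i,a)}\le 1$. Therefore $\mathcal{D}(\mathbf{x})$ is a genuine probability distribution, and $\tilde f_t$ is an expectation of the nonnegative function $F_t$.

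\textbf{Smoothness.} Each $p_i$ is affine in $\mathbf{x}_i$ and involves no other block. So every summand in \eqref{eq:pmv_def} is a product of affine functions in disjoint variable blocks. The sum is therefore a polynomial of degree at most one in each block, and in particular $C^\infty$ on a neighborhood of the polytope.

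\textbf{Monotonicity.} By Lemma~\ref{lem:gradient}, $\partial\tilde f_t/\partial x_{(i,a)}$ is an expectation of $F_t((i,a)\mid A^{-i};s_t)$. This marginal gain is nonnegative by monotonicity of $F_t$, since $A^{-i}\cap\Omega_{i,t}=\emptyset$ and so $(i,a)\notin A^{-i}$.

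\textbf{DR-submodularity.} This is the main step. I will show that every second partial derivative is nonpositive on $\mathcal{P}(\mathcal{I}_t)$. Then, for $\mathbf{x}\le\mathbf{y}$, the segment $\mathbf{x}+\lambda(\mathbf{y}-\mathbf{x})$ stays in the convex polytope, and integrating the Hessian along it in the nonnegative direction $\mathbf{y}-\mathbf{x}$ gives $\nabla\tilde f_t(\mathbf{y})\le\nabla\tilde f_t(\mathbf{x})$ componentwise. There are two cases.

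\emph{Same-agent case.} Since the gradient formula in Lemma~\ref{lem:gradient} does not depend on $\mathbf{x}_i$, every second derivative $\partial^2\tilde f_t/\partial x_{(i,a)}\partial x_{(i,a')}$, including the diagonal one, is zero.

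\emph{Different-agent case, $j\ne i$.} I apply the differentiation argument of Lemma~\ref{lem:gradient} again, now to the function $\mathbf{x}\mapsto\mathbb{E}_{A^{-i}}[F_t((i,a)\mid A^{-i})]$ with respect to $x_{(j,b)}$. Write $A^{-ij}$ for a sample over agents other than $i$ and $j$. Repeating the three-case computation for agent $j$'s factor gives
\[
\frac{\partial^2\tilde f_t}{\partial x_{(j,b)}\partial x_{(i,a)}}=\mathbb{E}_{A^{-ij}}\Big[F_t\big((i,a)\mid A^{-ij}\cup\{(j,b)\}\big)-F_t\big((i,a)\mid A^{-ij}\big)\Big].
\]
The bracket is nonpositive by submodularity \eqref{eq:submod_def}, because $(i,a)\notin A^{-ij}\cup\{(j,b)\}$.

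The only delicate point I foresee is bookkeeping. In the computation for agent $j$, the idle-case derivative $-1$ must pair correctly with the set without $(j,b)$, which mirrors cases (a) and (c) in the proof of Lemma~\ref{lem:gradient}. I must also check that the interpolating segment remains feasible, which holds by convexity of $\mathcal{P}(\mathcal{I}_t)$.
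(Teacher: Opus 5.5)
Your proposal is correct and follows essentially the same route as the paper. Normalization and non-negativity come directly from the defining sum, smoothness from the block-affine product structure, and monotonicity from Lemma~\ref{lem:gradient}. DR-submodularity is obtained by showing every Hessian entry is nonpositive and then integrating along the segment from $\mathbf{x}$ to $\mathbf{y}$ inside the convex polytope: entries vanish within an agent block, and across agents each entry is an expected difference of marginal gains given $A^{-\{i,j\}}$. Your same-agent argument, that the gradient formula does not depend on $\mathbf{x}_i$, is just a rephrasing of the paper's observation that $\tilde f_t$ is affine in each agent's block.
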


\begin{proof}
We prove each property in turn. Properties (1)-(3) follow directly from Definition~\ref{def:pme}, while properties (4)-(5) additionally rely on the gradient characterization established in Lemma~\ref{lem:gradient}.

\paragraph{(1) Normalization.}
When $\mathbf{x} = \mathbf{0}$, we have $x_{(k,a)} = 0$ for all $(k, a) \in \Omega_t$. By Definition~\ref{def:pme}, the probability that agent $k$ contributes to a set $A$ is
\begin{equation}
    p_k(A; \mathbf{0}) = 
    \begin{cases}
        0, & \text{if } A \cap \Omega_{k,t} = \{(k, a)\} \text{ for some } a, \\[4pt]
        1, & \text{if } A \cap \Omega_{k,t} = \emptyset.
    \end{cases} \nonumber
\end{equation} 
Thus $p_k(A; \mathbf{0}) = 1$ if and only if agent $k$ is idle in $A$. For the product $\prod_{k \in \mathcal{N}_t} p_k(A; \mathbf{0})$ to be nonzero, every agent must be idle, which occurs only when $A = \emptyset$. Therefore,
\begin{equation}
    \tilde{f}_t(\mathbf{0}) = \sum_{A \in \mathcal{I}_t} F_t(A;\,s_t) \prod_{k \in \mathcal{N}_t} p_k(A; \mathbf{0}) = F_t(\emptyset) \cdot 1 = 0, \nonumber
\end{equation}
where the last equality uses the normalization property $F_t(\emptyset) = 0$.

\paragraph{(2) Non-negativity.}
$F_t(A;\,s_t) \geq 0$ for all $A \in \mathcal{I}_t$. Moreover, for any $\mathbf{x} \in \mathcal{P}(\mathcal{I}_t)$, each probability term satisfies $p_k(A; \mathbf{x}) \geq 0$ by definition. Since $\tilde{f}_t(\mathbf{x};\,s_t)$ is a sum of products of nonnegative terms,
\begin{equation}
    \tilde{f}_t(\mathbf{x};\,s_t) = \sum_{A \in \mathcal{I}_t} \underbrace{F_t(A;\,s_t)}_{\geq 0} \cdot \underbrace{\prod_{k \in \mathcal{N}_t} p_k(A; \mathbf{x})}_{\geq 0} \geq 0. \nonumber
\end{equation}

\paragraph{(3) Smoothness.}
By Definition~\ref{def:pme}, we have $\tilde{f}_t(\mathbf{x};\,s_t) = \sum_{A \in \mathcal{I}_t} F_t(A;\,s_t) \prod_{k \in \mathcal{N}_t} p_k(A; \mathbf{x})$. For each agent $k$, the term $p_k(A; \mathbf{x})$ is an affine function of the variables $\{x_{(k,a)}\}_{a \in \mathcal{A}_{k,t}}$:
\begin{enumerate}[label=(\alph*)]
    \item If $A \cap \Omega_{k,t} = \{(k, a)\}$, then $p_k(A; \mathbf{x}) = x_{(k,a)}$ (linear in $x_{(k,a)}$).
    \item If $A \cap \Omega_{k,t} = \emptyset$, then $p_k(A; \mathbf{x}) = 1 - \sum_{a \in \mathcal{A}_{k,t}} x_{(k,a)}$ (affine).
\end{enumerate}
Since each $p_k$ depends only on the variables of agent $k$ and different agents have disjoint variable sets, the product $\prod_{k \in \mathcal{N}_t} p_k(A; \mathbf{x})$ is a multilinear polynomial: it has degree at most one in each agent's variables, with total degree at most $|\mathcal{N}_t|$. As a finite sum of such products weighted by constants $F_t(A;\,s_t)$, $\tilde{f}_t(\mathbf{x};\,s_t)$ is a polynomial in $\mathbf{x}$, which is $C^\infty$ (infinitely differentiable) on $\mathbb{R}^{|\Omega_t|}$.

\paragraph{(4) Monotonicity of Gradient.}
By Lemma~\ref{lem:gradient}, for any $(i, a) \in \Omega_t$,
\begin{equation}
    \frac{\partial \tilde{f}_t}{\partial x_{(i,a)}}(\mathbf{x}) = \mathbb{E}_{A^{-i} \sim \mathcal{D}^{-i}(\mathbf{x})} \Big[ F_t\big((i,a) \mid A^{-i}\big) \Big]. \nonumber
\end{equation}
By the monotonicity of $F_t$, the marginal gain is nonnegative:
\begin{equation}
    F_t\big((i,a) \mid A^{-i}\big) = F_t(A^{-i} \cup \{(i,a)\}) - F_t(A^{-i};\,s_t) \geq 0, \quad \forall A^{-i} \in \mathcal{I}_t^{-i}. \nonumber
\end{equation}
Since the expectation of a nonnegative random variable is nonnegative, we have $\frac{\partial \tilde{f}_t}{\partial x_{(i,a)}}(\mathbf{x}) \geq 0$ for all $(i, a) \in \Omega_t$, and hence $\nabla \tilde{f}_t(\mathbf{x};\,s_t) \geq \mathbf{0}$ componentwise.

\paragraph{(5) DR-Submodularity.}
We prove the DR property through the antitonicity of the gradient on
$\mathcal{P}(\mathcal{I}_t)$. Since $\tilde f_t$ is a multilinear polynomial,
it is twice continuously differentiable on an open neighborhood of
$\mathcal{P}(\mathcal{I}_t)$. It therefore suffices to show that all second
partial derivatives are non-positive on $\mathcal{P}(\mathcal{I}_t)$ and then
integrate the Hessian along line segments inside the convex polytope.

We verify the second-order condition by considering two cases.

\begin{enumerate}[label=(\alph*)]
    \item \textbf{Same agent ($u = i$).}
    By property (3), $\tilde f_t$ is affine in the variables of each
    individual agent subset. Indeed, in every monomial of the PME, at most one
    factor depends on the variables
    $\{x_{(i,a)}:a\in\mathcal{A}_{i,t}\}$, and this factor is affine.
    Therefore, for any fixed agent $i$, all second-order partial derivatives
    within the same subset vanish:
    \begin{equation}
    \frac{\partial^2 \tilde f_t}
    {\partial x_{(i,v)}\partial x_{(i,a)}}(\mathbf{x};s_t)
    =
    0
    \le 0,
    \qquad
    \forall a,v\in\mathcal{A}_{i,t}.
    \nonumber
    \end{equation}

    \item \textbf{Different agents ($u \neq i$).}
    From Lemma~\ref{lem:gradient}, for any $(i,a)\in\Omega_t$,
    \begin{equation}
    \frac{\partial \tilde f_t}{\partial x_{(i,a)}}(\mathbf{x};s_t)
    =
    \mathbb{E}_{A^{-i}\sim\mathcal{D}^{-i}(\mathbf{x})}
    \left[
    F_t((i,a)\mid A^{-i};s_t)
    \right].
    \nonumber
    \end{equation}
    Let $A^{-\{i,u\}}$ denote a joint action of all agents except $i$ and
    $u$, sampled from $\mathcal{D}^{-\{i,u\}}(\mathbf{x})$. Conditioning on
    the contribution of agent $u$, we can write
    \begin{align}
    \frac{\partial \tilde f_t}{\partial x_{(i,a)}}(\mathbf{x};s_t)
    &=
    \mathbb{E}_{A^{-\{i,u\}}}
    \Bigg[
    \sum_{b\in\mathcal{A}_{u,t}}
    x_{(u,b)}
    F_t((i,a)\mid A^{-\{i,u\}}\cup\{(u,b)\};s_t)
    \nonumber\\
    &\qquad\qquad
    +
    \left(1-\sum_{b\in\mathcal{A}_{u,t}}x_{(u,b)}\right)
    F_t((i,a)\mid A^{-\{i,u\}};s_t)
    \Bigg].
    \nonumber
    \end{align}
    Since $x_{(u,v)}$ appears linearly, differentiating with respect to
    $x_{(u,v)}$ gives
    \begin{equation}
    \label{eq:app_hessian}
    \frac{\partial^2 \tilde f_t}
    {\partial x_{(u,v)}\partial x_{(i,a)}}(\mathbf{x};s_t)
    =
    \mathbb{E}_{A^{-\{i,u\}}}
    \left[
    F_t((i,a)\mid A^{-\{i,u\}}\cup\{(u,v)\};s_t)
    -
    F_t((i,a)\mid A^{-\{i,u\}};s_t)
    \right].
    \end{equation}
    By submodularity of $F_t$, marginal gains diminish as the conditioning
    set grows. Since
    $A^{-\{i,u\}}\subseteq A^{-\{i,u\}}\cup\{(u,v)\}$, we have
    \[
    F_t((i,a)\mid A^{-\{i,u\}}\cup\{(u,v)\};s_t)
    \le
    F_t((i,a)\mid A^{-\{i,u\}};s_t).
    \]
    Thus the integrand in~\eqref{eq:app_hessian} is non-positive. Because
    $\mathbf{x}\in\mathcal{P}(\mathcal{I}_t)$ ensures that
    $\mathcal{D}^{-\{i,u\}}(\mathbf{x})$ is a valid product distribution with
    nonnegative weights, the expectation is also non-positive:
    \[
    \frac{\partial^2 \tilde f_t}
    {\partial x_{(u,v)}\partial x_{(i,a)}}(\mathbf{x};s_t)
    \le 0.
    \]
\end{enumerate}

Combining the two cases, every second partial derivative of $\tilde f_t$ is
non-positive on $\mathcal{P}(\mathcal{I}_t)$. Now take any
$\mathbf{x},\mathbf{y}\in\mathcal{P}(\mathcal{I}_t)$ with
$\mathbf{x}\le\mathbf{y}$ componentwise. Since
$\mathcal{P}(\mathcal{I}_t)$ is convex, the segment
$\mathbf{z}(\tau)=\mathbf{x}+\tau(\mathbf{y}-\mathbf{x})$ lies in
$\mathcal{P}(\mathcal{I}_t)$ for all $\tau\in[0,1]$. For any coordinate
$(i,a)$,
\begin{align}
\frac{\partial \tilde f_t}{\partial x_{(i,a)}}(\mathbf{y};s_t)
-
\frac{\partial \tilde f_t}{\partial x_{(i,a)}}(\mathbf{x};s_t)
&=
\int_0^1
\sum_{(u,v)\in\Omega_t}
\frac{\partial^2 \tilde f_t}
{\partial x_{(u,v)}\partial x_{(i,a)}}
(\mathbf{z}(\tau);s_t)
\,
\bigl(y_{(u,v)}-x_{(u,v)}\bigr)
\,d\tau
\nonumber\\
&\le 0,
\end{align}
because each Hessian entry is non-positive and
$y_{(u,v)}-x_{(u,v)}\ge0$. Hence
$\nabla \tilde f_t(\mathbf{x};s_t)
\ge
\nabla \tilde f_t(\mathbf{y};s_t)$
componentwise. This is exactly the DR-submodularity property on $\mathcal{P}(\mathcal{I}_t)$.
\end{proof}

\subsection{Proof of Objective Equivalence (Lemma~\ref{lem:equivalence})}

\textbf{Lemma~\ref{lem:equivalence}} (Objective Equivalence).
\emph{
 Let $\pi = \prod_{i \in \mathcal{N}_t} \pi^i$ be a
    factorized categorical policy as per~\eqref{eq:factorized_policy} and define the state-dependent marginal vector $\mathbf{x}_t(s_t)$ by
$x_{(i,a)}(s_t)\triangleq \pi^i(a\mid \mathcal{O}(s_t,i))$. Then, for any fixed state $s_t$, the conditional expected stage utility under $\pi$ equals the PME evaluated at $\mathbf{x}(\pi)$:
    $\mathbb{E}_{A_t \sim \pi}\!\left[
    F_t(A_t;\,s_t)\,\middle|\,s_t
    \right]
    =
    \tilde{f}_t(\mathbf{x}_t(s_t);\,s_t)$.  
    Consequently, by the total expectation,
$J_t(\pi)
=
\mathbb{E}_{s_t\sim\mu_t^\pi}
\left[
\tilde f_t(\mathbf{x}_t(s_t);s_t)
\right]$, where $\mu_t^{\pi}$ denotes the marginal state distribution at round $t$ induced by $\pi$.
    Note that for categorical policies,
   $\sum_{a} x_{(i,a)}(s_t) = 1$, so the idle mass
    $1 - \sum_{a} x_{(i,a)}$ in \eqref{eq:agent_prob} is exactly zero, ensuring exact equivalence at every state $s_t$.
}

\begin{proof}
Fix a round $t$ and a state $s_t$. Conditional on $s_t$, each agent
$i\in\mathcal{N}_t$ observes $o_{i,t}=\mathcal{O}(s_t,i)$ and samples one
action independently according to the categorical distribution
$\pi^i(\cdot\mid o_{i,t})$. Therefore the realized joint action set is
$A_t=\{(i,a_{i,t}): i\in\mathcal{N}_t\},
a_{i,t}\sim \pi^i(\cdot\mid \mathcal{O}(s_t,i))$.
Since each active agent contributes exactly one agent-action pair, we have
$A_t\in\mathcal{I}_t$ with probability one.

Consider any feasible set $A\in\mathcal{I}_t$. Under the state-dependent
marginal vector $\mathbf{x}_t(s_t)$, the per-agent factor in the PME is
\[
p_i(A;\mathbf{x}_t(s_t))
=
\begin{cases}
x_{(i,a)}(s_t)
=
\pi^i(a\mid \mathcal{O}(s_t,i)),
& \text{if } A\cap\Omega_{i,t}=\{(i,a)\},\\[4pt]
1-\displaystyle\sum_{a'\in\mathcal{A}_{i,t}}x_{(i,a')}(s_t),
& \text{if } A\cap\Omega_{i,t}=\emptyset.
\end{cases}
\]
Because $\pi^i(\cdot\mid \mathcal{O}(s_t,i))$ is a categorical distribution,
$\sum_{a'\in\mathcal{A}_{i,t}}x_{(i,a')}(s_t)=1$,
so the idle probability is zero for every active agent. Thus any feasible
set that omits an active agent has zero probability under the categorical
execution model, while any feasible set selecting exactly one action per
active agent has probability equal to the product of the corresponding local
action probabilities.

Using the conditional independence of agents' action samples given $s_t$, the
probability of realizing $A$ is
$\Pr(A_t=A\mid s_t)= \prod_{i\in\mathcal{N}_t} p_i(A;\mathbf{x}_t(s_t))$.
Therefore,
\begin{align}
\mathbb{E}_{A_t\sim\pi}
\!\left[
F_t(A_t;s_t)\mid s_t
\right]
&=
\sum_{A\in\mathcal{I}_t}
F_t(A;s_t)\Pr(A_t=A\mid s_t)
\nonumber\\
&=
\sum_{A\in\mathcal{I}_t}
F_t(A;s_t)
\prod_{i\in\mathcal{N}_t}p_i(A;\mathbf{x}_t(s_t))
\nonumber\\
&=
\tilde f_t(\mathbf{x}_t(s_t);s_t),
\end{align}
where the last equality is exactly the definition of the PME in
Definition~\ref{def:pme}.
Finally, applying the law of total expectation over
$s_t\sim\mu_t^\pi$ gives
\[
J_t(\pi)
=
\mathbb{E}_{\pi}\!\left[F_t(A_t;s_t)\right]
=
\mathbb{E}_{s_t\sim\mu_t^\pi}
\left[
\mathbb{E}_{A_t\sim\pi}
\!\left[
F_t(A_t;s_t)\mid s_t
\right]
\right]
=
\mathbb{E}_{s_t\sim\mu_t^\pi}
\left[
\tilde f_t(\mathbf{x}_t(s_t);s_t)
\right].
\]
\end{proof}

\section{Proofs for Section \ref{sec:algorithm}}
\label{app:proofs_sec4}

This appendix provides complete proofs for all theoretical results in Section~\ref{sec:algorithm}.

\subsection{Masked-Softmax Feasibility and Policy Scope}
\label{app:parameterization}

\subsubsection{Masked-softmax Feasibility (Lemma~\ref{lem:masked_softmax})}

\begin{lemma}[Masked-Softmax Feasibility]
\label{lem:masked_softmax}
The masked-softmax policy parameterization~\eqref{eq:masked_softmax_main}
satisfies $\sum_{a \in \mathcal{A}_{i,t}} \pi_\theta^i(a \mid o_{i,t}) = 1$
for all $\theta \in \mathbb{R}^d$, all agents $i \in \mathcal{N}_t$,
and all observations $o_{i,t} \in \mathcal{O}$.
Consequently, the induced marginal vector
$\mathbf{x}(\theta) \in \mathcal{F}_t$ for all $\theta$,
where $\mathcal{F}_t$ is the boundary face
defined in~\eqref{eq:boundary_face}.
Thus, every sampled joint action is feasible under the partition matroid.
\end{lemma}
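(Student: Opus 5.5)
The plan is a direct computation from the masked-softmax formula \eqref{eq:masked_softmax_main}, followed by membership checks against the definitions of $\mathcal{P}(\mathcal{I}_t)$ and $\mathcal{F}_t$ in \eqref{eq:boundary_face}.

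First, fix $\theta$, an active agent $i$ and an observation $o_{i,t}$. Write $z_a \triangleq [\mathbf{W}_h h_{i,t}]_a + M_a$. The convention $M_a=-\infty$ is read as $\exp(z_a)=0$. I will assume, as the construction implicitly does, that $\mathcal{A}_{i,t}$ contains at least one unmasked action. This is the one point needing care: if every action were masked, the denominator would vanish. I will state it explicitly as a standing condition, since $\mathcal{A}_{i,t}$ is by definition the set of feasible actions and $M_a=0$ on it.

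Under this condition the denominator $Z=\sum_{a'}\exp(z_{a'})$ is a finite sum of nonnegative terms with at least one strictly positive term. Hence $0<Z<\infty$. Each $\pi_\theta^i(a\mid o_{i,t})=\exp(z_a)/Z$ therefore lies in $[0,1]$. Summing the numerators over $a\in\mathcal{A}_{i,t}$ reproduces $Z$, so the probabilities sum to $1$. None of this depends on $\theta$, $i$ or $o_{i,t}$, which gives the uniformity claimed in the statement.

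Next, set $x_{(i,a)}(\theta)=\pi_\theta^i(a\mid o_{i,t})$. The bounds above give $\mathbf{x}(\theta)\in[0,1]^{|\Omega_t|}$. Each per-agent sum equals $1$, which is at most $1$, so $\mathbf{x}(\theta)\in\mathcal{P}(\mathcal{I}_t)$. Because these sums hold with equality for every $i\in\mathcal{N}_t$, we also get $\mathbf{x}(\theta)\in\mathcal{F}_t$.

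Finally, for feasibility of sampled actions: each agent draws exactly one $a_{i,t}\sim\pi_\theta^i(\cdot\mid o_{i,t})$. Masked actions have probability zero, so almost surely $a_{i,t}\in\mathcal{A}_{i,t}$. The set $A_t=\{(i,a_{i,t})\}$ then satisfies $|A_t\cap\Omega_{i,t}|=1\le 1$ for every $i$, hence $A_t\in\mathcal{I}_t$. This is the same argument used in the proof of Lemma~\ref{lem:equivalence}.
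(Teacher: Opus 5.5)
Your proposal is correct and takes essentially the same route as the paper: a direct softmax-normalization computation showing that each agent's probabilities sum to one, which places $\mathbf{x}(\theta)$ on the face $\mathcal{F}_t$. Your extra checks make explicit details the paper's proof leaves implicit: a nonempty feasible action set so the denominator is positive, the box constraints $0\le x_{(i,a)}\le 1$ needed for membership in $\mathcal{P}(\mathcal{I}_t)$, and the argument that a sampled $A_t$ lies in $\mathcal{I}_t$.
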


\begin{proof}
By Definition~\eqref{eq:masked_softmax_main}, the policy of agent $i$ is
\begin{equation}
  \pi_\theta^i(a \mid o_{i,t})
  = \frac{\exp\!\bigl([\mathbf{W}_h h_{i,t}]_a + M_a\bigr)}
         {\sum_{a' \in \mathcal{A}_{i,t}}
          \exp\!\bigl([\mathbf{W}_h h_{i,t}]_{a'} + M_{a'}\bigr)},
  \quad \forall\, a \in \mathcal{A}_{i,t},
\end{equation}
where $M_a = 0$ for all feasible $a \in \mathcal{A}_{i,t}$ and
$M_a = -\infty$ for infeasible actions (which are excluded from the
sum by convention). Since the denominator is exactly
$\sum_{a' \in \mathcal{A}_{i,t}}
\exp([\mathbf{W}_h h_{i,t}]_{a'} + M_{a'})$,
the softmax normalization gives
\begin{equation}
  \sum_{a \in \mathcal{A}_{i,t}} \pi_\theta^i(a \mid o_{i,t})
  = \frac{\sum_{a \in \mathcal{A}_{i,t}}
    \exp([\mathbf{W}_h h_{i,t}]_a + M_a)}
    {\sum_{a' \in \mathcal{A}_{i,t}}
    \exp([\mathbf{W}_h h_{i,t}]_{a'} + M_{a'})}
  = 1,
\end{equation}
for all $\theta \in \mathbb{R}^d$, all $i \in \mathcal{N}_t$, and all
$o_{i,t} \in \mathcal{O}$. Consequently,
$x_{(i,a)}(\theta) \triangleq \pi_\theta^i(a \mid o_{i,t})$
satisfies $\sum_{a \in \mathcal{A}_{i,t}} x_{(i,a)}(\theta) = 1$
for all $i$, which by Definition~\eqref{eq:boundary_face} means
$\mathbf{x}(\theta) \in \mathcal{F}_t$.
\end{proof}

\subsubsection{Proof of Proposition~\ref{prop:tabular_equivalence}}
\label{app:proof_tabular_equivalence}

\begin{proposition}[Tabular Softmax Induces Feasible PME-Based Ascent]
\label{prop:tabular_equivalence}
Let each agent $i$ use a tabular softmax policy
$\pi_\theta^i(a\mid o)
=
\operatorname{softmax}(\theta^i_{o,\cdot})_a$
with independent parameters $\theta^i_{o,a}\in\mathbb{R}$. Consider the policy-gradient update
$\theta_{k+1}
=
\theta_k+\eta \hat g_k^{(\theta)}$,
where $\hat g_k^{(\theta)}$ is an unbiased estimator of
$\nabla_\theta J_t(\theta_k)$ conditional on $\theta_k$. Let
$\mathbf{x}_k=\mathbf{x}(\theta_k)$ and
$\Delta\mathbf{x}_k\triangleq \mathbf{x}_{k+1}-\mathbf{x}_k$. Then:
\begin{enumerate}[label=(\roman*)]
  \item $\mathbf{x}_k\in\mathcal{F}_t$ for all $k$.
  \item For sufficiently small $\eta$,
  \[
  \mathbb{E}\!\left[
  \left\langle
  \nabla_{\mathbf{x}}\tilde f_t(\mathbf{x}_k;s_t),
  \Delta\mathbf{x}_k
  \right\rangle
  \,\middle|\,\theta_k
  \right]
  =
  \eta
  \left\|
  \mathbf{J}_k^\top
  \nabla_{\mathbf{x}}\tilde f_t(\mathbf{x}_k;s_t)
  \right\|_2^2
  +O(\eta^2)
  \geq -O(\eta^2),
  \]
  where $\mathbf{J}_k=\partial\mathbf{x}(\theta)/\partial\theta|_{\theta=\theta_k}$.
\end{enumerate}
Consequently, tabular softmax gradient ascent preserves feasibility on
$\mathcal{F}_t$ and produces, to first order, a PME-based marginal update.
The guarantees in Theorems~\ref{thm:stagewise} and~\ref{thm:regret} are stated for the idealized Euclidean projected stochastic-gradient dynamics in $\mathbf{x}$-space, rather than for general parameter-space dynamics.
\end{proposition}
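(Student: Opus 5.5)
Part (i) follows from Lemma~\ref{lem:masked_softmax}. A tabular softmax is the special case of the masked softmax~\eqref{eq:masked_softmax_main} with all feasible actions unmasked and logits $\theta^i_{o,\cdot}$. Hence $\sum_a x_{(i,a)}(\theta)=1$ for every $\theta$, so $\mathbf{x}_k=\mathbf{x}(\theta_k)\in\mathcal{F}_t$ for every iterate, regardless of the update direction.

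For (ii) I will Taylor expand the smooth map $\theta\mapsto\mathbf{x}(\theta)$ around $\theta_k$:
\[
\Delta\mathbf{x}_k=\mathbf{x}(\theta_k+\eta\hat g_k^{(\theta)})-\mathbf{x}(\theta_k)=\eta\,\mathbf{J}_k\hat g_k^{(\theta)}+R_k,\qquad \|R_k\|_2\le \tfrac{1}{2}\eta^2 L_J\|\hat g_k^{(\theta)}\|_2^2 .
\]
Here $L_J$ bounds the second derivatives of the softmax map. This bound is uniform in $\theta$, since softmax derivatives are polynomials in probabilities that lie in $[0,1]$. Taking the inner product with $\nabla_{\mathbf{x}}\tilde f_t(\mathbf{x}_k;s_t)$ and conditioning on $\theta_k$, the linear term becomes $\eta\,\langle\nabla_{\mathbf{x}}\tilde f_t,\mathbf{J}_k\,\mathbb{E}[\hat g_k^{(\theta)}\mid\theta_k]\rangle$. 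By unbiasedness and the chain rule applied to $J_t(\theta;s_t)=\tilde f_t(\mathbf{x}(\theta);s_t)$ (Lemma~\ref{lem:equivalence}, Lemma~\ref{lem:spg}), we have $\mathbb{E}[\hat g_k^{(\theta)}\mid\theta_k]=\nabla_\theta J_t(\theta_k)=\mathbf{J}_k^\top\nabla_{\mathbf{x}}\tilde f_t(\mathbf{x}_k;s_t)$. The linear term is therefore exactly $\eta\|\mathbf{J}_k^\top\nabla_{\mathbf{x}}\tilde f_t\|_2^2\ge0$.

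For the remainder, Cauchy--Schwarz gives $|\langle\nabla\tilde f_t,R_k\rangle|\le G\,\|R_k\|_2\le \tfrac12\eta^2 G L_J\|\hat g_k^{(\theta)}\|_2^2$. I will then need a bounded conditional second moment $\mathbb{E}[\|\hat g_k^{(\theta)}\|_2^2\mid\theta_k]$. For the score-function estimator of Lemma~\ref{lem:spg} this holds, because $\|\nabla_\theta\log\pi^i_\theta\|_2\le\sqrt2$ for tabular softmax and the difference rewards lie in $[0,B]$ by Assumption~\ref{ass:submodular}(4). The gradient bound $G$ is finite because $\tilde f_t$ is a polynomial on a compact set (Theorem~\ref{thm:properties}). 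Together these give the $O(\eta^2)$ term with a constant independent of $k$, and hence the lower bound $\ge -O(\eta^2)$.

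The main obstacle is making the remainder genuinely $O(\eta^2)$ uniformly. This requires an explicit bound on the softmax Hessian and a finite second moment of the estimator, which the bare statement does not list. I would assume or verify that second-moment bound for the concrete estimator~\eqref{eq:spg_surrogate} restricted to one stage. The ``sufficiently small $\eta$'' qualifier is then only needed to interpret the expansion and does not add further restrictions. The concluding sentence about Theorems~\ref{thm:stagewise} and~\ref{thm:regret} is interpretive and needs no proof.
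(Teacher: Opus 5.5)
Your proposal is correct and follows the same route as the paper. Part (i) comes from Lemma~\ref{lem:masked_softmax}, and part (ii) comes from a first-order Taylor expansion of $\theta\mapsto\mathbf{x}(\theta)$ combined with unbiasedness and the chain rule $\nabla_\theta J_t=\mathbf{J}_k^\top\nabla_{\mathbf{x}}\tilde f_t$. Your explicit remainder bound is a useful sharpening: it shows that the $O(\eta^2)$ term needs a bounded conditional second moment of $\hat g_k^{(\theta)}$ (which holds for the score-function estimator with bounded rewards and baselines), whereas the paper's proof leaves this implicit when it writes $O(\|\theta_{k+1}-\theta_k\|_2^2)=O(\eta^2)$.
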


\begin{proof}
Part (i) follows from Lemma~\ref{lem:masked_softmax}: for every
$\theta$, each local softmax distribution satisfies
$\sum_{a\in\mathcal{A}_{i,t}}\pi_\theta^i(a\mid o_{i,t})=1$, and hence the induced marginal vector lies in $\mathcal{F}_t$.

For part (ii), by Lemma~\ref{lem:pg} and the chain rule,
$\nabla_\theta J_t(\theta)
=
\mathbf{J}(\theta)^\top
\nabla_{\mathbf{x}}\tilde f_t(\mathbf{x}(\theta);s_t)$,
where
$\mathbf{J}(\theta)
=
{\partial \mathbf{x}(\theta)}/{\partial \theta}$
is the Jacobian of the tabular softmax mapping.
For a fixed agent $i$ and observation $o$, the softmax Jacobian is
$J^i_o
=
\mathrm{diag}(\pi^i)
-
\pi^i(\pi^i)^\top$.
This matrix is symmetric positive semidefinite and has range contained in the tangent space of the probability simplex,
\[
T_{\Delta(\mathcal{A}_{i,t})}
=
\left\{
\mathbf{z}\in\mathbb{R}^{|\mathcal{A}_{i,t}|}:
\sum_{a\in\mathcal{A}_{i,t}} z_a=0
\right\}.
\]
It is the covariance matrix of a categorical distribution. In particular, its range property implies that first-order changes of the marginals remain tangent to the product simplex.

Using a first-order Taylor expansion of the smooth mapping $\theta \rightarrow\mathbf{x}(\theta)$ around $\theta_k$, we have
$\Delta\mathbf{x}_k=\mathbf{J}_k(\theta_{k+1}-\theta_k)+O(\|\theta_{k+1}-\theta_k\|_2^2)=
\eta\mathbf{J}_k\hat g_k^{(\theta)}
+O(\eta^2)$,
where the $O(\eta^2)$ term is uniform on compact parameter sets, or whenever the relevant second derivatives of the softmax map are bounded along the update.

Taking conditional expectation given $\theta_k$ and using
$\mathbb{E}\!\left[\hat g_k^{(\theta)}\mid\theta_k\right] =
\nabla_\theta J_t(\theta_k)
=\mathbf{J}_k^\top
\nabla_{\mathbf{x}}\tilde f_t(\mathbf{x}_k;s_t)$,
we obtain
$\mathbb{E}\!\left[
\Delta\mathbf{x}_k
\,\middle|\,\theta_k
\right]=\eta
\mathbf{J}_k\mathbf{J}_k^\top
\nabla_{\mathbf{x}}\tilde f_t(\mathbf{x}_k;s_t)
+
O(\eta^2)$.
Therefore,
\begin{align}
&\mathbb{E}\!\left[
\left\langle
\nabla_{\mathbf{x}}\tilde f_t(\mathbf{x}_k;s_t),
\Delta\mathbf{x}_k
\right\rangle
\,\middle|\,\theta_k
\right]
\nonumber\\
&\quad =
\eta
\nabla_{\mathbf{x}}\tilde f_t(\mathbf{x}_k;s_t)^\top
\mathbf{J}_k\mathbf{J}_k^\top
\nabla_{\mathbf{x}}\tilde f_t(\mathbf{x}_k;s_t)
+
O(\eta^2)
\nonumber\\
&\quad =
\eta
\left\|
\mathbf{J}_k^\top
\nabla_{\mathbf{x}}\tilde f_t(\mathbf{x}_k;s_t)
\right\|_2^2
+
O(\eta^2)
\nonumber\\
&\quad \geq -O(\eta^2),
\end{align}
where the second equality uses
$\mathbf{v}^\top\mathbf{J}_k\mathbf{J}_k^\top\mathbf{v}
=
\|\mathbf{J}_k^\top\mathbf{v}\|_2^2$

for any vector $\mathbf{v}$. This proves the claimed first-order PME-based ascent property.
\end{proof}

\subsection{Unbiased PME Gradient Estimation}
\label{app:unbiased}

\subsubsection{Proof of Proposition~\ref{prop:unbiased}}

\begin{proposition}[Unbiased Gradient Estimator]\label{prop:unbiased}
    Let $\mathbf{x}(\theta)$ be the marginal probabilities induced by $\pi_\theta$.
    For any agent $i$ and action $a \in \mathcal{A}_{i,t}$, the expected marginal contribution conditioned on agent $i$ selecting $a$ equals the partial derivative of the PME:
    \begin{equation}\label{eq:gradient_match}
        \mathbb{E}_{A_{t}^{-i} \sim \pi^{-i}} \bigl[ F_t((i,a) \mid A_{t}^{-i};\,s_t) \bigr] = \frac{\partial \tilde{f}_t}{\partial x_{(i,a)}}\bigl(\mathbf{x}(\theta)\bigr),
    \end{equation}
    where $\pi^{-i} \triangleq \prod_{j \in \mathcal{N}_t \setminus \{i\}} \pi_\theta^j$ denotes the joint policy of all agents except $i$.
\end{proposition}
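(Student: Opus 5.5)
The plan is to reduce the claim directly to Lemma~\ref{lem:gradient}, evaluated at the policy-induced marginals $\mathbf{x}(\theta)$. Lemma~\ref{lem:gradient} gives, for every $\mathbf{x}\in\mathcal{P}(\mathcal{I}_t)$,
\[
\frac{\partial \tilde f_t}{\partial x_{(i,a)}}(\mathbf{x})=\mathbb{E}_{A^{-i}\sim\mathcal{D}^{-i}(\mathbf{x})}\bigl[F_t((i,a)\mid A^{-i};s_t)\bigr].
\]
It therefore suffices to show two things. First, $\mathbf{x}(\theta)$ lies in the domain of this identity. Second, the distribution of $A_t^{-i}$ under $\pi^{-i}$, conditional on $s_t$, coincides with $\mathcal{D}^{-i}(\mathbf{x}(\theta))$.

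\textbf{Step 1: domain.} I will invoke Lemma~\ref{lem:masked_softmax}, which gives $\mathbf{x}(\theta)\in\mathcal{F}_t\subseteq\mathcal{P}(\mathcal{I}_t)$. Hence Lemma~\ref{lem:gradient} applies at $\mathbf{x}=\mathbf{x}(\theta)$.

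\textbf{Step 2: matching the distributions.} Fix $s_t$, so that every $o_{j,t}=\mathcal{O}(s_t,j)$ is fixed. Under the factorization \eqref{eq:factorized_policy}, the actions $a_{j,t}$ for $j\neq i$ are sampled independently with $\Pr(a_{j,t}=b)=\pi_\theta^j(b\mid o_{j,t})=x_{(j,b)}(\theta)$. Now take any $A^{-i}\in\mathcal{I}_t^{-i}$ in which every agent $j\neq i$ selects exactly one action. Its probability under $\pi^{-i}$ is $\prod_{j\neq i}x_{(j,b_j)}(\theta)$, which equals $\prod_{j\neq i}p_j(A^{-i};\mathbf{x}(\theta))$ by \eqref{eq:agent_prob}.

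Sets $A^{-i}$ in which some agent $j\neq i$ is idle have probability zero under $\pi^{-i}$. They also have probability zero under $\mathcal{D}^{-i}(\mathbf{x}(\theta))$, because on $\mathcal{F}_t$ the idle factor $1-\sum_b x_{(j,b)}(\theta)$ vanishes. This is the same observation used in the proof of Lemma~\ref{lem:equivalence}, now applied to the agents other than $i$. So the two laws agree on all of $\mathcal{I}_t^{-i}$.

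\textbf{Step 3: conclusion.} Substitute Step 2 into the identity above. Since $F_t((i,a)\mid A^{-i};s_t)$ is a deterministic function of $A^{-i}$, the two expectations coincide, which gives \eqref{eq:gradient_match}.

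I should also remark on the phrase ``conditioned on agent $i$ selecting $a$'' in the statement. By the factorization, $A_t^{-i}$ is independent of $a_{i,t}$ given $s_t$. Conditioning on $a_{i,t}=a$ therefore leaves the law of $A_t^{-i}$ unchanged, and the sampled difference reward with $a=a_{i,t}$ has the stated conditional mean.

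\textbf{Expected obstacle.} The main point of care is Step 2, specifically the treatment of idle outcomes and of the conditioning on $s_t$. Lemma~\ref{lem:gradient} is stated for the product distribution, which allows idling, whereas $\pi^{-i}$ never idles. I will handle this with the face property $\mathbf{x}(\theta)\in\mathcal{F}_t$, and I will state explicitly that all expectations are taken conditional on $s_t$, so that the observations, and hence the marginals, are deterministic.
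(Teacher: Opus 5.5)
Your proposal is correct and follows essentially the same route as the paper: independence of $A_t^{-i}$ from $a_{i,t}$ under the factorized policy, identification of the law of $A_t^{-i}$ under $\pi^{-i}$ with $\mathcal{D}^{-i}(\mathbf{x}(\theta))$, and then Lemma~\ref{lem:gradient}. Your explicit handling of idle outcomes via $\mathbf{x}(\theta)\in\mathcal{F}_t$ spells out the step the paper covers by citing Lemma~\ref{lem:equivalence}; note that any factorized categorical policy puts $\mathbf{x}(\theta)$ on $\mathcal{F}_t$, not only the masked softmax of Lemma~\ref{lem:masked_softmax}.
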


\begin{proof}
The key observation is that the factorized structure of the policy $\pi_\theta$ implies conditional independence: agent $i$'s action choice does not affect the distribution of other agents' actions.

Under the factorized policy $\pi_\theta = \prod_{k \in \mathcal{N}_t} \pi_\theta^k$, each agent samples its action independently. Let $A_t = \{(k, a_{k,t}) : k \in \mathcal{N}_t\}$ denote the joint action and $A_t^{-i} = A_t \setminus \{(i, a_{i,t})\}$ the joint action of all agents except $i$. Since agent $i$'s action $a_{i,t}$ is sampled independently of other agents' actions, conditioning on $a_{i,t} = a$ does not affect the distribution of $A_t^{-i}$:
\begin{equation}\label{eq:app_cond_indep}
    \Pr(A_t^{-i} = A^{-i} \mid a_{i,t} = a) = \Pr(A_t^{-i} = A^{-i}) = \prod_{k \in \mathcal{N}_t \setminus \{i\}} \pi_\theta^k(a_{k,t} \mid o_{k,t}).
\end{equation}

By Lemma~\ref{lem:equivalence}, the marginal probabilities induced by $\pi_\theta$ satisfy $x_{(k,a')}(\theta) = \pi_\theta^k(a' \mid o_{k,t})$ for all $k \in \mathcal{N}_t$ and $a' \in \mathcal{A}_{k,t}$. Consequently, the distribution of $A_t^{-i}$ under $\pi^{-i}$ is precisely $\mathcal{D}^{-i}(\mathbf{x}(\theta))$, the product distribution over agents $\mathcal{N}_t \setminus \{i\}$ defined in Section~\ref{sec:pme_definition}.

Using the conditional independence in \eqref{eq:app_cond_indep}, we have
\begin{align}
    \mathbb{E}_{A_t^{-i} \sim \pi^{-i}} \bigl[ F_t((i, a) \mid A_t^{-i}) \mid a_{i,t} = a \bigr]
    &= \mathbb{E}_{A_t^{-i} \sim \pi^{-i}} \bigl[ F_t((i, a) \mid A_t^{-i}) \bigr] \nonumber \\[4pt]
    &= \mathbb{E}_{A^{-i} \sim \mathcal{D}^{-i}(\mathbf{x}(\theta))} \bigl[ F_t((i, a) \mid A^{-i}) \bigr]. \label{eq:app_expect_match}
\end{align}

By Lemma~\ref{lem:gradient}, the right-hand side of \eqref{eq:app_expect_match} equals the partial derivative of the PME:
\begin{equation}
    \mathbb{E}_{A^{-i} \sim \mathcal{D}^{-i}(\mathbf{x}(\theta))} \bigl[ F_t((i, a) \mid A^{-i}) \bigr] = \frac{\partial \tilde{f}_t}{\partial x_{(i,a)}}\bigl(\mathbf{x}(\theta)\bigr). \nonumber
\end{equation}
Combining with \eqref{eq:app_expect_match} completes the proof.
\end{proof}

\subsubsection{Proof of Lemma~\ref{lem:pg}}

\begin{lemma}[Policy Gradient via Partition Extension]\label{lem:pg}
Let $x_{(i,a)}(\theta) = \pi_\theta^i(a \mid o_{i,t})$ denote the marginal probabilities induced by the policy $\pi_\theta^i$. The gradient of the expected stage utility $J_t(\theta) \triangleq \mathbb{E}_{A_t \sim \pi_\theta}[F_t(A_t)]$ is
 $\nabla_\theta J_t(\theta) = \nabla_\theta \tilde{f}_t(\mathbf{x}(\theta))$. 
\end{lemma}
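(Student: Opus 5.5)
The plan is to reduce the statement to Lemma~\ref{lem:equivalence} and then differentiate. Fix $t$ and $s_t$, so the observations $o_{i,t}$ are fixed.

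\textbf{Step 1 (identity of functions of $\theta$).} For every $\theta$, the masked-softmax policy induces $\mathbf{x}(\theta)\in\mathcal{F}_t$ by Lemma~\ref{lem:masked_softmax}. Lemma~\ref{lem:equivalence} then gives
\[
J_t(\theta)=\mathbb{E}_{A_t\sim\pi_\theta}[F_t(A_t;s_t)\mid s_t]=\tilde f_t(\mathbf{x}(\theta);s_t).
\]
This is an identity between two functions of $\theta$ holding on all of $\mathbb{R}^d$, not merely at a point. It therefore suffices to check that both sides are differentiable; their gradients then coincide.

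\textbf{Step 2 (differentiability).} $J_t(\theta)=\sum_{A\in\mathcal{I}_t}F_t(A;s_t)\prod_i\pi^i_\theta(a_i\mid o_{i,t})$ is a finite sum of products of softmax probabilities. Each of these probabilities is smooth in $\theta$ whenever the logits $[\mathbf{W}_h h_{i,t}]_a$ are differentiable in $\theta$, which I assume for the encoder. By Theorem~\ref{thm:properties}(3), $\tilde f_t$ is a polynomial, hence smooth on an open neighborhood of $\mathcal{P}(\mathcal{I}_t)$. The composition $\theta\mapsto\tilde f_t(\mathbf{x}(\theta))$ is therefore differentiable.

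\textbf{Step 3 (chain rule).} Applying the chain rule gives the explicit form
\[
\nabla_\theta\tilde f_t(\mathbf{x}(\theta))=\mathbf{J}(\theta)^\top\nabla_{\mathbf{x}}\tilde f_t(\mathbf{x}(\theta)),
\]
which is the form used in Proposition~\ref{prop:tabular_equivalence}.

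\textbf{Main obstacle.} The subtle point is that $\tilde f_t$ is defined on the whole polytope, while $\mathbf{x}(\theta)$ is confined to the face $\mathcal{F}_t$, where idle mass is zero. One might worry that $\nabla_{\mathbf{x}}\tilde f_t$ includes directions leaving $\mathcal{F}_t$. This is harmless: the chain rule only uses the directional derivative along $\mathbf{J}(\theta)v$. By the softmax Jacobian structure $\mathrm{diag}(\pi^i)-\pi^i(\pi^i)^\top$, these directions lie in the tangent space of $\mathcal{F}_t$. Along them $\tilde f_t$ agrees with the polynomial extension, so no boundary issue arises.

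\textbf{Cross-check (optional).} As an independent verification, one can expand $\nabla_\theta J_t$ by the score-function identity. Using $\nabla\pi^i=\sum_a\nabla x_{(i,a)}$ and Lemma~\ref{lem:gradient}, this yields
\[
\sum_{i,a}\nabla_\theta x_{(i,a)}\,\mathbb{E}\big[F_t(A^{-i}\cup\{(i,a)\})\big].
\]
The extra term $\mathbb{E}[F_t(A^{-i})]$ in the marginal gain multiplies $\sum_a\nabla x_{(i,a)}=0$, so it vanishes. The result matches the chain-rule expression.
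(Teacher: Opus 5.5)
Your proof is correct and takes the same route as the paper. Like the paper, you use Lemma~\ref{lem:equivalence} to get the identity $J_t(\theta)=\tilde f_t(\mathbf{x}(\theta);s_t)$ for all $\theta$, then apply the chain rule using the smoothness of the polynomial $\tilde f_t$ (Theorem~\ref{thm:properties}(3)); your tangent-space discussion and score-function cross-check are valid but not needed, since $\tilde f_t$ is a polynomial on all of $\mathbb{R}^{|\Omega_t|}$.
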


\begin{proof}
The result follows from applying the chain rule to the composition of the PME with the policy-induced marginals. By Lemma~\ref{lem:equivalence}, the expected stage utility under the factorized policy $\pi_\theta$ equals the PME evaluated at the induced marginals:
\begin{equation}
    J_t(\theta) = \mathbb{E}_{A_t \sim \pi_\theta}[F_t(A_t)] = \tilde{f}_t(\mathbf{x}(\theta)), \nonumber
\end{equation}
where $\mathbf{x}(\theta) \in \mathcal{P}(\mathcal{I}_t)$ is the vector with components $x_{(i,a)}(\theta) = \pi_\theta^i(a \mid o_{i,t})$ for all $(i, a) \in \Omega_t$.

Since $J_t(\theta) = \tilde{f}_t(\mathbf{x}(\theta))$ is a composition of the differentiable function $\tilde{f}_t: \mathcal{P}(\mathcal{I}_t) \to \mathbb{R}$ (cf.\ Theorem~\ref{thm:properties}(3)) with the mapping $\theta \rightarrow \mathbf{x}(\theta)$, the multivariate chain rule yields
\begin{equation}\label{eq:app_chain_rule}
    \nabla_\theta J_t(\theta) = \sum_{(i,a) \in \Omega_t} \frac{\partial \tilde{f}_t}{\partial x_{(i,a)}}(\mathbf{x}(\theta)) \cdot \nabla_\theta x_{(i,a)}(\theta).
\end{equation}
Since the ground set decomposes as $\Omega_t = \bigcup_{i \in \mathcal{N}_t} \Omega_{i,t}$ with $\Omega_{i,t} = \{(i, a) : a \in \mathcal{A}_{i,t}\}$, and $x_{(i,a)}(\theta) = \pi_\theta^i(a \mid o_{i,t})$, we can rewrite \eqref{eq:app_chain_rule} as
\begin{equation}
    \nabla_\theta J_t(\theta) = \sum_{i \in \mathcal{N}_t} \sum_{a \in \mathcal{A}_{i,t}} \frac{\partial \tilde{f}_t}{\partial x_{(i,a)}}(\mathbf{x}(\theta)) \cdot \nabla_\theta \pi_\theta^i(a \mid o_{i,t}). \nonumber
\end{equation}
This is precisely $\nabla_\theta \tilde{f}_t(\mathbf{x}(\theta))$, completing the proof.
\end{proof}

\subsubsection{Unbiased PME Gradient Estimator via Policy Gradient (Lemma~\ref{lem:spg})}

\textbf{Lemma~\ref{lem:spg}} (Stagewise Submodular Policy Gradient Estimator).
  \emph{Fix a round $t$ and state $s_t$. Let
$x_{(i,a)}(\theta)=\pi_\theta^i(a\mid o_{i,t})$ and define the conditional stage objective $J_t(\theta;s_t)
\triangleq
\mathbb{E}_{A_t\sim\pi_\theta}
\!\left[F_t(A_t;s_t)\mid s_t\right]
=
\tilde f_t(\mathbf{x}(\theta);s_t)$.
Then
$\nabla_\theta J_t(\theta;s_t)
=
\mathbb{E}_{A_t\sim\pi_\theta}
\left[
\sum_{i\in\mathcal{N}_t}
\nabla_\theta \log \pi_\theta^i(a_{i,t}\mid o_{i,t})
\,
\bigl(F_t((i,a_{i,t})\mid A_t^{-i};s_t)-b_{i,t}
\bigr)
\right]$,
where $b_{i,t}$ is any term independent of $a_{i,t}$ conditional on $s_t$, $o_{i,t}$, and the other agents' sampled actions.}

\begin{proof}
Fix a round $t$ and condition on the state $s_t$. Then the local
observations $\{o_{i,t}\}_{i\in\mathcal{N}_t}$ are fixed through the
observation map, and the only randomness in the conditional stage objective
comes from the factorized categorical policy
\[
\pi_\theta(\mathbf{a}_t\mid\mathbf{o}_t)
=
\prod_{i\in\mathcal{N}_t}
\pi_\theta^i(a_{i,t}\mid o_{i,t}).
\]
For notational simplicity, we suppress the conditioning on $s_t$ in
intermediate expectations.

By the score-function identity,
\begin{align}
\nabla_\theta J_t(\theta;s_t)
&=
\nabla_\theta
\sum_{\mathbf{a}_t}
\pi_\theta(\mathbf{a}_t\mid\mathbf{o}_t)
F_t(A_t;s_t) \nonumber\\
&=
\sum_{\mathbf{a}_t}
\pi_\theta(\mathbf{a}_t\mid\mathbf{o}_t)
\nabla_\theta
\log \pi_\theta(\mathbf{a}_t\mid\mathbf{o}_t)
F_t(A_t;s_t) \nonumber\\
&=
\mathbb{E}_{A_t\sim\pi_\theta}
\left[
\nabla_\theta
\log \pi_\theta(\mathbf{a}_t\mid\mathbf{o}_t)
F_t(A_t;s_t)
\right].
\label{eq:app_stage_score}
\end{align}
Using the factorization of the joint policy,
\[
\nabla_\theta
\log \pi_\theta(\mathbf{a}_t\mid\mathbf{o}_t)
=
\sum_{i\in\mathcal{N}_t}
\nabla_\theta
\log \pi_\theta^i(a_{i,t}\mid o_{i,t}),
\]
Thus, ~\eqref{eq:app_stage_score} becomes
\begin{equation}
\nabla_\theta J_t(\theta;s_t)
=
\mathbb{E}_{A_t\sim\pi_\theta}
\left[
\sum_{i\in\mathcal{N}_t}
\nabla_\theta
\log \pi_\theta^i(a_{i,t}\mid o_{i,t})
F_t(A_t;s_t)
\right].
\label{eq:app_factorized_stage_score}
\end{equation}

For each agent $i$, decompose the team utility as
\begin{equation}
F_t(A_t;s_t)
=
F_t((i,a_{i,t})\mid A_t^{-i};s_t)
+
F_t(A_t^{-i};s_t).
\label{eq:app_stage_decomposition}
\end{equation}
We next prove that the second term in~\eqref{eq:app_stage_decomposition}
does not contribute to the expectation in~\eqref{eq:app_factorized_stage_score}.
Condition on $s_t$, $o_{i,t}$, and the sampled actions of all agents except
$i$, equivalently on $A_t^{-i}$. Under this conditioning,
$F_t(A_t^{-i};s_t)$ is fixed and independent of $a_{i,t}$. Hence
\begin{align}
&\mathbb{E}_{a_{i,t}\sim\pi_\theta^i}
\left[
\nabla_\theta \log \pi_\theta^i(a_{i,t}\mid o_{i,t})
F_t(A_t^{-i};s_t)
\mid s_t,o_{i,t},A_t^{-i}
\right] \nonumber\\
&\qquad =
F_t(A_t^{-i};s_t)
\sum_{a\in\mathcal{A}_{i,t}}
\pi_\theta^i(a\mid o_{i,t})
\nabla_\theta \log \pi_\theta^i(a\mid o_{i,t}) \nonumber\\
&\qquad =
F_t(A_t^{-i};s_t)
\sum_{a\in\mathcal{A}_{i,t}}
\nabla_\theta \pi_\theta^i(a\mid o_{i,t}) \nonumber\\
&\qquad =
F_t(A_t^{-i};s_t)
\nabla_\theta
\sum_{a\in\mathcal{A}_{i,t}}
\pi_\theta^i(a\mid o_{i,t})
=
F_t(A_t^{-i};s_t)\nabla_\theta 1
=
0.
\label{eq:app_counterfactual_zero}
\end{align}
Taking expectation over $A_t^{-i}$ gives
\begin{equation}
\mathbb{E}_{A_t\sim\pi_\theta}
\left[
\nabla_\theta \log \pi_\theta^i(a_{i,t}\mid o_{i,t})
F_t(A_t^{-i};s_t)
\right]
=
0.
\label{eq:app_counterfactual_zero_uncond}
\end{equation}

The same argument applies to any baseline $b_{i,t}$ that is independent of
$a_{i,t}$ conditional on $s_t$, $o_{i,t}$, and $A_t^{-i}$. Indeed,
\begin{align}
&\mathbb{E}_{a_{i,t}\sim\pi_\theta^i}
\left[
\nabla_\theta \log \pi_\theta^i(a_{i,t}\mid o_{i,t})
b_{i,t}
\mid s_t,o_{i,t},A_t^{-i}
\right] \nonumber\\
&\qquad =
b_{i,t}
\sum_{a\in\mathcal{A}_{i,t}}
\nabla_\theta \pi_\theta^i(a\mid o_{i,t})
=
b_{i,t}\nabla_\theta 1
=
0.
\label{eq:app_baseline_zero}
\end{align}

Substituting the decomposition~\eqref{eq:app_stage_decomposition}
into~\eqref{eq:app_factorized_stage_score}, and then applying
\eqref{eq:app_counterfactual_zero_uncond} and~\eqref{eq:app_baseline_zero},
we obtain
\[
\nabla_\theta J_t(\theta;s_t)
=
\mathbb{E}_{A_t\sim\pi_\theta}
\left[
\sum_{i\in\mathcal{N}_t}
\nabla_\theta \log \pi_\theta^i(a_{i,t}\mid o_{i,t})
\,
\bigl(
F_t((i,a_{i,t})\mid A_t^{-i};s_t)-b_{i,t}
\bigr)
\right].
\]
This proves the stagewise policy-gradient identity.
\end{proof}

\section{Proofs for Section \ref{sec:theory}}
\label{app:proofs_sec5}

This appendix provides complete proofs for all theoretical results in Section~\ref{sec:theory}, beginning with the two preparatory lemmas (Lemmas~\ref{lem:masked_softmax} and~\ref{lem:gradient_properties})
that underpin the main approximation and regret theorems.

\subsection{Auxiliary Results for the Main Guarantees}
\label{app:theory_auxiliary}

\subsubsection{Stochastic Gradient Properties 
  (Lemma~\ref{lem:gradient_properties})}
\label{app:proof_gradient_properties}

\begin{lemma}[Idealized Projected Stochastic Gradient Properties]
\label{lem:gradient_properties}
Under Assumption~\ref{ass:submodular}, define the full-information
difference-reward estimator $\mathbf{g}_t \in \mathbb{R}^{|\Omega_t|}$ by $(\mathbf{g}_t)_{(i,a)}
  \triangleq F_t((i,a)\mid A_t^{-i};\,s_t),
  \qquad \forall (i,a)\in\Omega_t$,
where $A_t^{-i}\sim \mathcal{D}^{-i}(\mathbf{x}_t)$. Then it  satisfies, for all $t \in [T]$, 
$\mathbf{x} \in \mathcal{F}_t$, and $s_t \in \mathcal{S}$:
\begin{enumerate}[label=(\roman*)]
  \item \textbf{Bounded gradient:} 
    $\|\nabla_{\mathbf{x}} \tilde{f}_t(\mathbf{x};\,s_t)\|_2 
    \leq B\sqrt{N_{\max}N^a_{\max}} \triangleq G$;
  \item \textbf{Unbiased estimation:}
    $\mathbb{E}[\mathbf{g}_t \mid \mathbf{x}_t,s_t] 
= \nabla_{\mathbf{x}} \tilde{f}_t(\mathbf{x}_t;\,s_t)$;
  \item \textbf{Bounded variance:}
    $\mathbb{E}[\|\mathbf{g}_t 
    - \nabla_{\mathbf{x}} \tilde{f}_t(\mathbf{x}_t;\,s_t)\|_2^2 
    \mid \mathbf{x}_t,s_t] 
    \leq N_{\max}N^a_{\max}B^2 \triangleq \sigma^2$.
\end{enumerate}
\end{lemma}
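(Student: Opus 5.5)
The three claims follow from two facts: Lemma~\ref{lem:gradient}, which expresses every coordinate of $\nabla_{\mathbf{x}}\tilde f_t$ as an expected marginal gain, and the bounded-marginals clause of Assumption~\ref{ass:submodular}(4). I will treat the items in the order (ii), (i), (iii), since the variance bound uses the unbiasedness.

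For (ii), fix $\mathbf{x}_t$ and $s_t$. Each coordinate $(\mathbf{g}_t)_{(i,a)}=F_t((i,a)\mid A_t^{-i};s_t)$ is a function of $A_t^{-i}\sim\mathcal{D}^{-i}(\mathbf{x}_t)$ alone. Conditioning on $(\mathbf{x}_t,s_t)$ and taking expectations gives
\[
\mathbb{E}[(\mathbf{g}_t)_{(i,a)}\mid \mathbf{x}_t,s_t]=\mathbb{E}_{A^{-i}\sim\mathcal{D}^{-i}(\mathbf{x}_t)}\bigl[F_t((i,a)\mid A^{-i};s_t)\bigr]=\frac{\partial \tilde f_t}{\partial x_{(i,a)}}(\mathbf{x}_t;s_t),
\]
where the last equality is exactly Lemma~\ref{lem:gradient}. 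The same equality also follows from Proposition~\ref{prop:unbiased}, because on $\mathcal{F}_t$ the distribution $\mathcal{D}^{-i}(\mathbf{x}_t)$ coincides with the law of the other agents' categorical actions.

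For (i), every realization $A^{-i}$ has at most one element from each other agent, so $A^{-i}\in\mathcal{I}_t$. Assumption~\ref{ass:submodular}(4) then gives $0\le F_t((i,a)\mid A^{-i};s_t)\le B$ pointwise. Averaging these bounds, each partial derivative lies in $[0,B]$. The vector has $|\Omega_t|\le N_{\max}N^a_{\max}$ coordinates, so
\[
\|\nabla\tilde f_t\|_2^2\le N_{\max}N^a_{\max}B^2 .
\]
Taking square roots gives $G=B\sqrt{N_{\max}N^a_{\max}}$.

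For (iii), expand coordinatewise and use the variance inequality $\mathrm{Var}(Y)\le\mathbb{E}[Y^2]$:
\[
\mathbb{E}\|\mathbf{g}_t-\nabla\tilde f_t\|_2^2=\sum_{(i,a)}\mathrm{Var}\bigl((\mathbf{g}_t)_{(i,a)}\mid\mathbf{x}_t,s_t\bigr)\le\sum_{(i,a)}\mathbb{E}\bigl[(\mathbf{g}_t)_{(i,a)}^2\bigr]\le |\Omega_t|\,B^2 .
\]
The first equality uses (ii), which makes each centered coordinate a variance. The last step uses the pointwise bound $0\le(\mathbf{g}_t)_{(i,a)}\le B$ from (i). Since $|\Omega_t|\le N_{\max}N^a_{\max}$, this yields $\sigma^2=N_{\max}N^a_{\max}B^2$. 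A sharper constant, $B^2/4$ per coordinate via Popoviciu's inequality, is available but not needed.

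I expect the only delicate point to be feasibility. Assumption~\ref{ass:submodular}(4) bounds marginals only for $A\in\mathcal{I}_t$, so I must confirm that $A^{-i}$ is always in $\mathcal{I}_t$. This holds because $\mathcal{D}^{-i}$ selects at most one action per agent and never includes an element of agent $i$, so $(i,a)\notin A^{-i}$ and the marginal $F_t((i,a)\mid A^{-i};s_t)$ is well defined. A second point is the dimension count across changing $\mathcal{N}_t$. I will handle it by counting coordinates after the zero-padding embedding $\iota_t$; padded coordinates contribute zero to both the gradient and the noise, so the bound $|\Omega_t|\le N_{\max}N^a_{\max}$ suffices. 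The remaining steps are routine.
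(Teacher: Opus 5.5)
Your proposal is correct and follows essentially the same route as the paper: Lemma~\ref{lem:gradient} with the bounded-marginals assumption for (i), Lemma~\ref{lem:gradient} (equivalently Proposition~\ref{prop:unbiased}) for (ii), and the coordinatewise bound $\mathrm{Var}\le\mathbb{E}[X^2]$ with $(\mathbf{g}_t)_{(i,a)}\in[0,B]$ for (iii). Your explicit checks that $A^{-i}\in\mathcal{I}_t$, so the assumption applies, and that $|\Omega_t|\le N_{\max}N^a_{\max}$ are slightly tidier than the paper's count $|\mathcal{N}_t|\cdot|\mathcal{A}_{i,t}|$, but they do not change the argument.
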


\begin{proof}
We prove each property in turn.

\paragraph{(i) Bounded gradient.}
By Lemma~\ref{lem:gradient}, the partial derivative of the PME
satisfies
\begin{equation}
  \frac{\partial \tilde{f}_t}{\partial x_{(i,a)}}(\mathbf{x};\,s_t)
  = \mathbb{E}_{A^{-i} \sim \mathcal{D}^{-i}(\mathbf{x})}
  \!\left[F_t\!\left((i,a)\mid A^{-i};\,s_t\right)\right].
\end{equation}
By Assumption~\ref{ass:submodular}(4), the marginal gain satisfies
$0 \leq F_t(e \mid A;\,s_t) \leq B$ for all $A \in \mathcal{I}_t$,
$e \in \Omega_t$, and all $s_t \in \mathcal{S}$.
Hence each partial derivative satisfies
$0 \leq \frac{\partial \tilde{f}_t}{\partial x_{(i,a)}} \leq B$
uniformly. Summing over all $(i,a) \in \Omega_t$:
\begin{equation}
  \|\nabla_{\mathbf{x}}\tilde{f}_t(\mathbf{x};\,s_t)\|_2^2
  = \sum_{i \in \mathcal{N}_t}\sum_{a \in \mathcal{A}_{i,t}}
  \!\left(\frac{\partial \tilde{f}_t}{\partial x_{(i,a)}}\right)^{\!2}
  \leq |\mathcal{N}_t|\cdot|\mathcal{A}_{i,t}|\cdot B^2
  \leq N_{\max}N^a_{\max}B^2,
\end{equation}
so $\|\nabla_{\mathbf{x}}\tilde{f}_t(\mathbf{x};\,s_t)\|_2
\leq B\sqrt{N_{\max}N^a_{\max}} \triangleq G$.

\paragraph{(ii) Unbiased estimation.}
The estimator is defined as
$(\mathbf{g}_t)_{(i,a)}=F_t((i,a)\mid A_t^{-i};s_t)$,
where $A_t^{-i}\sim\mathcal D^{-i}(\mathbf{x}_t)$.
By Proposition~\ref{prop:unbiased}, for any action $a \in
\mathcal{A}_{i,t}$,
\begin{equation}
\mathbb{E}\!\left[
(\mathbf{g}_t)_{(i,a)}
\mid \mathbf{x}_t,s_t
\right]=
\mathbb{E}_{A_t^{-i}\sim\mathcal{D}^{-i}(\mathbf{x}_t)}
\left[
F_t((i,a)\mid A_t^{-i};\,s_t)
\right]=
\frac{\partial \tilde f_t}{\partial x_{(i,a)}}(\mathbf{x}_t;\,s_t).
\end{equation}
Since this holds for every coordinate $(i,a)$, we obtain
$\mathbb{E}[\mathbf{g}_t \mid \mathbf{x}_t,s_t]
= \nabla_{\mathbf{x}}\tilde{f}_t(\mathbf{x}_t;\,s_t)$.

\paragraph{(iii) Bounded variance.}
We bound the mean squared error using the identity
$\|\mathbf{u}\|_2^2 = \sum_{(i,a)} u_{(i,a)}^2$,
which requires no independence assumption across coordinates:
\begin{align}
  \mathbb{E}\!\left[\|\mathbf{g}_t
  - \nabla_{\mathbf{x}}\tilde{f}_t(\mathbf{x}_t;\,s_t)\|_2^2
  \mid \mathbf{x}_t,s_t\right]
  &= \sum_{(i,a)\in\Omega_t}
  \mathbb{E}\!\left[\!\left((\mathbf{g}_t)_{(i,a)}
  - \frac{\partial\tilde{f}_t}{\partial x_{(i,a)}}\right)^{\!2}
  \!\mid \mathbf{x}_t,s_t\right] \nonumber \\
  &\leq \sum_{(i,a)\in\Omega_t}
  \mathbb{E}\!\left[(\mathbf{g}_t)_{(i,a)}^2
  \mid \mathbf{x}_t,s_t\right] \nonumber \\
  &\leq \sum_{(i,a)\in\Omega_t} B^2
  \;\leq\; N_{\max}N^a_{\max}B^2
  \;\triangleq\; \sigma^2,
\end{align}
where the first inequality uses $\mathrm{Var}[X] \leq \mathbb{E}[X^2]$
for nonnegative $X$, and the second uses
$(\mathbf{g}_t)_{(i,a)} \in [0,B]$.
\end{proof}

\subsubsection{DR-Submodularity on Boundary Face 
  (Lemma~\ref{lem:dr_boundary})}
\label{app:proof_dr_boundary}

\begin{lemma}[Restricted DR First-Order Inequality on the Categorical Face]
\label{lem:dr_boundary}
Let $\mathcal{F}_t$ be the categorical-policy face of the partition matroid polytope. Then:
\begin{enumerate}[label=(\roman*)]
  \item $\mathcal{F}_t$ is a convex polytope and a face of
  $\mathcal{P}(\mathcal{I}_t)$.

  \item For any $\mathbf{x},\mathbf{y}\in\mathcal{F}_t$,
  \begin{equation}
  \label{eq:ambient_dr_gap}
  \frac{1}{2}\tilde f_t(\mathbf{y};s_t)
  -
  \tilde f_t(\mathbf{x};s_t)
  \le
  \frac{1}{2}
  \left\langle
  \nabla \tilde f_t(\mathbf{x};s_t),
  \mathbf{y}-\mathbf{x}
  \right\rangle .
  \end{equation}

  \item The global maximum of $\tilde f_t$ over
  $\mathcal{P}(\mathcal{I}_t)$ is attained on $\mathcal{F}_t$. Consequently,
  \begin{equation}
  \label{eq:boundary_opt}
  \max_{\mathbf{x}\in\mathcal{F}_t}
  \tilde f_t(\mathbf{x};s_t)
  =
  \max_{\mathbf{x}\in\mathcal{P}(\mathcal{I}_t)}
  \tilde f_t(\mathbf{x};s_t)
  \ge
  \mathrm{OPT}_t(s_t).
  \end{equation}
\end{enumerate}
\end{lemma}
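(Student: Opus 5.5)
We prove the three parts separately. Part (ii) carries the weight, and for it we plan to argue directly at the level of set functions, coupling two random joint actions, instead of going through the continuous DR property.

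\emph{Part (i).} $\mathcal{P}(\mathcal{I}_t)$ is a polytope cut out by linear inequalities. $\mathcal{F}_t$ is obtained by turning the valid inequalities $\sum_{a}x_{(i,a)}\le 1$ into equalities for every $i$. An intersection of a polytope with supporting hyperplanes of valid inequalities is a face, and it is convex. $\mathcal{F}_t$ is nonempty because it is a product of probability simplices.

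\emph{Part (iii).} Take any $\mathbf{x}\in\mathcal{P}(\mathcal{I}_t)$. For each agent, move its idle mass $1-\sum_a x_{(i,a)}$ onto an arbitrary action. This yields a point $\mathbf{x}'\in\mathcal{F}_t$ with $\mathbf{x}'\ge\mathbf{x}$. Since $\nabla\tilde f_t\ge\mathbf{0}$ on the convex set $\mathcal{P}(\mathcal{I}_t)$ by Theorem~\ref{thm:properties}(4), integrating along the segment from $\mathbf{x}$ to $\mathbf{x}'$ gives $\tilde f_t(\mathbf{x}')\ge\tilde f_t(\mathbf{x})$. So the maximum over $\mathcal{P}(\mathcal{I}_t)$, which exists by compactness, is attained on $\mathcal{F}_t$. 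For the bound against $\mathrm{OPT}_t$, let $A^\star$ be optimal and extend it by arbitrary actions for the idle agents. By monotonicity of $F_t$ the extended set has value at least $\mathrm{OPT}_t$, and its indicator vector lies in $\mathcal{F}_t$ with PME value equal to its $F_t$ value.

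\emph{Part (ii).} I would avoid the usual route $\langle\nabla f(\mathbf{x}),\mathbf{y}-\mathbf{x}\rangle\ge f(\mathbf{x}\vee\mathbf{y})+f(\mathbf{x}\wedge\mathbf{y})-2f(\mathbf{x})$, because $\mathbf{x}\vee\mathbf{y}$ typically leaves $\mathcal{P}(\mathcal{I}_t)$. This is the step I expect to be the main obstacle. Instead, use Lemma~\ref{lem:gradient} together with $\sum_a x_{(i,a)}=1$ on $\mathcal{F}_t$:
\[
\sum_a x_{(i,a)}\,\partial_{(i,a)}\tilde f_t(\mathbf{x})=\tilde f_t(\mathbf{x})-\mathbb{E}[F_t(A^{-i})],
\qquad
\sum_a y_{(i,a)}\,\partial_{(i,a)}\tilde f_t(\mathbf{x})=\mathbb{E}[F_t(A^{-i}\cup\{b_i\})]-\mathbb{E}[F_t(A^{-i})].
\]
Here $A\sim\mathcal{D}(\mathbf{x})$ and $B=\{b_i\}_i\sim\mathcal{D}(\mathbf{y})$ are drawn independently. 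Subtracting and summing over $i$ gives
\[
\langle\nabla\tilde f_t(\mathbf{x}),\mathbf{y}-\mathbf{x}\rangle=\mathbb{E}\Bigl[\sum_i\bigl(F_t(A^{-i}\cup\{b_i\})-F_t(A)\bigr)\Bigr].
\]
Now work pointwise for fixed $A$ and $B$. Monotonicity and submodularity give $F_t(A\cup\{b_i\})\le F_t(A^{-i}\cup\{b_i\})+F_t(a_i\mid A^{-i})$. Hence each summand is at least $F_t(b_i\mid A)-F_t(a_i\mid A^{-i})$. Submodularity then gives two telescoping bounds:
\[
\sum_i F_t(b_i\mid A)\ge F_t(A\cup B)-F_t(A)\ge F_t(B)-F_t(A),
\qquad
\sum_i F_t(a_i\mid A^{-i})\le F_t(A).
\]
The second uses $F_t(\emptyset)=0$. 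The first needs submodularity on all of $2^{\Omega_t}$, which Assumption~\ref{ass:submodular} provides. Combining, the sum is at least $F_t(B)-2F_t(A)$. Taking expectations yields $\langle\nabla\tilde f_t(\mathbf{x}),\mathbf{y}-\mathbf{x}\rangle\ge\tilde f_t(\mathbf{y})-2\tilde f_t(\mathbf{x})$, and dividing by two gives \eqref{eq:ambient_dr_gap}.
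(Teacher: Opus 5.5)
Your proposal is correct and follows essentially the same route as the paper. Part (ii) uses the same independent coupling $A\sim\mathcal{D}(\mathbf{x})$, $B\sim\mathcal{D}(\mathbf{y})$ and the same two submodularity bounds: $F_t(B)-F_t(A)\le F_t(A\cup B)-F_t(A)\le\sum_i F_t(b_i\mid A^{-i})$, and $\sum_i F_t(a_i\mid A^{-i})\le F_t(A)$. The only change is that you rewrite $\langle\nabla\tilde f_t(\mathbf{x}),\mathbf{y}-\mathbf{x}\rangle$ via Lemma~\ref{lem:gradient} before the pointwise step, whereas the paper does it after. Parts (i) and (iii) match the paper's supporting-hyperplane and idle-mass-reassignment arguments; your explicit integration of $\nabla\tilde f_t\ge\mathbf{0}$ along the segment is just a more careful justification of the monotonicity step the paper cites.
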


\begin{proof}
\textbf{(i)}
The polytope $\mathcal{P}(\mathcal{I}_t)$ is defined by the box constraints
$0\le x_{(i,a)}\le 1$ and the partition constraints
$\sum_{a\in\mathcal{A}_{i,t}}x_{(i,a)}\le 1$ for all
$i\in\mathcal{N}_t$. For each agent $i$, the affine hyperplane
\[
H_i
\triangleq
\left\{
\mathbf{x}:
\sum_{a\in\mathcal{A}_{i,t}}x_{(i,a)}=1
\right\}
\]
is a supporting hyperplane of $\mathcal{P}(\mathcal{I}_t)$, because it is the
tight form of the valid inequality
$\sum_{a\in\mathcal{A}_{i,t}}x_{(i,a)}\le 1$. Therefore,
\[
\mathcal{F}_t
=
\mathcal{P}(\mathcal{I}_t)\cap
\bigcap_{i\in\mathcal{N}_t}H_i
\]
is a face of $\mathcal{P}(\mathcal{I}_t)$. Since it is an intersection of a
polytope with affine hyperplanes, it is also a convex polytope.

\textbf{(ii)}
We prove the restricted first-order inequality directly on the categorical
face. This avoids relying on a lattice argument over the full hypercube,
which may not preserve the partition constraints.

Fix any $\mathbf{x},\mathbf{y}\in\mathcal{F}_t$. Let
$A\sim\mathcal{D}(\mathbf{x})$ and $Y\sim\mathcal{D}(\mathbf{y})$ be
independent categorical joint-action sets. Since
$\mathbf{x},\mathbf{y}\in\mathcal{F}_t$, both $A$ and $Y$ select exactly one
action from each active agent subset almost surely. Define
$A=\{e_i:i\in\mathcal{N}_t\},
Y=\{u_i:i\in\mathcal{N}_t\}$,
where $e_i,u_i\in\Omega_{i,t}$ are the selected agent-action pairs of agent
$i$, and $A^{-i}=A\setminus\{e_i\}$.

We first establish the following deterministic inequality for any two such
feasible sets $A$ and $Y$:
\begin{equation}
\label{eq:deterministic_face_gap}
F_t(Y;s_t)-2F_t(A;s_t)
\le
\sum_{i\in\mathcal{N}_t}
\left[
F_t(u_i\mid A^{-i};s_t)
-
F_t(e_i\mid A^{-i};s_t)
\right].
\end{equation}

To prove~\eqref{eq:deterministic_face_gap}, we use two consequences of
monotonicity and submodularity. First, by monotonicity,
$F_t(Y;s_t)-F_t(A;s_t)
\le
F_t(A\cup Y;s_t)-F_t(A;s_t)$, and add to $A$ only those elements of $Y$ that are not already in $A$ in an
arbitrary order. By submodularity, the total gain is at most the sum of their
marginal gains with respect to $A$. For any $i$ with $u_i\notin A$, we have
$u_i\notin A^{-i}$ and $A^{-i}\subseteq A$, so diminishing returns gives
$F_t(u_i\mid A;s_t)
\le
F_t(u_i\mid A^{-i};s_t)$.
For any $i$ with $u_i\in A$, the corresponding gain in
$F_t(A\cup Y;s_t)-F_t(A;s_t)$ is zero and is also bounded by
$F_t(u_i\mid A^{-i};s_t)\ge 0$. Hence
\begin{equation}
\label{eq:add_Y_bound}
F_t(Y;s_t)-F_t(A;s_t)
\le
\sum_{i\in\mathcal{N}_t}
F_t(u_i\mid A^{-i};s_t).
\end{equation}

Second, for any ordering of the elements $\{e_i\}_{i\in\mathcal{N}_t}$,
submodularity gives
\begin{equation}
\label{eq:removal_marginal_bound}
\sum_{i\in\mathcal{N}_t}
F_t(e_i\mid A^{-i};s_t)
\le
F_t(A;s_t).
\end{equation}
Let $A=\{e_1,\ldots,e_m\}$ under an arbitrary ordering and define
$S_{i-1}=\{e_1,\ldots,e_{i-1}\}$. Since $S_{i-1}\subseteq A^{-i}$ and
$e_i\notin A^{-i}$, diminishing returns implies
$F_t(e_i\mid A^{-i};s_t)
\le
F_t(e_i\mid S_{i-1};s_t)$.
Summing over $i$ gives
\[
\sum_{i=1}^m F_t(e_i\mid A^{-i};s_t)
\le
\sum_{i=1}^m F_t(e_i\mid S_{i-1};s_t)
=
F_t(A;s_t)-F_t(\emptyset;s_t)
=
F_t(A;s_t),
\]
where the last equality uses normalization.

Combining~\eqref{eq:add_Y_bound} and~\eqref{eq:removal_marginal_bound}, we obtain
\begin{align}
F_t(Y;s_t)-2F_t(A;s_t)
&=
\bigl(F_t(Y;s_t)-F_t(A;s_t)\bigr)-F_t(A;s_t)
\nonumber\\
&\le
\sum_{i\in\mathcal{N}_t}F_t(u_i\mid A^{-i};s_t)
-
F_t(A;s_t)
\nonumber\\
&\le
\sum_{i\in\mathcal{N}_t}
\left[
F_t(u_i\mid A^{-i};s_t)
-
F_t(e_i\mid A^{-i};s_t)
\right],
\end{align}
which proves~\eqref{eq:deterministic_face_gap}.

Then take expectations over the independent samples
$A\sim\mathcal{D}(\mathbf{x})$ and $Y\sim\mathcal{D}(\mathbf{y})$.
The left-hand side becomes
$\mathbb{E}[F_t(Y;s_t)]-2\mathbb{E}[F_t(A;s_t)]
=
\tilde f_t(\mathbf{y};s_t)-2\tilde f_t(\mathbf{x};s_t)$.
For the first term on the right-hand side of
\eqref{eq:deterministic_face_gap}, using the independence of $Y$ and $A$ and
Lemma~\ref{lem:gradient}, we have
\begin{align}
\mathbb{E}
\left[
\sum_{i\in\mathcal{N}_t}
F_t(u_i\mid A^{-i};s_t)
\right]
&=
\sum_{i\in\mathcal{N}_t}
\sum_{a\in\mathcal{A}_{i,t}}
y_{(i,a)}
\,
\mathbb{E}_{A^{-i}\sim\mathcal{D}^{-i}(\mathbf{x})}
\left[
F_t((i,a)\mid A^{-i};s_t)
\right]
\nonumber\\
&=
\sum_{i\in\mathcal{N}_t}
\sum_{a\in\mathcal{A}_{i,t}}
y_{(i,a)}
\frac{\partial \tilde f_t}{\partial x_{(i,a)}}(\mathbf{x};s_t).
\label{eq:y_grad_term}
\end{align}
Similarly, because $e_i$ is sampled according to $x_{(i,\cdot)}$ and is
independent of $A^{-i}$,
\begin{align}
\mathbb{E}
\left[
\sum_{i\in\mathcal{N}_t}
F_t(e_i\mid A^{-i};s_t)
\right]
&=
\sum_{i\in\mathcal{N}_t}
\sum_{a\in\mathcal{A}_{i,t}}
x_{(i,a)}
\,
\mathbb{E}_{A^{-i}\sim\mathcal{D}^{-i}(\mathbf{x})}
\left[
F_t((i,a)\mid A^{-i};s_t)
\right]
\nonumber\\
&=
\sum_{i\in\mathcal{N}_t}
\sum_{a\in\mathcal{A}_{i,t}}
x_{(i,a)}
\frac{\partial \tilde f_t}{\partial x_{(i,a)}}(\mathbf{x};s_t).
\label{eq:x_grad_term}
\end{align}
Taking expectations in~\eqref{eq:deterministic_face_gap} and using
\eqref{eq:y_grad_term}--\eqref{eq:x_grad_term}, we obtain
\[
\tilde f_t(\mathbf{y};s_t)-2\tilde f_t(\mathbf{x};s_t)
\le
\left\langle
\nabla \tilde f_t(\mathbf{x};s_t),
\mathbf{y}-\mathbf{x}
\right\rangle .
\]
Dividing both sides by $2$ proves~\eqref{eq:ambient_dr_gap}.

\textbf{(iii)}
Let $\mathbf{x}^* \in \arg\max_{\mathbf{x}\in\mathcal{P}(\mathcal{I}_t)}
\tilde f_t(\mathbf{x};s_t)$
be a global maximizer over the full partition matroid polytope. If
$\mathbf{x}^*\in\mathcal{F}_t$, there is nothing to prove. Otherwise, there
exists an agent $i$ with positive idle mass
\[
\rho_i
\triangleq
1-\sum_{a\in\mathcal{A}_{i,t}}x^*_{(i,a)}
>0.
\]
Choose any action $a_i\in\mathcal{A}_{i,t}$ and define $\mathbf{x}'=\mathbf{x}^*+\rho_i\mathbf{e}_{(i,a_i)}$. Then $\mathbf{x}'\in\mathcal{P}(\mathcal{I}_t)$, because the $i$-th partition
constraint becomes tight and no other subset is changed. By monotonicity of
$\tilde f_t$ on $\mathcal{P}(\mathcal{I}_t)$,
$\tilde f_t(\mathbf{x}';s_t)
\ge
\tilde f_t(\mathbf{x}^*;s_t)$.
Repeating this operation for every agent subset with positive idle mass yields
a point $\bar{\mathbf{x}}\in\mathcal{F}_t$ such that
$\tilde f_t(\bar{\mathbf{x}};s_t)
\ge
\tilde f_t(\mathbf{x}^*;s_t)$.
Since $\mathbf{x}^*$ is already a global maximizer over
$\mathcal{P}(\mathcal{I}_t)$ and $\bar{\mathbf{x}}\in\mathcal{P}(\mathcal{I}_t)$,
we must have equality:
$\tilde f_t(\bar{\mathbf{x}};s_t)
=
\tilde f_t(\mathbf{x}^*;s_t)$.
Therefore, a global maximizer over $\mathcal{P}(\mathcal{I}_t)$ is attained on
$\mathcal{F}_t$, and hence
$\max_{\mathbf{x}\in\mathcal{F}_t}\tilde f_t(\mathbf{x};s_t)
=
\max_{\mathbf{x}\in\mathcal{P}(\mathcal{I}_t)}
\tilde f_t(\mathbf{x};s_t)$.

It remains to relate this continuous optimum to the discrete optimum. Let $A_t^*\in\arg\max_{A\in\mathcal{I}_t}F_t(A;s_t)$. If $A_t^*$ does not select
an action for some active agent, monotonicity of $F_t$ allows us to add an
arbitrary action from that agent's action set without decreasing the value.
Repeating this for all omitted active agents gives a feasible set
$\bar A_t^*\in\mathcal{I}_t$ that selects exactly one action per active agent
and satisfies
$F_t(\bar A_t^*;s_t)\ge F_t(A_t^*;s_t)=\mathrm{OPT}_t(s_t)$.
The indicator vector $\mathbf{1}_{\bar A_t^*}$ lies in $\mathcal{F}_t$, and
by the definition of the PME at deterministic vertices,
$\tilde f_t(\mathbf{1}_{\bar A_t^*};s_t)
=
F_t(\bar A_t^*;s_t)$.
Therefore,
\[
\max_{\mathbf{x}\in\mathcal{F}_t}
\tilde f_t(\mathbf{x};s_t)
\ge
\tilde f_t(\mathbf{1}_{\bar A_t^*};s_t)
=
F_t(\bar A_t^*;s_t)
\ge
\mathrm{OPT}_t(s_t).
\]
This proves~\eqref{eq:boundary_opt}.
\end{proof}

\subsubsection{Pointwise-to-Expected Gap 
  (Lemma~\ref{lem:pointwise_to_expected})}
\label{app:proof_pointwise_to_expected}

\begin{lemma}[Pointwise-to-Expected DR-submodular Gap]
\label{lem:pointwise_to_expected}
Suppose that, for every $s_t\in\mathcal{S}$, the restricted DR-submodular gap inequality in Lemma~\ref{lem:dr_boundary}(ii) holds on $\mathcal{F}_t$. 
Let $\mathbf{x}(\cdot),\mathbf{y}(\cdot):\mathcal{S}\to\mathcal{F}_t$ be measurable selectors, and let $\mu$ be any probability measure over $\mathcal{S}$. Then
\begin{equation}\label{eq:pointwise_dr}
  \frac{1}{2}\tilde{f}_t(\mathbf{y}(s_t);\,s_t)
  - \tilde{f}_t(\mathbf{x}(s_t);\,s_t)
  \;\leq\; 
  \frac{1}{2}
  \left\langle
  \nabla_{\mathbf{x}}\tilde{f}_t(\mathbf{x}(s_t);\,s_t),\,
  \mathbf{y}(s_t) - \mathbf{x}(s_t)
  \right\rangle,
  \qquad \forall\, s_t \in \mathcal{S}.
\end{equation}
Consequently,
\begin{equation}\label{eq:integrated_dr}
\begin{aligned}   
  \frac{1}{2} &\mathbb{E}_{s_t \sim \mu}
  \!\left[\tilde{f}_t(\mathbf{y}(s_t);\,s_t)\right]
  - \mathbb{E}_{s_t \sim \mu}
  \!\left[\tilde{f}_t(\mathbf{x}(s_t);\,s_t)\right]\\
  &\;\leq\;
  \frac{1}{2}
  \mathbb{E}_{s_t \sim \mu}\!\left[
  \left\langle
  \nabla_{\mathbf{x}}\tilde{f}_t(\mathbf{x}(s_t);\,s_t),\,
  \mathbf{y}(s_t) - \mathbf{x}(s_t)
  \right\rangle
  \right].
\end{aligned}
\end{equation}
In particular, the result applies to $\mu=\mu_t^\pi$, to a state-dependent marginal iterate $\mathbf{x}_t(s_t)$, and to any measurable optimal selector
$\mathbf{x}_t^*(s_t)\in\arg\max_{\mathbf{z}\in\mathcal{F}_t}
\tilde f_t(\mathbf{z};s_t)$.
\end{lemma}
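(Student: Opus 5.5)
The pointwise inequality \eqref{eq:pointwise_dr} is Lemma~\ref{lem:dr_boundary}(ii) applied at each fixed state. Fix $s_t\in\mathcal{S}$ and set $\mathbf{x}=\mathbf{x}(s_t)$ and $\mathbf{y}=\mathbf{y}(s_t)$. Since the selectors take values in $\mathcal{F}_t$, both points lie on the categorical face. The hypothesis states that the restricted gap inequality holds on $\mathcal{F}_t$ for every state, so \eqref{eq:pointwise_dr} follows by substitution. No further work is needed at this step.

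For the integrated version \eqref{eq:integrated_dr}, I will integrate the pointwise inequality against $\mu$ and use linearity and monotonicity of the integral. First, all three integrands must be measurable and integrable. Measurability comes from composing the measurable selectors with $\tilde f_t(\cdot;s_t)$ and $\nabla\tilde f_t(\cdot;s_t)$. These are polynomials in $\mathbf{x}$ by Theorem~\ref{thm:properties}(3), and their coefficients $F_t(A;s_t)$ depend on $s_t$. I will assume this dependence is measurable, which is implicit in writing $\mathbb{E}_{s_t\sim\mu}$ of these quantities.

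Integrability comes from uniform bounds. For $\mathbf{x}\in\mathcal{F}_t$, the value $\tilde f_t(\mathbf{x};s_t)$ lies in $[0,\,|\mathcal{N}_t|B]$: writing $F_t(A)$ as a telescoping sum of at most $|\mathcal{N}_t|$ marginal gains, each in $[0,B]$ by Assumption~\ref{ass:submodular}(4), bounds $F_t$ on feasible sets, and $\tilde f_t$ is an average of such values. The inner product satisfies $|\langle\nabla\tilde f_t,\mathbf{y}-\mathbf{x}\rangle|\le G D$ by Cauchy--Schwarz, Lemma~\ref{lem:gradient_properties}(i), and the diameter bound. All integrands are therefore bounded, and the expectations are finite for any probability measure $\mu$.

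Integrating both sides of \eqref{eq:pointwise_dr} and splitting the left side by linearity gives \eqref{eq:integrated_dr}. For the stated special cases, I take $\mu=\mu_t^\pi$, let $\mathbf{x}(\cdot)$ be the policy-induced marginal $\mathbf{x}_t(s_t)$, which lies in $\mathcal{F}_t$ by Lemma~\ref{lem:equivalence} (or Lemma~\ref{lem:masked_softmax}), and let $\mathbf{y}(\cdot)=\mathbf{x}_t^*(\cdot)$.

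The only genuine obstacle is the existence of a measurable optimal selector $\mathbf{x}_t^*$. My first approach is a measurable maximum theorem, such as Kuratowski--Ryll-Nardzewski or Aumann selection. This applies because $\mathcal{F}_t$ is a fixed compact set, $\tilde f_t$ is continuous in $\mathbf{x}$ and measurable in $s_t$, and so the argmax correspondence is nonempty, compact-valued and measurable. If this is too heavy, there is an elementary route when $F_t(\cdot;s_t)$ takes finitely many possible forms: choose the maximizer by a fixed lexicographic tie-breaking rule. Since the statement explicitly assumes measurable selectors, I will mainly note this point rather than prove it.
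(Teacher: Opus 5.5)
Your proposal is correct and matches the paper's proof. Like the paper, you apply Lemma~\ref{lem:dr_boundary}(ii) at each fixed state to $\mathbf{x}(s_t),\mathbf{y}(s_t)\in\mathcal{F}_t$ and integrate against $\mu$; your bounds $0\le\tilde f_t\le|\mathcal{N}_t|B$ and $|\langle\nabla\tilde f_t,\mathbf{y}-\mathbf{x}\rangle|\le GD$ supply the integrability that the paper leaves implicit. The only divergence is the optimal selector: the paper sidesteps exact maximizers with measurable $\epsilon$-optimal selectors and $\epsilon\to 0$, while you cite a measurable maximum theorem (valid since $\mathcal{F}_t$ is fixed and compact and $\tilde f_t$ is continuous in $\mathbf{x}$), although the statement already assumes a measurable selector, so neither is strictly needed.
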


\begin{proof}
For each fixed $s_t\in\mathcal{S}$, Lemma~\ref{lem:dr_boundary}(ii) applies to the two feasible points $\mathbf{x}(s_t),\mathbf{y}(s_t)\in\mathcal{F}_t$, yielding \eqref{eq:pointwise_dr}. Taking expectations with respect to any probability measure $\mu$ over $\mathcal{S}$ preserves the inequality and gives \eqref{eq:integrated_dr}.
The measure $\mu$ may be policy-induced, and the feasible selectors
$\mathbf{x}(\cdot)$ and $\mathbf{y}(\cdot)$ may depend on the realized state; the inequality is applied pointwise before integration. If an exact measurable maximizer is not selected, the same argument applies to any measurable $\epsilon$-optimal selector, and the result follows by letting $\epsilon \rightarrow 0$.
\end{proof}

\subsection{Stagewise Approximation}
\label{app:proof_stagewise}

\textbf{Theorem~\ref{thm:stagewise}} (Stagewise Approximation).
\emph{
Consider the projected stochastic gradient ascent iterates on $\mathcal{F}_t$: $\mathbf{x}_{k+1}=\Pi_{\mathcal{F}_t}(\mathbf{x}_k+\eta\mathbf{g}_k)$, where $\mathbf{g}_k$ satisfies Lemma~\ref{lem:gradient_properties}. Let $\eta = D/\sqrt{K(G^2+\sigma^2)}$, where $D=\max_{\mathbf{x},\mathbf{y}\in\mathcal{F}_t}
\|\mathbf{x}-\mathbf{y}\|_2$. Under Assumption~\ref{ass:submodular} and Lemma~\ref{lem:dr_boundary},
\begin{equation}
\frac{1}{K}\sum_{k=0}^{K-1}
\mathbb{E}\!\left[\tilde{f}_t(\mathbf{x}_k;\,s_t)\right]
\geq
\frac{1}{2}\mathrm{OPT}_t(s_t)
-
\frac{D\sqrt{G^2+\sigma^2}}{2\sqrt{K}}. \tag{9}
\end{equation}
Consequently, if $k^*$ is sampled uniformly from $\{0,\ldots,K-1\}$ and $\pi_{k^*}$ is the factorized categorical policy
with marginal vector $\mathbf{x}_{k^*}$, then
$ \mathbb{E}\!\left[
  \mathbb{E}_{A_t\sim\pi_{k^*}}\![F_t(A_t;\,s_t)]
  \right]
  \geq
  \frac{1}{2}\mathrm{OPT}_t(s_t)
  -
  \frac{D(G+\sigma)}{2\sqrt{K}}$.
The outer expectation is over the random choice of $k^*$ and the stochastic gradient iterates.}

\begin{proof}
By DR-submodularity of $\tilde{f}_t$ on $\mathcal{F}_t$
(Lemma~\ref{lem:dr_boundary}(ii)), for any
$\mathbf{x}, \mathbf{x}^* \in \mathcal{F}_t$:
\begin{equation}\label{eq:dr_gap}
  \frac{1}{2}\tilde{f}_t(\mathbf{x}^*;\,s_t)
  - \tilde{f}_t(\mathbf{x};\,s_t)
  \;\leq\; 
  \frac{1}{2}
  \left\langle \nabla \tilde{f}_t(\mathbf{x};\,s_t),\,
  \mathbf{x}^* - \mathbf{x} \right\rangle .
\end{equation}
Let $\mathbf{x}^* \in \arg\max_{\mathbf{x}\in\mathcal{F}_t}
\tilde{f}_t(\mathbf{x};\,s_t)$.
By non-expansiveness of projection, the standard online gradient descent bound gives~\citep{zinkevich2003online}:
\begin{equation}\label{eq:ogd_bound}
  \sum_{k=0}^{K-1} \langle \mathbf{g}_k,\,
  \mathbf{x}^* - \mathbf{x}_k \rangle
  \;\leq\; \frac{\|\mathbf{x}_0 - \mathbf{x}^*\|_2^2}{2\eta}
  + \frac{\eta}{2}\sum_{k=0}^{K-1} \|\mathbf{g}_k\|_2^2. \nonumber
\end{equation}
Taking expectations, using unbiasedness
$\mathbb{E}[\mathbf{g}_k \mid \mathbf{x}_k]
= \nabla\tilde{f}_t(\mathbf{x}_k;\,s_t)$
(Lemma~\ref{lem:gradient_properties}(ii)), the diameter bound
$\|\mathbf{x}_0 - \mathbf{x}^*\|_2^2 \leq D^2$, and the second-moment
bound $\mathbb{E}[\|\mathbf{g}_k\|_2^2] \leq G^2 + \sigma^2$
(Lemma~\ref{lem:gradient_properties}(i,iii)):
\begin{equation}\label{eq:sum_grad_bound}
  \sum_{k=0}^{K-1} \mathbb{E}\!\left[\langle
  \nabla \tilde{f}_t(\mathbf{x}_k;\,s_t),\,
  \mathbf{x}^* - \mathbf{x}_k \rangle\right]
  \;\leq\; \frac{D^2}{2\eta}
  + \frac{\eta K(G^2 + \sigma^2)}{2}.
\end{equation}
Summing~\eqref{eq:dr_gap} over $k = 0,\ldots,K-1$,
taking expectations, and applying~\eqref{eq:sum_grad_bound}:
\begin{equation}
  \frac{K}{2}\tilde{f}_t(\mathbf{x}^*;\,s_t)
  - \sum_{k=0}^{K-1}\mathbb{E}[\tilde{f}_t(\mathbf{x}_k;\,s_t)]
  \;\leq\; 
  \frac{1}{2}
  \left(
  \frac{D^2}{2\eta}
  + \frac{\eta K(G^2+\sigma^2)}{2}
  \right). \nonumber
\end{equation}

Dividing by $K$:
\begin{equation}
  \frac{1}{2}\tilde{f}_t(\mathbf{x}^*;\,s_t)
  - \frac{1}{K}\sum_{k=0}^{K-1}
  \mathbb{E}[\tilde{f}_t(\mathbf{x}_k;\,s_t)]
  \;\leq\; \frac{D^2}{4\eta K}
  + \frac{\eta(G^2+\sigma^2)}{4}. \nonumber
\end{equation}
Setting $\eta = D/\sqrt{K(G^2+\sigma^2)}$ gives
\[
\frac{D^2}{4\eta K}
+
\frac{\eta(G^2+\sigma^2)}{4}
=
\frac{D\sqrt{G^2+\sigma^2}}{2\sqrt{K}}.
\]
Since $\tilde{f}_t(\mathbf{x}^*;\,s_t) \geq \mathrm{OPT}_t(s_t)$
by Lemma~\ref{lem:dr_boundary}(iii), we obtain
\eqref{eq:stagewise_bound}.
Since each $\mathbf{x}_k \in \mathcal{F}_t$ is the marginal vector
of a factorized categorical policy $\pi_k$,
Lemma~\ref{lem:equivalence} gives
$\mathbb{E}_{A_t\sim\pi_k}[F_t(A_t;\,s_t)]
= \tilde{f}_t(\mathbf{x}_k;\,s_t)$
for every $k$.
Let $k^*$ be drawn uniformly from $\{0,\ldots,K-1\}$ independently of the iterates.
Then
\begin{align}
  \mathbb{E}\!\left[
  \mathbb{E}_{A_t\sim\pi_{k^*}}\![F_t(A_t;\,s_t)]
  \right]
  = \frac{1}{K}\sum_{k=0}^{K-1}
\mathbb{E}\!\left[\tilde{f}_t(\mathbf{x}_k;\,s_t)\right] 
  \geq
  \frac{1}{2}\mathrm{OPT}_t(s_t)
  -
  \frac{D\sqrt{G^2+\sigma^2}}{2\sqrt{K}}, \nonumber
\end{align}
completing the proof.
\end{proof}

\subsection{Dynamic Regret Bound}
\label{app:proof_regret}

\textbf{Theorem~\ref{thm:regret}} (Dynamic Regret Bound).
\emph{Suppose $F_t(\cdot;s_t)$ satisfies Assumption~\ref{ass:submodular}
for all $t\in[T]$. Then $\tilde f_t(\cdot;s_t)$ is nonnegative, monotone, and
DR-submodular on the partition matroid polytope
$\mathcal{P}(\mathcal{I}_t)$. Since
$\mathcal{F}_t\subseteq \mathcal{P}(\mathcal{I}_t)$, the
first-order gap inequality in Lemma~\ref{lem:dr_boundary}(ii)
applies to all $\mathbf{x},\mathbf{y}\in\mathcal{F}_t$.
Let $\{\mathbf{x}_t\}_{t=1}^T$ denote the feasible PME marginal sequence evaluated at the realized states and generated by the two-step projected stochastic-gradient dynamics
$\bar{\mathbf{x}}_{t+1}
=
\Pi_{\mathcal{F}_t}
\!\left(\mathbf{x}_t+\eta\mathbf{g}_t\right)$, and
$\mathbf{x}_{t+1}
=
\Pi_{\mathcal{F}_{t+1}}
\!\left(\bar{\mathbf{x}}_{t+1}\right)$.
Here $\mathbf{g}_t$ is a stochastic PME gradient estimator satisfying Lemma~\ref{lem:gradient_properties}, and $\eta>0$ is constant. Then
\begin{equation}
  \mathbb{E}\!\left[\mathrm{Regret}_T^{1/2}(\pi)\right]
  \leq
  \frac{D^2}{4\eta}
  +
  \frac{D\mathcal{P}_T}{2\eta}
  +
  \frac{\eta T(G^2+\sigma^2)}{4}. \tag{11}
\end{equation}
Moreover, setting $\eta^*
=
\sqrt{
\frac{D(D+2\mathcal{P}_T)}
{T(G^2+\sigma^2)}
}$ gives
\begin{equation}
  \mathbb{E}\!\left[\mathrm{Regret}_T^{1/2}(\pi)\right]
  \leq
  \frac{1}{2}
  \sqrt{D(D+2\mathcal{P}_T)\,T(G^2+\sigma^2)}. \tag{12}
\end{equation}}

\begin{proof}
We combine the restricted DR-submodular first-order gap inequality with
projected stochastic-gradient analysis over time-varying feasible faces.
The first projection is onto the current feasible face $\mathcal{F}_t$, which
allows the standard projection inequality to be applied with the current
comparator $\mathbf{x}_t^*\in\mathcal{F}_t$. The second projection onto
$\mathcal{F}_{t+1}$ maintains feasibility after the feasible face changes.

We embed all time-varying marginal vectors into the common Euclidean space
$\mathbb{R}^{d_{\max}}$ with $d_{\max}=N_{\max}N^a_{\max}$ using the
isometric zero-padding map
$\iota_t:\mathcal{F}_t\rightarrow\mathbb{R}^{d_{\max}}$.
The gradient is extended by zeros on padded coordinates, so inner products
and norms on the active coordinates are preserved.
Define $\mathbf{x}_t$, $\bar{\mathbf{x}}_{t+1}$, and $\mathbf{x}_t^*$ for their
embedded images and suppress $\iota_t$ from the notation. All projections,
distances, and inner products below are understood in this common embedded
Euclidean space. The embedded diameter is
$D =
\max_{t,u\in[T]}
\max_{\mathbf{x}\in \iota_t(\mathcal{F}_t),\,
      \mathbf{y}\in \iota_u(\mathcal{F}_u)}
\|\mathbf{x}-\mathbf{y}\|_2$.
For each realized state $s_t$, let
$\mathbf{x}_t^*
\in
\arg\max_{\mathbf{x}\in\mathcal{F}_t}
\tilde f_t(\mathbf{x};s_t)$.
By Lemma~\ref{lem:dr_boundary}(ii), for
$\mathbf{x}_t,\mathbf{x}_t^*\in\mathcal{F}_t$,
\begin{equation}
  \frac{1}{2}\tilde f_t(\mathbf{x}_t^*;s_t)
  -
  \tilde f_t(\mathbf{x}_t;s_t)
  \leq
  \frac{1}{2}
  \left\langle
  \nabla \tilde f_t(\mathbf{x}_t;s_t),
  \mathbf{x}_t^*-\mathbf{x}_t
  \right\rangle. \nonumber
\end{equation}
Moreover, Lemma~\ref{lem:dr_boundary}(iii) gives
$\tilde f_t(\mathbf{x}_t^*;s_t)\geq \mathrm{OPT}_t(s_t)$.
Therefore,
\begin{equation}
  \frac{1}{2}\mathrm{OPT}_t(s_t)
  -
  \tilde f_t(\mathbf{x}_t;s_t)
  \leq
  \frac{1}{2}
  \left\langle
  \nabla \tilde f_t(\mathbf{x}_t;s_t),
  \mathbf{x}_t^*-\mathbf{x}_t
  \right\rangle. \nonumber
\end{equation}
Taking expectations over the trajectory randomness and using
Lemma~\ref{lem:equivalence}, we obtain
\begin{equation}
\label{eq:proof_expected_gap_twostep}
  \frac{1}{2}
  \mathbb{E}_{s_t\sim\mu_t^\pi}
  \!\left[\mathrm{OPT}_t(s_t)\right]
  -
  \mathbb{E}_{\pi}\!\left[F_t(A_t;s_t)\right]
  \leq
  \frac{1}{2}
  \mathbb{E}
  \left[
  \left\langle
  \nabla \tilde f_t(\mathbf{x}_t;s_t),
  \mathbf{x}_t^*-\mathbf{x}_t
  \right\rangle
  \right].
\end{equation}

Next, consider the first projection step
$\bar{\mathbf{x}}_{t+1}
=
\Pi_{\mathcal{F}_t}
\left(\mathbf{x}_t+\eta\mathbf{g}_t\right)$.
Since $\mathbf{x}_t^*\in\mathcal{F}_t$, the Euclidean projection inequality
gives $\left\|
\bar{\mathbf{x}}_{t+1}-\mathbf{x}_t^*
\right\|_2^2
\leq
\left\|
\mathbf{x}_t+\eta\mathbf{g}_t-\mathbf{x}_t^*
\right\|_2^2$. Expanding the right-hand side and rearranging yields
\begin{equation}
\label{eq:proof_ogd_twostep}
  \left\langle
  \mathbf{g}_t,\mathbf{x}_t^*-\mathbf{x}_t
  \right\rangle
  \leq
  \frac{
  \|\mathbf{x}_t-\mathbf{x}_t^*\|_2^2
  -
  \|\bar{\mathbf{x}}_{t+1}-\mathbf{x}_t^*\|_2^2
  }{2\eta}
  +
  \frac{\eta}{2}\|\mathbf{g}_t\|_2^2 .
\end{equation}
Conditioning on the history and the realized state $s_t$, the comparator
$\mathbf{x}_t^*$ is fixed and Lemma~\ref{lem:gradient_properties} gives
$\mathbb{E}[\mathbf{g}_t\mid \mathbf{x}_t,s_t]
=
\nabla \tilde f_t(\mathbf{x}_t;s_t),
\qquad
\mathbb{E}[\|\mathbf{g}_t\|_2^2]\leq G^2+\sigma^2$.
Taking expectations in~\eqref{eq:proof_ogd_twostep} gives
\begin{equation}
\label{eq:proof_expected_ogd_twostep}
\mathbb{E}
\left[
\left\langle
\nabla \tilde f_t(\mathbf{x}_t;s_t),
\mathbf{x}_t^*-\mathbf{x}_t
\right\rangle
\right]
\leq
\frac{
\mathbb{E}\|\mathbf{x}_t-\mathbf{x}_t^*\|_2^2
-
\mathbb{E}\|\bar{\mathbf{x}}_{t+1}-\mathbf{x}_t^*\|_2^2
}{2\eta}
+
\frac{\eta}{2}(G^2+\sigma^2).
\end{equation}

Combining~\eqref{eq:proof_expected_gap_twostep} and
\eqref{eq:proof_expected_ogd_twostep}, and summing over
$t=1,\ldots,T$, gives
\begin{equation}
\label{eq:proof_sum_twostep}
\mathbb{E}\!\left[\mathrm{Regret}_T^{1/2}(\pi)\right]
\leq
\frac{1}{4\eta}
\sum_{t=1}^{T}
\left(
\mathbb{E}\|\mathbf{x}_t-\mathbf{x}_t^*\|_2^2
-
\mathbb{E}\|\bar{\mathbf{x}}_{t+1}-\mathbf{x}_t^*\|_2^2
\right)
+
\frac{\eta T}{4}(G^2+\sigma^2).
\end{equation}

It remains to control the telescoping term with the time-varying comparator.
Define
$\Phi_t(\mathbf{x})
\triangleq
\|\mathbf{x}-\mathbf{x}_t^*\|_2^2$.
Then
\begin{align}
\sum_{t=1}^{T}
\left(
\Phi_t(\mathbf{x}_t)
-
\Phi_t(\bar{\mathbf{x}}_{t+1})
\right) =
\Phi_1(\mathbf{x}_1)
-
\Phi_T(\bar{\mathbf{x}}_{T+1})
+
\sum_{t=1}^{T-1}
\left(
\Phi_{t+1}(\mathbf{x}_{t+1})
-
\Phi_t(\bar{\mathbf{x}}_{t+1})
\right).
\label{eq:proof_telescope_twostep}
\end{align}
Here $\bar{\mathbf{x}}_{T+1}$ is the auxiliary point produced by the final
current-face projection. Since
$\Phi_1(\mathbf{x}_1)\leq D^2$ and
$\Phi_T(\bar{\mathbf{x}}_{T+1})\geq 0$, it remains to bound the drift term.

For each $t<T$, the second projection step is
$\mathbf{x}_{t+1}
=
\Pi_{\mathcal{F}_{t+1}}(\bar{\mathbf{x}}_{t+1})$.
Since $\mathbf{x}_{t+1}^*\in\mathcal{F}_{t+1}$, non-expansiveness of
Euclidean projection gives
$\|\mathbf{x}_{t+1}-\mathbf{x}_{t+1}^*\|_2
\leq
\|\bar{\mathbf{x}}_{t+1}-\mathbf{x}_{t+1}^*\|_2$.
Therefore,
\begin{align}
\Phi_{t+1}(\mathbf{x}_{t+1})
-
\Phi_t(\bar{\mathbf{x}}_{t+1})
&=
\|\mathbf{x}_{t+1}-\mathbf{x}_{t+1}^*\|_2^2
-
\|\bar{\mathbf{x}}_{t+1}-\mathbf{x}_t^*\|_2^2
\nonumber\\
&\leq
\|\bar{\mathbf{x}}_{t+1}-\mathbf{x}_{t+1}^*\|_2^2
-
\|\bar{\mathbf{x}}_{t+1}-\mathbf{x}_t^*\|_2^2
\nonumber\\
&=
\left\langle
\mathbf{x}_t^*-\mathbf{x}_{t+1}^*,
2\bar{\mathbf{x}}_{t+1}
-\mathbf{x}_{t+1}^*
-\mathbf{x}_t^*
\right\rangle
\nonumber\\
&\leq
2D
\|\mathbf{x}_t^*-\mathbf{x}_{t+1}^*\|_2 .
\label{eq:proof_drift_twostep}
\end{align}
The last inequality uses Cauchy--Schwarz and the embedded diameter bound:
\[
\left\|
2\bar{\mathbf{x}}_{t+1}
-\mathbf{x}_{t+1}^*
-\mathbf{x}_t^*
\right\|_2
\leq
\|\bar{\mathbf{x}}_{t+1}-\mathbf{x}_{t+1}^*\|_2
+
\|\bar{\mathbf{x}}_{t+1}-\mathbf{x}_t^*\|_2
\leq 2D .
\]

Taking expectations and using the definition of the expected path length,
\[
\mathcal{P}_T
=
\mathbb{E}
\left[
\sum_{t=1}^{T-1}
\|\mathbf{x}_t^*-\mathbf{x}_{t+1}^*\|_2
\right],
\]
we obtain from~\eqref{eq:proof_telescope_twostep} and
\eqref{eq:proof_drift_twostep}
\begin{equation}
\label{eq:proof_telescoped_bound_twostep}
\mathbb{E}
\left[
\sum_{t=1}^{T}
\left(
\Phi_t(\mathbf{x}_t)
-
\Phi_t(\bar{\mathbf{x}}_{t+1})
\right)
\right]
\leq
D^2+2D\mathcal{P}_T .
\end{equation}

Substituting~\eqref{eq:proof_telescoped_bound_twostep} into
\eqref{eq:proof_sum_twostep} gives
\begin{align}
\mathbb{E}\!\left[\mathrm{Regret}_T^{1/2}(\pi)\right]
&\leq
\frac{D^2+2D\mathcal{P}_T}{4\eta}
+
\frac{\eta T}{4}(G^2+\sigma^2)
\nonumber\\
&=
\frac{D^2}{4\eta}
+
\frac{D\mathcal{P}_T}{2\eta}
+
\frac{\eta T}{4}(G^2+\sigma^2),\nonumber
\end{align}
which proves~\eqref{eq:regret_master}.

Finally, define
$C_1
\triangleq
\frac{D(D+2\mathcal{P}_T)}{4},
C_2
\triangleq
\frac{T(G^2+\sigma^2)}{4}$.
Then~\eqref{eq:regret_master} becomes
\[
\mathbb{E}\!\left[\mathrm{Regret}_T^{1/2}(\pi)\right]
\leq
\frac{C_1}{\eta}+C_2\eta .
\]
The minimizing constant step size is
\[
\eta^*
=
\sqrt{\frac{C_1}{C_2}}
=
\sqrt{
\frac{D(D+2\mathcal{P}_T)}
{T(G^2+\sigma^2)}
}.
\]
Substituting $\eta^*$ gives
\begin{align}
\mathbb{E}\!\left[\mathrm{Regret}_T^{1/2}(\pi)\right]
\leq
2\sqrt{C_1C_2}
=
\frac{1}{2}
\sqrt{
D(D+2\mathcal{P}_T)\,T(G^2+\sigma^2)
}. \nonumber
\end{align}
This proves~\eqref{eq:regret_exact} and completes the proof.
\end{proof}

\subsection{Explicit Bounds and Sublinear Regret}
\label{app:corollaries}

\subsubsection{Explicit Parameter Bounds }\label{app:proof_bounds}

\begin{proposition}[Explicit Bounds]
    Let $N_{\max} = \max_t |\mathcal{N}_t|$, $N^a_{\max} = \max_{t,i} |\mathcal{A}_{i,t}|$, and let $B$ bound the marginal gains. Then $D \leq \sqrt{2 N_{\max}}$, $G \leq B\sqrt{N_{\max} N^a_{\max}}$, and $\sigma \leq B\sqrt{N_{\max} N^a_{\max}}$, yielding
\begin{equation}
    \mathbb{E}\left[ \mathrm{Regret}_T^{1/2}(\pi) \right] = O\left( B N_{\max} \sqrt{N^a_{\max} T} \cdot (1 + \mathcal{P}_T) \right).
\end{equation}
\end{proposition}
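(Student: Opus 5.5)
We obtain the three parameter bounds separately and then substitute them into the master bound of Theorem~\ref{thm:regret}. For the diameter, each embedded point $\iota_t(\mathbf{x})$ with $\mathbf{x}\in\mathcal{F}_t$ has, in every agent slot, either a probability vector (active agent) or the zero vector (inactive agent, after zero padding). For two such points the squared distance splits over agent slots. If both slots hold probability vectors, their distance is at most $\sqrt{2}$, with equality at distinct vertices, since $\|p-q\|_2^2\le\|p-q\|_1\max_a|p_a-q_a|\le 2$. If one slot is zero and the other holds a probability vector, the contribution is $\|p\|_2^2\le 1$. Summing over at most $N_{\max}$ slots gives $D^2\le 2N_{\max}$. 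This agrees with the remark in Section~\ref{sec:regret}.

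The bounds on $G$ and $\sigma$ are exactly parts (i) and (iii) of Lemma~\ref{lem:gradient_properties}. Each partial derivative, and each coordinate of $\mathbf{g}_t$, lies in $[0,B]$ by Lemma~\ref{lem:gradient} and Assumption~\ref{ass:submodular}(4). There are at most $N_{\max}N^a_{\max}$ coordinates. Hence $G\le B\sqrt{N_{\max}N^a_{\max}}$, and $\sigma^2\le\sum\mathbb{E}[g^2]\le N_{\max}N^a_{\max}B^2$. Therefore $G^2+\sigma^2\le 2B^2N_{\max}N^a_{\max}$.

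Next we plug these into \eqref{eq:regret_exact}:
\[
\mathbb{E}[\mathrm{Regret}_T^{1/2}]\le \tfrac12\sqrt{D(D+2\mathcal{P}_T)\,T\,(G^2+\sigma^2)}\le \tfrac12\sqrt{\sqrt{2N_{\max}}\bigl(\sqrt{2N_{\max}}+2\mathcal{P}_T\bigr)\cdot 2B^2N_{\max}N^a_{\max}T}.
\]
Since $\sqrt{2N_{\max}}\ge 1$, we have $D+2\mathcal{P}_T\le 2\sqrt{2N_{\max}}\,(1+\mathcal{P}_T)$. The radicand is therefore at most a constant times $N_{\max}\cdot N_{\max}N^a_{\max}B^2T(1+\mathcal{P}_T)$. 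This yields $O\bigl(BN_{\max}\sqrt{N^a_{\max}T}\sqrt{1+\mathcal{P}_T}\bigr)$. Finally, $\sqrt{1+\mathcal{P}_T}\le 1+\mathcal{P}_T$ because $\mathcal{P}_T\ge 0$, so this bound in turn implies the stated (looser) $O\bigl(BN_{\max}\sqrt{N^a_{\max}T}\,(1+\mathcal{P}_T)\bigr)$.

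The only delicate point is the diameter bound across different faces $\mathcal{F}_t,\mathcal{F}_u$ with different active agent sets. Here the zero-padded slots must be handled explicitly rather than by quoting the per-simplex diameter $\sqrt{2}$. The slotwise case analysis above resolves it, because each mixed slot contributes at most $1\le 2$. The rest is bookkeeping of constants, where we keep the stated form loose rather than tight.
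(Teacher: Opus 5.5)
Your proof is correct and follows the paper's route: bound $D$, $G$, $\sigma$ separately (the latter two exactly as in Lemma~\ref{lem:gradient_properties}), then substitute into Theorem~\ref{thm:regret}. Two small refinements over the paper: (a) your slotwise treatment of the zero-padded embedding handles pairs drawn from different faces $\iota_t(\mathcal{F}_t)$ and $\iota_u(\mathcal{F}_u)$, whereas the paper only computes the per-round diameter $\max_t D_t$ of $\mathcal{P}(\mathcal{I}_t)$ through its product-of-simplices structure, which does not literally cover the cross-round maximum in the definition of $D$; and (b) your algebra gives a square-root rather than linear dependence on $\mathcal{P}_T$ before relaxing, whereas the paper uses $\sqrt{D(D+2\mathcal{P}_T)}\le D+\mathcal{P}_T$; both imply the stated $(1+\mathcal{P}_T)$ rate.
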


\begin{proof}
We derive explicit bounds for each constant in terms of the problem dimensions.

\paragraph{(1) Diameter bound $D \leq \sqrt{2 N_{\max}}$.}
The partition matroid polytope decomposes as a Cartesian product over agents:
\begin{equation}
    \mathcal{P}(\mathcal{I}_t) = \prod_{i \in \mathcal{N}_t} \Delta_i, \quad \text{where} \quad \Delta_i = \left\{ \mathbf{x}_i \in [0,1]^{|\mathcal{A}_{i,t}|} : \sum_{a \in \mathcal{A}_{i,t}} x_{(i,a)} \leq 1 \right\}. \nonumber
\end{equation}
The diameter of each simplex $\Delta_i$ is achieved between two vertices, e.g., $\mathbf{e}_a$ and $\mathbf{e}_{a'}$ (unit vectors corresponding to deterministic action choices):
    $\max_{\mathbf{x}_i, \mathbf{y}_i \in \Delta_i} \|\mathbf{x}_i - \mathbf{y}_i\|_2 = \|\mathbf{e}_a - \mathbf{e}_{a'}\|_2 = \sqrt{2}$. 
Since the overall polytope is a Cartesian product, the squared diameter is the sum of squared per-agent diameters:
\begin{equation}
    D_t^2 = \max_{\mathbf{x}, \mathbf{y} \in \mathcal{P}(\mathcal{I}_t)} \|\mathbf{x} - \mathbf{y}\|_2^2 = \sum_{i \in \mathcal{N}_t} 2 = 2 |\mathcal{N}_t| \leq 2 N_{\max}. \nonumber
\end{equation}
Therefore, $D \triangleq \max_t D_t \leq \sqrt{2 N_{\max}}$.

\paragraph{(2) Gradient bound $G \leq B\sqrt{N_{\max} N^a_{\max}}$.}
By Lemma~\ref{lem:gradient}, the partial derivative of the PME satisfies
\begin{equation}
    \frac{\partial \tilde{f}_t}{\partial x_{(i,a)}}(\mathbf{x};\,s_t) 
    = \mathbb{E}_{A^{-i} \sim \mathcal{D}^{-i}(\mathbf{x})} 
    \left[ F_t\big((i,a) \mid A^{-i};\,s_t\big) \right]. \nonumber
\end{equation}
By Assumption~\ref{ass:submodular}(4), the marginal gain satisfies
$0 \leq F_t(e \mid A;\,s_t) \leq B$ for all $A \in \mathcal{I}_t$,
$e \in \Omega_t$, and \emph{all} $s_t \in \mathcal{S}$.
Therefore, each partial derivative satisfies
$0 \leq \frac{\partial \tilde{f}_t}{\partial x_{(i,a)}}(\mathbf{x};\,s_t)
\leq B$ uniformly over $s_t \in \mathcal{S}$.
The gradient norm is thus bounded uniformly:
\begin{align}
    \|\nabla \tilde{f}_t(\mathbf{x};\,s_t)\|_2^2 
    &= \sum_{i \in \mathcal{N}_t} \sum_{a \in \mathcal{A}_{i,t}} 
    \left( \frac{\partial \tilde{f}_t}{\partial x_{(i,a)}}
    (\mathbf{x};\,s_t) \right)^2 
    \leq N_{\max} N^a_{\max} B^2,
    \quad \forall\, s_t \in \mathcal{S}. \nonumber
\end{align}
Therefore,
\begin{equation}
    G \;\triangleq\; \max_t \max_{\mathbf{x} \in \mathcal{F}_t}
    \sup_{s_t \in \mathcal{S}}
    \|\nabla \tilde{f}_t(\mathbf{x};\,s_t)\|_2
    \;\leq\; B\sqrt{N_{\max} N^a_{\max}}. \nonumber
\end{equation}

\paragraph{(3) Variance bound 
	$\sigma \leq B\sqrt{N_{\max} N^a_{\max}}$.}
	By Proposition~\ref{prop:unbiased},
$(\mathbf{g}_t)_{(i,a)} = F_t((i,a)\mid A_t^{-i};s_t)$
is an unbiased estimate of
$\partial \tilde{f}_t / \partial x_{(i,a)}$,
	with $(\mathbf{g}_t)_{(i,a)} \in [0,B]$.
	We bound the mean squared error directly using the 
	identity $\|\mathbf{u}\|^2 = \sum_{(i,a)} u_{(i,a)}^2$, 
	which requires no independence assumption:
	\begin{align}
		\mathbb{E}\!\left[\|\mathbf{g}_t - \nabla\tilde{f}_t
		(\mathbf{x}_t)\|_2^2 \mid \mathbf{x}_t\right]
		&= \sum_{(i,a) \in \Omega_t}
	\mathbb{E}\!\left[\left((\mathbf{g}_t)_{(i,a)} - 
		\frac{\partial \tilde{f}_t}{\partial x_{(i,a)}}
		\right)^{\!2} \mid \mathbf{x}_t\right] 
		\nonumber \\
		&\leq \sum_{(i,a) \in \Omega_t}
		\mathbb{E}\!\left[(\mathbf{g}_t)_{(i,a)}^2 
		\mid \mathbf{x}_t\right] \nonumber \\
		& \leq \sum_{(i,a) \in \Omega_t} B^2
		\leq N_{\max} N^a_{\max} B^2,
		\nonumber
	\end{align}
	where the first inequality uses 
	$\mathrm{Var}[X] \leq \mathbb{E}[X^2]$ for 
	nonnegative $X$, and the second uses the bound 
	$(\mathbf{g}_t)_{(i,a)} \in [0,B]$.
	
\paragraph{(4) Explicit regret bound.}
Substituting the bounds $D \leq \sqrt{2N_{\max}}$, 
$G \leq B\sqrt{N_{\max}N^a_{\max}}$, 
$\sigma \leq B\sqrt{N_{\max}N^a_{\max}}$
into Theorem~\ref{thm:regret} with optimal step size
$\eta^* = \sqrt{D(D+2\mathcal{P}_T)/[T(G^2+\sigma^2)]}$:
\begin{align}
  \mathbb{E}\!\left[\mathrm{Regret}_T^{1/2}(\pi)\right]
  &= O\!\left((D+\mathcal{P}_T)\sqrt{T(G^2+\sigma^2)}\right) \nonumber\\
  &= O\!\left(\left(\sqrt{N_{\max}}+\mathcal{P}_T\right)
     \cdot\sqrt{T}\cdot B\sqrt{N_{\max}N^a_{\max}}\right) \nonumber\\
  &= O\!\left(BN_{\max}\sqrt{N^a_{\max}T}
     + B\mathcal{P}_T\sqrt{N_{\max}N^a_{\max}T}\right) \nonumber\\
  &= O\!\left(BN_{\max}\sqrt{N^a_{\max}T}\cdot(1+\mathcal{P}_T)\right).
     \nonumber
\end{align}
This completes the proof.
\end{proof}

\subsubsection{Sublinear Regret 
  (Corollary~\ref{cor:sublinear})}
\textbf{Corollary~\ref{cor:sublinear}} (Sublinear Regret).
 \emph{If $D$, $G$, and $\sigma$ are uniformly bounded and the path length satisfies $\mathcal{P}_T=o(T)$, then
 $\limsup_{T\to\infty}
  \frac{1}{T}
  \mathbb{E}\!\left[\mathrm{Regret}_T^{1/2}(\pi)\right]
  \le 0$.
Equivalently, the policy is asymptotically $1/2$-optimal on average: $\liminf_{T\to\infty}\frac{1}{T}\sum_{t=1}^T
  \mathbb{E}_{\pi}\!\left[F_t(A_t;\,s_t)\right]
  \;\geq\; \frac{1}{2}\cdot
  \liminf_{T\to\infty}\frac{1}{T}\sum_{t=1}^T
\mathbb{E}_{s_t\sim\mu_t^{\pi}}\!\left[\mathrm{OPT}_t(s_t)\right]$.}

\begin{proof}
By Theorem~\ref{thm:regret} with the optimized constant step size
\[
\eta^*
=
\sqrt{
\frac{D(D+2\mathcal{P}_T)}
{T(G^2+\sigma^2)}
},
\]
we have
\begin{equation}
  \mathbb{E}\!\left[\mathrm{Regret}_T^{1/2}(\pi)\right]
  \le
  \frac{1}{2}
  \sqrt{D(D+2\mathcal{P}_T)\,T(G^2+\sigma^2)}. \nonumber
\end{equation}
Dividing both sides by $T$ gives
\begin{equation}
  \frac{1}{T}
  \mathbb{E}\!\left[\mathrm{Regret}_T^{1/2}(\pi)\right]
  \le
  \frac{1}{2}
  \sqrt{
  \frac{D(D+2\mathcal{P}_T)(G^2+\sigma^2)}{T}
  }.\nonumber
\end{equation}
Since $D$, $G$, and $\sigma$ are uniformly bounded, there exists a constant
$C>0$, independent of $T$, such that
\begin{equation}
\label{eq:app_cor_average_rate}
  \frac{1}{T}
  \mathbb{E}\!\left[\mathrm{Regret}_T^{1/2}(\pi)\right]
  \le
  C
  \sqrt{
  \frac{1+\mathcal{P}_T}{T}
  }.
\end{equation}
The condition $\mathcal{P}_T=o(T)$ implies
$ \frac{1+\mathcal{P}_T}{T}\to 0$.
Hence the right-hand side of~\eqref{eq:app_cor_average_rate} converges to zero, and therefore
\begin{equation}
\label{eq:app_cor_limsup}
  \limsup_{T\to\infty}
  \frac{1}{T}
  \mathbb{E}\!\left[\mathrm{Regret}_T^{1/2}(\pi)\right]
  \le 0.
\end{equation}
This shows that the average dynamic $1/2$-regret has a vanishing upper bound.

It remains to derive the average-performance statement. By Definition~\ref{def:regret},
\begin{equation}
    \mathbb{E}\!\left[\mathrm{Regret}_T^{1/2}(\pi)\right]
    =
    \sum_{t=1}^T
    \left(
    \frac{1}{2}\,
    \mathbb{E}_{s_t\sim\mu_t^\pi}
    \!\left[\mathrm{OPT}_t(s_t)\right]
    -
    \mathbb{E}_{\pi}\!\left[F_t(A_t;s_t)\right]
    \right). \nonumber
\end{equation}
Rearranging and dividing by $T$ yields
\begin{equation}
    \frac{1}{T}\sum_{t=1}^T
    \mathbb{E}_{\pi}\!\left[F_t(A_t;s_t)\right]
    =
    \frac{1}{2T}\sum_{t=1}^T
    \mathbb{E}_{s_t\sim\mu_t^\pi}
    \!\left[\mathrm{OPT}_t(s_t)\right]
    -
    \frac{1}{T}
    \mathbb{E}\!\left[\mathrm{Regret}_T^{1/2}(\pi)\right].\nonumber
\end{equation}
Taking $\liminf$ on both sides and using
$\liminf_T (a_T-b_T)
\ge
\liminf_T a_T-\limsup_T b_T$,
we obtain
\begin{align}
&\liminf_{T\to\infty}
\frac{1}{T}\sum_{t=1}^T
\mathbb{E}_{\pi}\!\left[F_t(A_t;s_t)\right]
\nonumber\\
&\quad\ge
\frac{1}{2}
\liminf_{T\to\infty}
\frac{1}{T}\sum_{t=1}^T
\mathbb{E}_{s_t\sim\mu_t^\pi}
\!\left[\mathrm{OPT}_t(s_t)\right]
-
\limsup_{T\to\infty}
\frac{1}{T}
\mathbb{E}\!\left[\mathrm{Regret}_T^{1/2}(\pi)\right]
\nonumber\\
&\quad\ge
\frac{1}{2}
\liminf_{T\to\infty}
\frac{1}{T}\sum_{t=1}^T
\mathbb{E}_{s_t\sim\mu_t^\pi}
\!\left[\mathrm{OPT}_t(s_t)\right],  \nonumber
\end{align}
where the last inequality follows from~\eqref{eq:app_cor_limsup}. This proves the claimed asymptotic $1/2$-optimality on average.
\end{proof}

\section{Algorithm}
\label{app:architecture}
\subsection{Policy Architecture}
We introduce two policy architectures that implement the PME and satisfy the partition matroid constraint. Both output a factorized categorical policy via a masked softmax, ensuring each agent cannot select more than one action. \textbf{SubMAPG-G} employs a Graph Neural Network (GNN) encoder. It is permutation-invariant and accommodates variable numbers of agents and tasks, making it suitable for open environments. \textbf{SubMAPG-M} employs a Multi-Layer Perceptron (MLP) encoder with a fixed-dimensional input, offering an efficient alternative for scenarios with fixed agent sets. The architectures adhere to two principles derived from our theoretical framework:
\paragraph{Categorical Policy for Matroid Constraints.}
The partition matroid constraint requires mutual exclusion within each agent's action set: each agent must select exactly one action. Standard independent Bernoulli relaxations violate this constraint by permitting infeasible multi-action assignments with positive probability. We enforce feasibility via a factorized categorical policy with masked softmax. For agent $i$, the policy over its feasible action set $\mathcal{A}_{i,t}$ is Eq.\eqref{eq:masked_softmax_main}.
This guarantees that $\sum_{a \in \mathcal{A}_{i,t}} \pi_\theta^i(a \mid o_{i,t}) = 1$, placing the policy on the boundary of the matroid polytope $\mathcal{P}(\mathcal{I}_t)$ where idle probability is zero. This guarantees that the induced marginals $\mathbf{x}(\theta)$ reside within the matroid polytope $\mathcal{P}(\mathcal{I}_t)$, satisfying the feasibility condition in Lemma~\ref{lem:masked_softmax} and enabling valid PME gradient estimation.

\paragraph{Local Context for Unbiased Gradient Estimation.}
Proposition~\ref{prop:unbiased} shows that computing the unbiased gradient estimate requires each agent to evaluate its marginal contribution $F_t(A_t) - F_t(A_{t}^{-i})$. We provide this context through two mechanisms. GNN encoder performs $K$ rounds of message passing over a dynamic communication graph $\mathcal{E}_t$, giving each agent a $K$-hop receptive field that captures relevant neighborhood information. MLP encoder directly concatenates features of the $k$ nearest neighboring agents and tasks into a fixed-dimensional observation vector. Both mechanisms supply the necessary context for decentralized credit assignment while respecting communication constraints.

\subsection{Network Architecture}
\label{subsec:heads_rewards}

We implement the policy $\pi_{\theta}$ and value function $V_{\phi}$ using an actor-critic framework with separate but structurally identical 2-layer encoders, as exemplified by the SubMAPG-G architecture in \autoref{fig:network}.

\begin{figure}[htbp]
    \centering
    \includegraphics[width=0.8\columnwidth]{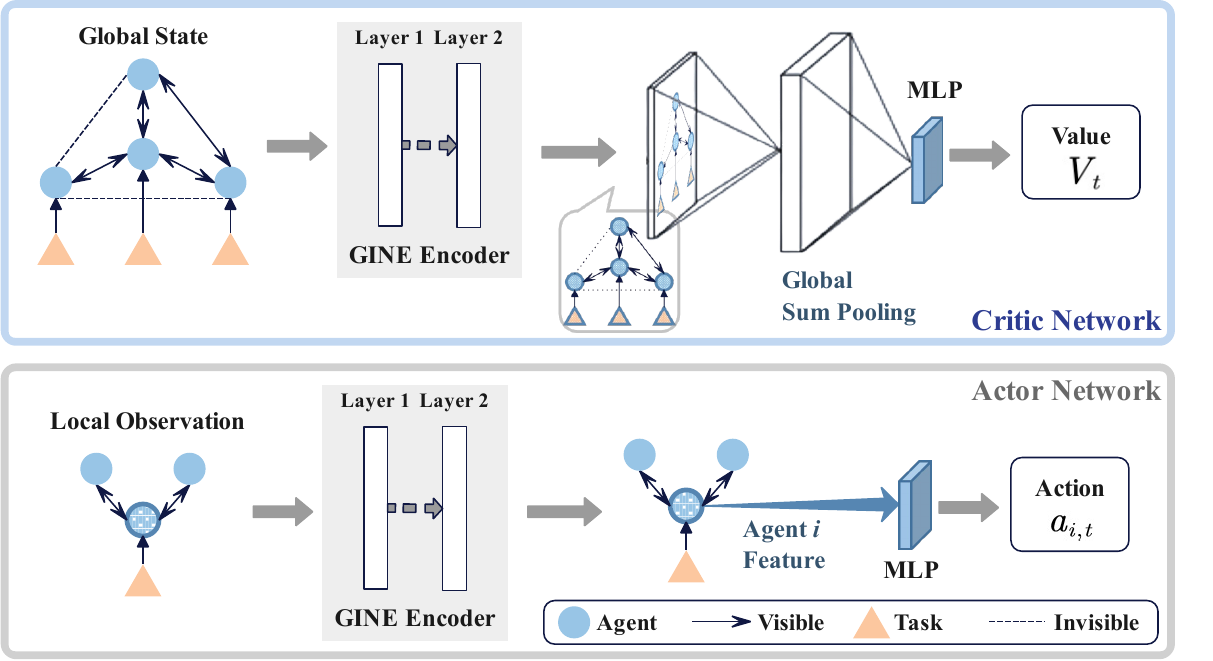}
    \caption{The Actor-Critic architecture of SubMAPG-G. The \textbf{Critic network} aggregates information from the global state graph via sum pooling to estimate the value $V_t$, while the \textbf{Actor network} extracts ego-centric features from the local observation to generate the action $a_{i,t}$.}
    \label{fig:network}
\end{figure}

\paragraph{Reward.}
Across all experiments, SubMAPG constructs agent-wise rewards from the
task-specific submodular utility $F_t$ using the difference reward $r_{i,t} = F_t\big((i,a_{i,t})\mid A_t^{-i}\big) = F_t(A_t)-F_t(A_t^{-i})$, which corresponds to the selected-action marginal contribution used in the stagewise policy-gradient estimator of Lemma~\ref{lem:spg}. Its coordinate-wise counterpart is the unbiased PME marginal-gradient estimator in Proposition~\ref{prop:unbiased}. For information coverage, the stagewise utility is the covered information value
\begin{equation}
\label{eq:app_cov_utility}
F_t^{\mathrm{cov}}(A_t)
=
\sum_{v\in \cup_{i\in\mathcal{N}_t}D_{i,t}}\rho(v),
\end{equation}
where $D_{i,t}$ denotes the set of cells covered by agent $i$ at round $t$
and $\rho(v)$ is the information density. 
For an agent-action pair $e = (i,a)$, let $\mathbf{p}_{i,t}$
denote the predicted next position of agent $i$ after executing
action $a$ from state $s_t$. Define the per-pair tracking score
\[
w_{j,t}(i,a)
\triangleq
\max\!\left(
0,\,
\frac{r_{\mathrm{sen}} - \|\mathbf{p}_{i,t}^{a} - \mathbf{q}_{j,t}\|}
     {r_{\mathrm{sen}}}
\right).
\]
The tracking utility is
\begin{equation}
\label{eq:app_track_utility}
F_t^{\mathrm{trk}}(A_t)
=
\frac{1}{|\mathcal{N}_t||\mathcal{M}_t|}
\sum_{j \in \mathcal{M}_t}
\max_{(i,a) \in A_t}
w_{j,t}(i,a),
\end{equation}
with the convention $\max_{\emptyset} = 0$.
Here $\mathcal{M}_t$ is the active target set and $\mathbf{q}_{j,t}$ denote the positions of target $j$.

\paragraph{Actor and Critic Networks.}
The actor processes the encoded representation $h_{i,t}$ to produce action logits, followed by the masked softmax~\eqref{eq:masked_softmax_main} over the feasible action set $\mathcal{A}_{i,t}$, yielding a categorical distribution that strictly enforces the partition matroid constraint. The critic estimates a centralized state value $V_t$ from aggregated agent embeddings via sum pooling.
\paragraph{SubMAPG-G.} 
The actor network for each agent $i$ operates on a local observation graph $\mathcal{G}_{i,t}$ that includes the ego agent, neighboring agents within communication radius $r_{\text{com}}$, and targets within sensing radius $r_{\text{sen}}$. Edges are bidirectional among agents and directed from observable targets to the ego agent. The critic network uses a fully connected global graph $\mathcal{G}_t$ containing all active agents and targets to ensure global state awareness. Both networks employ a 2-layer Graph Isomorphism Network with Edge features (GINE)~\citep{hu2020strategies} as the encoder.
\paragraph{SubMAPG-M.} 
For the actor, the input is a fixed-dimensional vector formed by concatenating relative features of the $k$ nearest agents (within $r_{\text{com}}$) and $k$ nearest targets (within $r_{\text{sen}}$). Missing entries are zero-padded to maintain consistent dimensionality. For the critic, the input is a global state vector containing absolute positions of all agents and targets. The encoder processes these inputs into latent representations for the subsequent actor and critic heads, which are structurally identical to those in the GNN variant. This design provides an alternative for environments with stable dimensions.

\begin{remark}[Neural network policies]
\label{rem:nn_gap}
Proposition~\ref{prop:tabular_equivalence} establishes that tabular softmax policies preserve feasibility on $\mathcal{F}_t$ and induce PME marginal-ascent directions in the induced marginal space. For neural network policies (as used in SubMAPG-G and SubMAPG-M), the mapping $\theta \rightarrow \mathbf{x}(\theta)$ is a composition of the network with softmax, and the Jacobian $\mathbf{J}$ is no longer a simple projection operator. The induced marginal update may not align with the projected gradient direction, introducing a parameterization-dependent bias. Quantifying this bias requires assumptions on the conditioning of $\mathbf{J}$ that are architecture-specific and beyond the scope of this work. We therefore state our theoretical guarantees (Theorems~\ref{thm:stagewise} and~\ref{thm:regret}) for the Euclidean projected stochastic-gradient dynamics in $\mathbf{x}$-space. The parameterized SubMAPG serves as the practical implementation, which empirically achieves performance consistent with, and often exceeding, the conservative $1/2$ benchmark suggested by the marginal-space analysis.

\end{remark}

\section{Simulation Details and Additional Experiments}
\label{app:experiments}

This appendix provides implementation details and additional experimental results supporting the main findings in Section~\ref{sec:experiments}.

\subsection{Multi-Agent Information Coverage and Optimality Analysis}
\label{app:optimal}

\paragraph{Coverage Environment Setup.}
The environment consists of a discrete $30 \times 30$ grid populated by up to $N=5$ agents operating over a training horizon of $T=100$. 
At each step, agents jointly maximize the covered information value
$F_t^{\mathrm{cov}}(A_t)=\sum_{v\in\cup_{i\in\mathcal{N}_t}D_{i,t}}\rho(v)$, as defined in~\eqref{eq:app_cov_utility}, where $D_{i,t}$ is the set of cells
covered by agent $i$ and $\rho(v)\ge 0$ is the information density at cell $v$. This is a weighted coverage function with nonnegative weights, and is therefore normalized, monotone, and submodular in the selected agent-action set. Its marginal gains are uniformly bounded by $\sum_v \rho(v)$, so Assumption~\ref{ass:submodular} holds for the coverage experiments.
Following the setup~\citep{prajapat2024submodular}, the density field $\rho$ is generated from one of three distributions: Uniform, Bimodal Gaussian, or a Gaussian Process (GP). Each agent possesses a coverage radius $r_{\mathrm{cov}}=1$ and a communication range $r_{\mathrm{com}}=2$. Agents select actions from $\{\mathit{stay}, \mathit{right}, \mathit{up}, \mathit{left}, \mathit{down}\}$ and start from a random $5\times 5$ clustered region. All learning-based methods are trained in closed environments where all agents remain active throughout each 100-step episode. Evaluation is conducted in both closed and open environments over a horizon of $2000$ steps; in the open setting, two agents remain active for the full episode, while the other three agents randomly enter during the first half of the episode and leave after surviving for at least $20$ steps.

\paragraph{Comparison with Multi-Agent Submodular RL.}
In the absence of existing submodular RL methods directly designed for multi-agent submodular coverage, we construct a baseline by extending the single-agent SubPO~\citep{prajapat2024submodular} to the multi-agent setting, termed MASubPO. The multi-agent extension inherits the temporal marginal gain of SubPO, $r_t=F_t^{\mathrm{cov}}(A_t)-F_{t-1}^{\mathrm{cov}}(A_{t-1})$, and treats $r_t$ as a global team reward shared by all agents. In contrast, SubMAPG-M uses an agent-wise marginal contribution reward, $r_{i,t}=F_t^{\mathrm{cov}}(A_t)-F_t^{\mathrm{cov}}(A_t^{-i})$, where $A_t^{-i}$ denotes the joint action set with agent $i$ removed. As shown in \autoref{fig:cov_train}, SubMAPG-M converges faster than MASubPO across all density fields. Moreover, \autoref{fig:cov_eval_close} and \autoref{fig:cov_eval_open} show that SubMAPG-M consistently outperforms MASubPO in both closed and open evaluations. These results indicate that
agent-wise credit attribution provides a more effective learning signal for multi-agent coordination than the temporal team-level progress metric used by
MASubPO.

\begin{figure}[htbp]
    \centering
    \resizebox{0.95\textwidth}{!}{%
        \begin{minipage}{\textwidth}
            \begin{subfigure}{0.32\textwidth}
                \includegraphics[width=\linewidth]{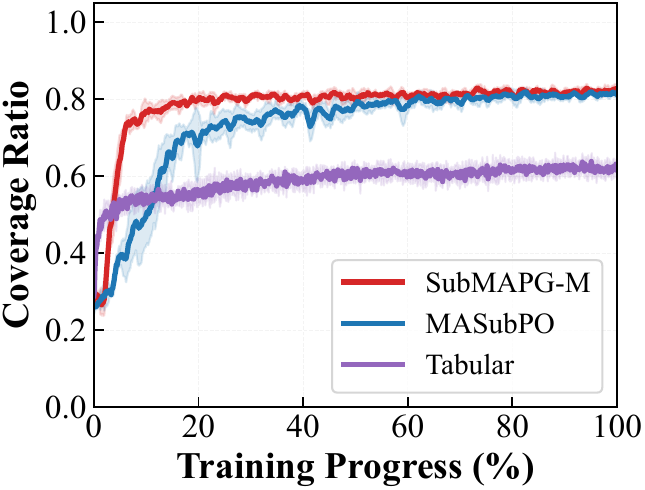}
                \caption{Uniform}
                \label{fig:cov_train_constant}
            \end{subfigure}
            \hfill
            \begin{subfigure}{0.32\textwidth}
                \includegraphics[width=\linewidth]{figures/cov_train/gauss.pdf}
                \caption{Gaussian Process}
                \label{fig:cov_train_gp}
            \end{subfigure}
            \hfill
            \begin{subfigure}{0.32\textwidth}
                \includegraphics[width=\linewidth]{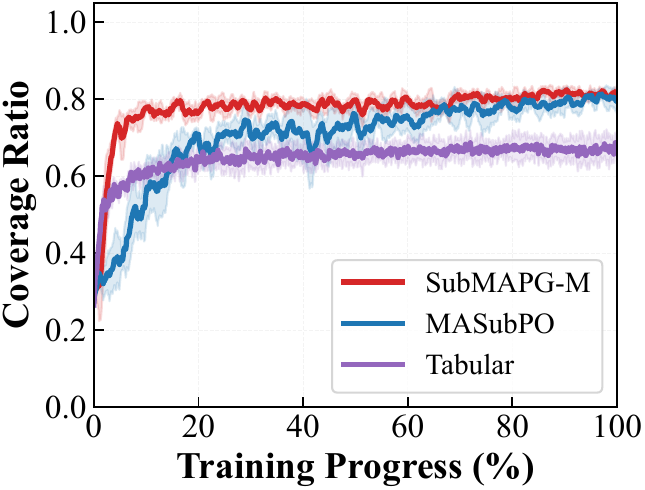}
                \caption{Bimodal Gaussian}
                \label{fig:cov_train_bimodal}
            \end{subfigure}
        \end{minipage}%
    }
   \caption{Training performance in the closed information coverage environment under
    three density fields. SubMAPG-M converges faster than MASubPO across all fields; Tabular is included as a tabular softmax baseline for Proposition~\ref{prop:tabular_equivalence}. Results are averaged over 5 seeds; shaded regions indicate standard deviation.}
    \label{fig:cov_train}
\end{figure}

\paragraph{Optimality Gap and Theoretical Bound.}
We evaluate the learned policy against greedy methods with known approximation behavior. Since the dynamic coverage problem does not admit a tractable exact optimal oracle, we use OSG~\citep{xu2023online}, an online submodular greedy method with a $1/2$-type guarantee, as the regret reference. We also report CSG-g, a centralized sequential greedy method with global information~\citep{nemhauser1978analysis,qu2019distributed}, as a strong global-information benchmark. As shown in \autoref{fig:cov_eval_close}(a)-(c), SubMAPG-M achieves stable coverage above OSG in the closed environment and approaches CSG-g under the Gaussian Process field. The remaining gap to CSG-g under the other density fields is expected, as CSG-g uses global information whereas SubMAPG-M relies only on local communication and observations. In the open environment, \autoref{fig:cov_eval_open}(a)-(c) shows that SubMAPG-M remains robust under agent arrivals and departures. With a time-varying active-agent set, the one-step greedy decisions of CSG-g become less stable, and OSG even outperforms CSG-g in the Gaussian Process field. 
In \autoref{fig:cov_eval_open}(d)-(f), SubMAPG-M maintains a lower
average utility gap with respect to OSG or converges toward zero. This
metric is an empirical online-coordination benchmark rather than a direct measurement of the dynamic regret in Definition~\ref{def:regret}, whose comparator depends on the exact $\mathrm{OPT}_t(s_t)$ and the optimal continuous path length $\mathcal{P}_T$. Computing these quantities would require solving the PME or discrete submodular maximization problem at every realized open-system state, which is intractable at the reported scale. The open-coverage protocol nevertheless matches the slowly varying regime considered by Corollary~\ref{cor:sublinear}: agents enter only during the
first half of the episode and remain active for at least $20$ steps before departure, so the active set changes gradually rather than adversarially at every round.

\begin{figure}[htbp]
    \centering
    \resizebox{0.95\textwidth}{!}{%
        \begin{minipage}{\textwidth}
            \begin{subfigure}{0.32\textwidth}
                \includegraphics[width=\linewidth]{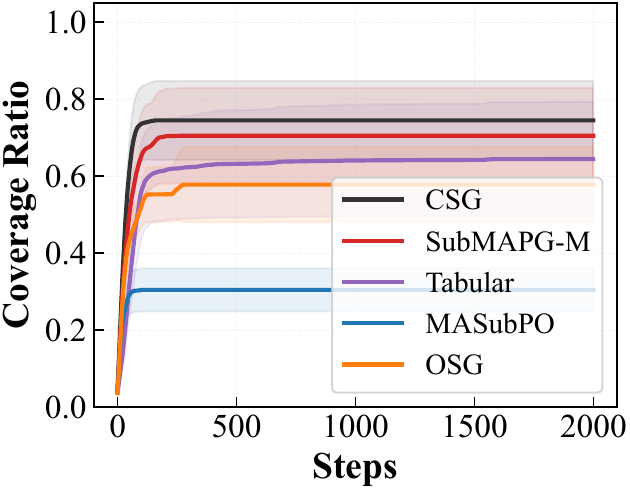}
                \caption{Uniform}
                \label{fig:cov_close_constant_ratio}
            \end{subfigure}
            \hfill
            \begin{subfigure}{0.32\textwidth}
                \includegraphics[width=\linewidth]{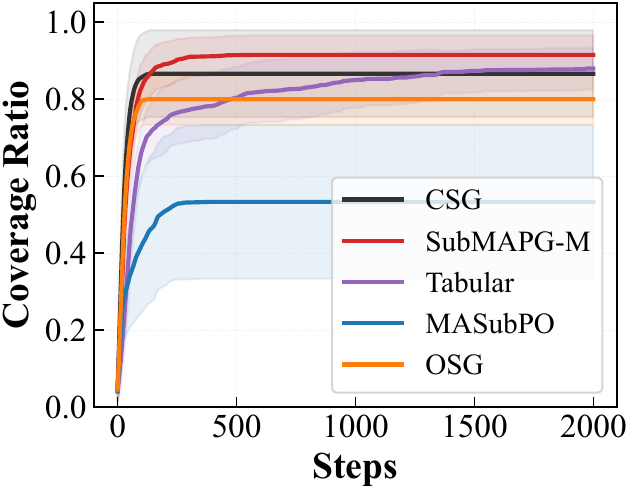}
                \caption{Gaussian Process}
                \label{fig:cov_close_gp_ratio}
            \end{subfigure}
            \hfill
            \begin{subfigure}{0.32\textwidth}
                \includegraphics[width=\linewidth]{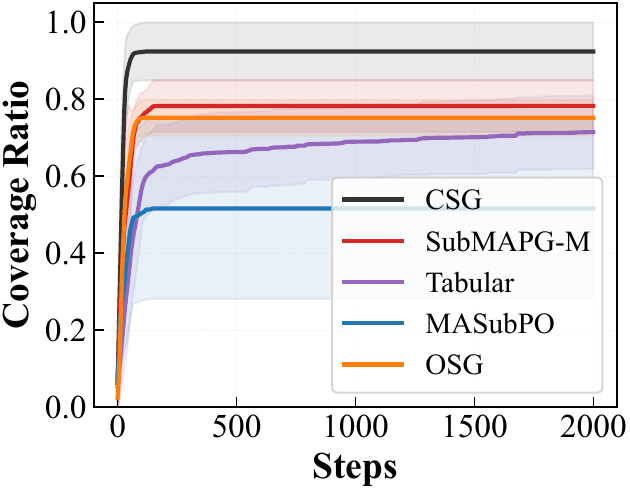}
                \caption{Bimodal Gaussian}
                \label{fig:cov_close_bimodal_ratio}
            \end{subfigure}

            \vspace{0.1cm}

            \begin{subfigure}{0.32\textwidth}
                \includegraphics[width=\linewidth]{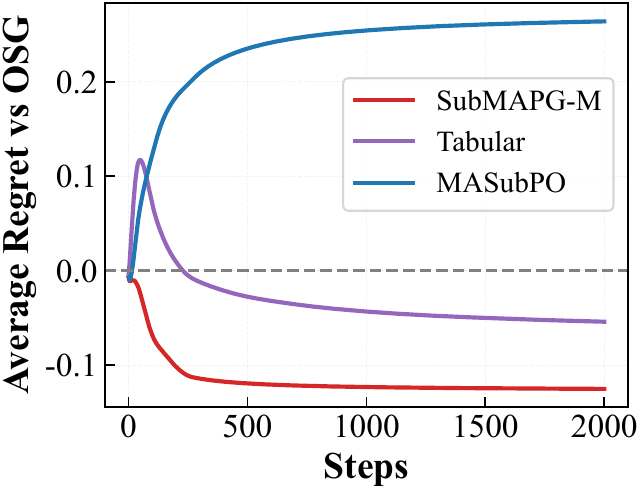}
                \caption{Uniform}
                \label{fig:cov_close_constant_regret}
            \end{subfigure}
            \hfill
            \begin{subfigure}{0.32\textwidth}
                \includegraphics[width=\linewidth]{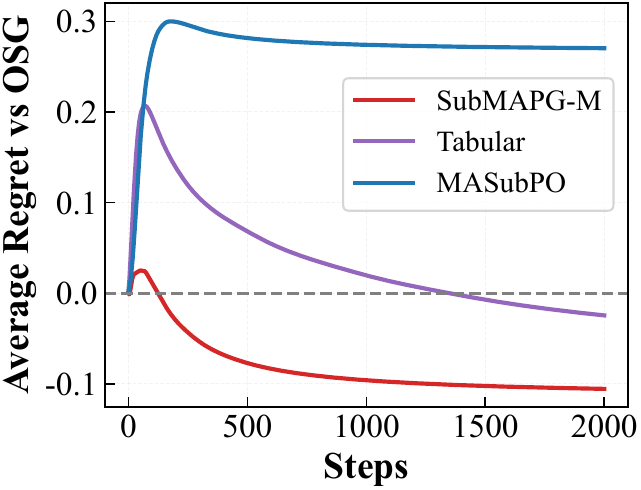}
                \caption{Gaussian Process}
                \label{fig:cov_close_gp_regret}
            \end{subfigure}
            \hfill
            \begin{subfigure}{0.32\textwidth}
                \includegraphics[width=\linewidth]{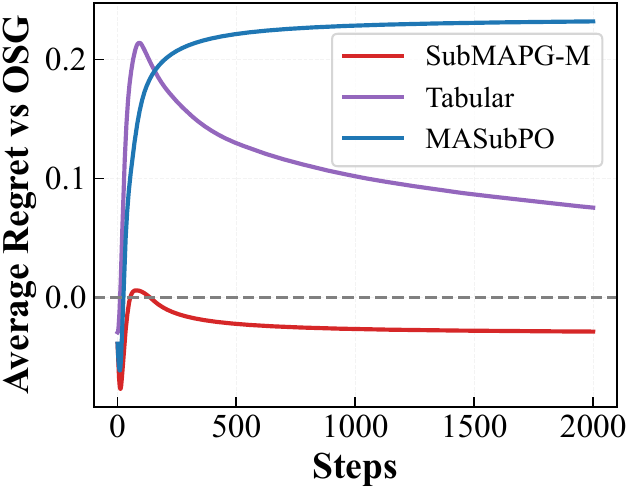}
                \caption{Bimodal Gaussian}
                \label{fig:cov_close_bimodal_regret}
            \end{subfigure}
        \end{minipage}%
    }
    \caption{Coverage performance and regret in the closed information coverage environment. Columns correspond to Uniform, Gaussian Process, and Bimodal Gaussian density fields. Top row: normalized covered information value $\sum_{v\in \cup_i D_{i,t}}\rho(v) / \sum_v \rho(v)$. Bottom row: average utility gap to OSG $\frac{1}{t}\sum_{\tau=1}^{t} \bigl(F_\tau^{\mathrm{OSG}}-F_\tau^{\mathrm{method}}\bigr)$. Curves report the mean over 20 independent evaluation rollouts; shaded regions indicate standard deviation.}
    \label{fig:cov_eval_close}
\end{figure}

\begin{figure}[htbp]
    \centering
    \resizebox{0.95\textwidth}{!}{%
        \begin{minipage}{\textwidth}
            \begin{subfigure}{0.32\textwidth}
                \includegraphics[width=\linewidth]{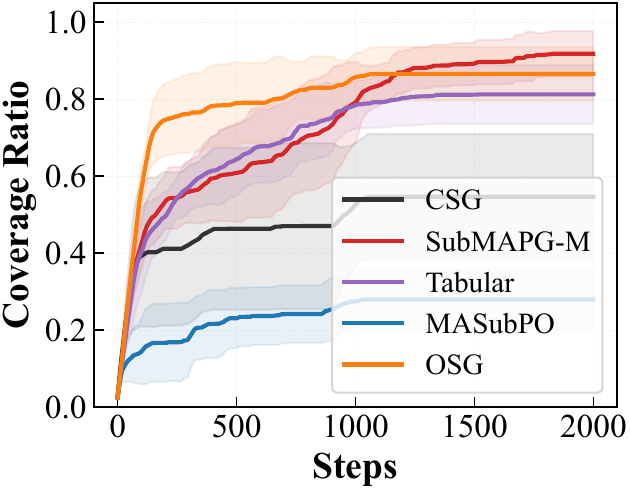}
                \caption{Uniform}
                \label{fig:cov_open_constant_ratio}
            \end{subfigure}
            \hfill
            \begin{subfigure}{0.32\textwidth}
                \includegraphics[width=\linewidth]{figures/cov_eval/open/gp_cov_ratio.pdf}
                \caption{Gaussian Process}
                \label{fig:cov_open_gp_ratio}
            \end{subfigure}
            \hfill
            \begin{subfigure}{0.32\textwidth}
                \includegraphics[width=\linewidth]{figures/cov_eval/open/bimodal_cov_ratio.pdf}
                \caption{Bimodal Gaussian}
                \label{fig:cov_open_bimodal_ratio}
            \end{subfigure}

            \vspace{0.1cm}

            \begin{subfigure}{0.32\textwidth}
                \includegraphics[width=\linewidth]{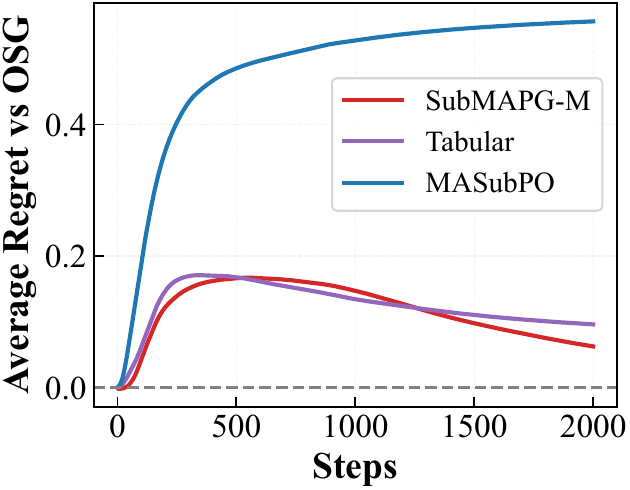}
                \caption{Uniform}
                \label{fig:cov_open_constant_regret}
            \end{subfigure}
            \hfill
            \begin{subfigure}{0.32\textwidth}
                \includegraphics[width=\linewidth]{figures/cov_eval/open/gp_avg_regret.pdf}
                \caption{Gaussian Process}
                \label{fig:cov_open_gp_regret}
            \end{subfigure}
            \hfill
            \begin{subfigure}{0.32\textwidth}
                \includegraphics[width=\linewidth]{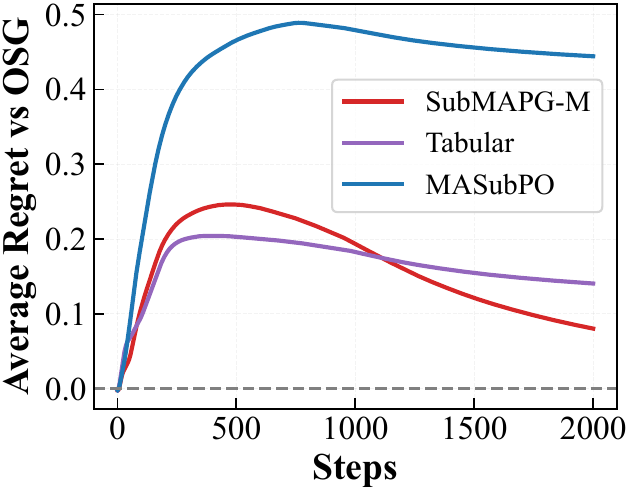}
                \caption{Bimodal Gaussian}
                \label{fig:cov_open_bimodal_regret}
            \end{subfigure}
        \end{minipage}%
    }
    \caption{
    Coverage performance and regret in the open information coverage environment with agent arrivals and departures. Columns correspond to
    Uniform, Gaussian Process, and Bimodal Gaussian density fields. Top row: normalized covered information value $\sum_{v\in \cup_i D_{i,t}}\rho(v)\big/\sum_v \rho(v)$. Bottom row: average utility gap to OSG $\frac{1}{t}\sum_{\tau=1}^{t}\bigl(F_\tau^{\mathrm{OSG}}-F_\tau^{\mathrm{method}}\bigr)$. Curves report the mean over 20 independent evaluation rollouts; shaded
    regions indicate standard deviation.}
    \label{fig:cov_eval_open}
\end{figure}

\paragraph{Tabular Softmax Equivalence.} We include a Tabular baseline to isolate the PME-based tabular softmax update in Proposition~\ref{prop:tabular_equivalence}, without neural function approximation. The baseline uses independent tabular softmax policies~\citep{mei2020on,zhang2022on} with the same agent-wise difference reward as SubMAPG-M. In \autoref{fig:cov_train}, training curves are plotted against normalized training progress for visual comparison, since Tabular is trained with a larger episode budget than the neural policies. Although Tabular eventually learns reasonable policies, it typically requires around 20,000 episodes to fully converge, whereas SubMAPG-M and MASubPO converge within 5,000 episodes. This difference highlights the practical benefit of replacing tabular parameters with a neural encoder while keeping the same difference-reward update. Since Tabular samples actions stochastically during training but selects the largest-probability action at evaluation time, its evaluation performance is better than what the training curve alone suggests. However, as shown in \autoref{fig:cov_eval_close} and \autoref{fig:cov_eval_open}, Tabular remains weaker than SubMAPG-M due to its limited state representation. These results support Proposition~\ref{prop:tabular_equivalence} by showing
that, in the tabular softmax setting, the agent-wise difference reward induces a PME-compatible feasible ascent update, while neural function approximation
improves sample efficiency and generalization in practice.

\subsection{Multi-Agent Target Tracking and Performance Analysis}
\label{app:train_env}
\paragraph{Tracking Environment Setup.}
Agents and targets move across a $100 \times 100\si{\meter^2}$ continuous 2D domain.
Each agent follows unicycle kinematics with constant speed $v_a = \SI{1.0}{\meter/\second}$:
\begin{equation}\label{eq:dynamics}
\begin{aligned}
p^x_{i,t+1} &= p^x_{i,t} + v_a \Delta t \cos \psi_{i,t}, \\
p^y_{i,t+1} &= p^y_{i,t} + v_a \Delta t \sin \psi_{i,t}, \\
\theta_{i,t+1} &= \psi_{i,t} + \omega_{i,t} \Delta t,
\end{aligned}
\end{equation}
where $\mathbf p_{i,t}=(p^x_{i,t},p^y_{i,t})$ is the position of agent $i$, $\psi_{i,t}$ is its heading, $\Delta t=1$ is the time-step interval, and $\omega_{i,t}\in[-\pi/6,\pi/6]\si{\radian/\second}$ is selected from 12 uniformly spaced steering actions. Agents sense targets within $r_{\text{sen}} = 10\si{\meter}$ and communicate with other agents within $r_{\text{com}} = 25\si{\meter}$. Targets move at speed $v_m =\SI{0.25}{\meter/\second}$ with three motion patterns: Static targets remain stationary at initial positions, Linear targets move in a fixed initial direction with constant velocity, Random targets change direction randomly with 5\% probability each step while maintaining constant speed. 

We use the tracking utility $F_t^{\mathrm{trk}}(A_t)$ defined
in~\eqref{eq:app_track_utility}, which rewards each active target
according to the closest selected agent-action pair within the sensing
radius. For any fixed round $t$ with fixed active sets $\mathcal{N}_t$
and $\mathcal{M}_t$, the normalization factor in
$F_t^{\mathrm{trk}}$ is constant with respect to the selected joint
action. 

Each target-wise term has the form $\max_{e\in A_t} w_{j,t}(e)$ for a nonnegative score function $w_{j,t}(e)\in[0,1]$. Such maximum-of-score set functions are normalized, monotone, and submodular; nonnegative sums and positive rescaling preserve these properties. Since each sensing score is clipped to $[0,1]$, the marginal gains are uniformly bounded. Therefore, $F_t^{\mathrm{trk}}$ satisfies
Assumption~\ref{ass:submodular} in the tracking experiments.

We train all agents in closed environments with a fixed number of 12 agents and 12 targets per episode, each episode lasting 200 steps. During training, all targets follow linear motion patterns to facilitate policy learning. We set the actor learning rate to $3\times10^{-4}$, the critic learning rate to $5\times10^{-4}$, the PPO epoch to 8, the hidden size to 128, the entropy coefficient to 0.012, and enable advantage normalization.

\paragraph{Open Environment Experiments.}
Models are evaluated in open environments where agents and targets dynamically arrive and depart. The initial set of agents and targets persists for the entire episode, while additional entities spawn at random steps uniformly sampled from $[1,0.75T]$ with $T=2500$ steps. Each new entity has a random lifespan with a minimum of 400 steps. The environment supports up to 12 agents and 12 targets. Among the 12 targets, the motion pattern distribution is set to
Static:Linear:Random = $1{:}3{:}8$. We evaluate three initial population sizes: 4 agents/4 targets, 8 agents/8 targets, and 10 agents/10 targets. These configurations test robustness across different initial densities before the system reaches its maximum open-system capacity. As shown in \autoref{fig:appendix_open_init_pop}, SubMAPG-G consistently achieves strong cumulative utility across all three settings and maintains agent-target distances comparable to centralized greedy methods while clearly outperforming OSG, indicating robust performance under varying initial population sizes. As in the coverage experiments, these results should be interpreted as empirical evidence of robustness under open-system dynamics rather than a direct numerical verification of Corollary~\ref{cor:sublinear}. The theoretical condition depends on the path length $\mathcal{P}_T$ of the optimal continuous PME solutions, which is not tractable to compute in the tracking environment. The evaluation protocol is designed to be consistent with the slowly varying regime: initial agents and targets persist for the full episode, additional entities arrive before $0.75T$, and each new entity has a minimum lifespan of $400$ steps.

\begin{figure*}[t]
  \centering
  \setlength{\tabcolsep}{6pt}
  \renewcommand{\arraystretch}{1.0}

  \begin{tabular}{ccc}
    \begin{subfigure}[t]{0.31\linewidth}
      \centering
      \includegraphics[width=\linewidth]{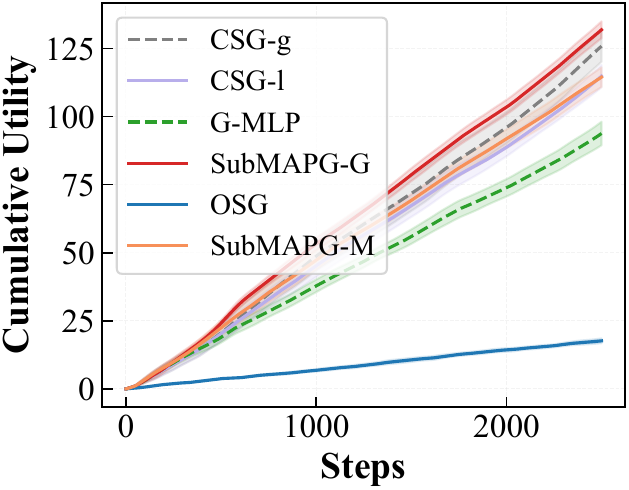}
      \caption{$N_0=M_0=4$}
      \label{fig:init_4a_4t_cum_app}
    \end{subfigure}
    &
    \begin{subfigure}[t]{0.31\linewidth}
      \centering
      \includegraphics[width=\linewidth]{figures/track_eval/openness/init_8a_8t_cum.pdf}
      \caption{$N_0=M_0=8$}
      \label{fig:init_8a_8t_cum_app}
    \end{subfigure}
    &
    \begin{subfigure}[t]{0.31\linewidth}
      \centering
      \includegraphics[width=\linewidth]{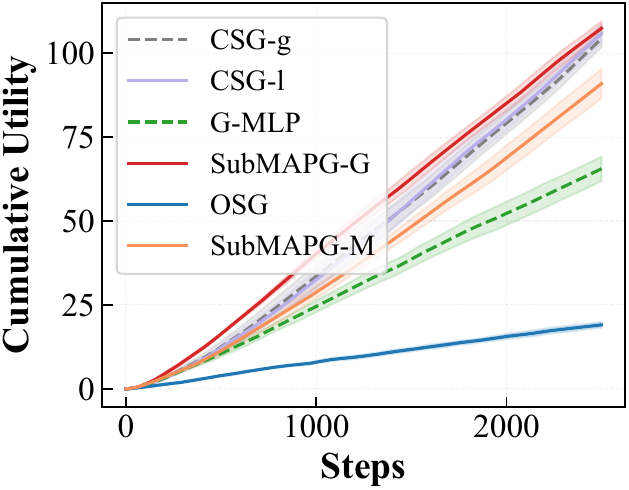}
      \caption{$N_0=M_0=10$}
      \label{fig:init_10a_10t_cum_app}
    \end{subfigure}
    \\[0.35em]
    \begin{subfigure}[t]{0.31\linewidth}
      \centering
      \includegraphics[width=\linewidth]{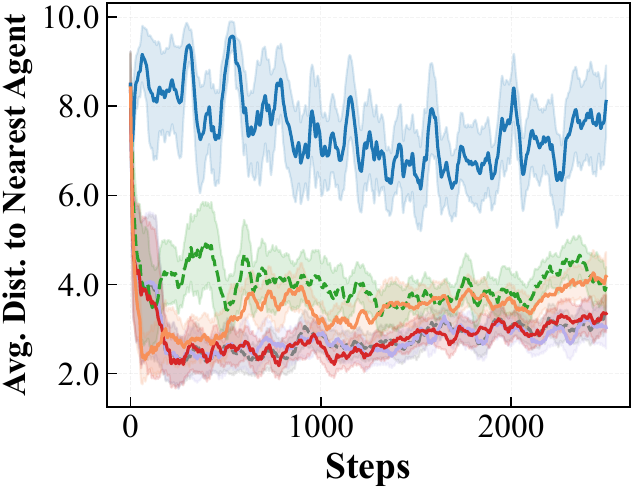}
      \caption{$N_0=M_0=4$}
      \label{fig:init_4a_4t_dist_app}
    \end{subfigure}
    &
    \begin{subfigure}[t]{0.31\linewidth}
      \centering
      \includegraphics[width=\linewidth]{figures/track_eval/openness/init_8a_8t_dist.pdf}
      \caption{$N_0=M_0=8$}
      \label{fig:init_8a_8t_dist_app}
    \end{subfigure}
    &
    \begin{subfigure}[t]{0.31\linewidth}
      \centering
      \includegraphics[width=\linewidth]{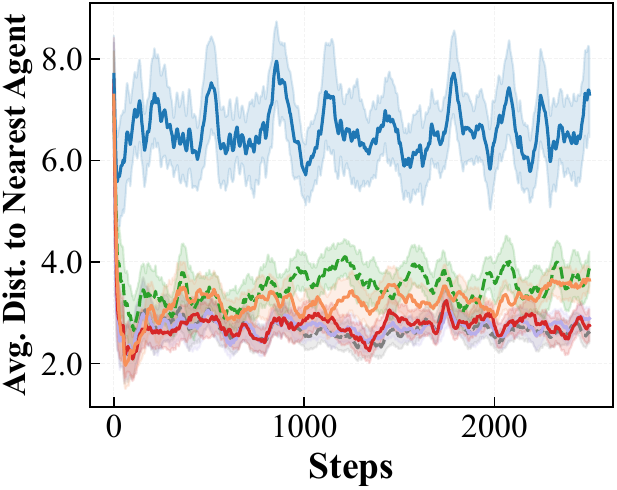}
      \caption{$N_0=M_0=10$}
      \label{fig:init_10a_10t_dist_app}
    \end{subfigure}
  \end{tabular}

  \caption{Tracking performance in open environments under different initial population sizes. Columns correspond to $N_0=M_0=4$, $N_0=M_0=8$, and $N_0=M_0=10$, where $N_0$ and $M_0$ denote the initial numbers of active agents and targets respectively. Top row: cumulative utility $\sum_{\tau=1}^{t} F_\tau$. Bottom row: average distance from each target to its nearest agent. Curves report the mean over 5 seeds; shaded regions indicate one standard
  deviation.
  }
  \label{fig:appendix_open_init_pop}
\end{figure*}

\paragraph{Reward Mechanism: Difference Reward Versus Global Reward.}
We conduct an ablation study to validate the effectiveness of the submodular difference reward. The study compares SubMAPG-M, trained with the difference reward $r_{i,t} = F_t^{\mathrm{trk}}(A_t) - F_t^{\mathrm{trk}}(A_t^{-i})$, against G-MLP, trained with the shared global reward $r_t = F_t^{\mathrm{trk}}(A_t)$. Both models share an identical network architecture. As illustrated in \autoref{fig:diff-global}, SubMAPG-M converges faster and achieves higher cumulative utility, confirming that difference rewards provide more informative gradients for decentralized coordination. This advantage is consistently corroborated by results in the coverage task (see \autoref{fig:cov_train}). Note that we do not report a corresponding ablation for the GNN-based SubMAPG-G, as its architecture failed to converge when trained solely with a global reward.

\begin{figure}[htbp]
    \centering
    \includegraphics[width=0.5\linewidth]{figures/track_eval/diff_and_global.pdf}
    \caption{
    Reward ablation in the tracking task. SubMAPG-M uses the agent-wise difference reward $r_{i,t}=F_t^{\mathrm{trk}}(A_t)-F_t^{\mathrm{trk}}(A_t^{-i})$, while G-MLP uses the shared global utility reward
    $r_t=F_t^{\mathrm{trk}}(A_t)$ with the same MLP encoder.}
    \label{fig:diff-global}
\end{figure}

\paragraph{Scalability.}
\begin{figure}[htbp]
    \centering
    \includegraphics[width=0.5\linewidth]{figures/track_eval/scale.pdf}  
    \caption{
    Zero-shot scalability from 12 to 48 agents and targets using a policy trained on systems with up to 12 agents and targets. The map side length and sensing radius are scaled to keep the spatial density and sensing difficulty approximately constant. SubMAPG-G(scaled) also scales the communication radius $r_{\mathrm{com}}$, while SubMAPG-G(fixed) keeps $r_{\mathrm{com}}=\SI{25}{\meter}$.}
    \label{fig:scale}
\end{figure}

We evaluate zero-shot scalability by deploying policies trained with up to 12 agents and targets to systems with up to 48 agents and targets. To keep the spatial density and sensing difficulty approximately constant, we scale the map side length and sensing radius $r_{\mathrm{sen}}$ together. Without this scaling, larger maps would make targets substantially harder to observe regardless of the policy. As shown in \autoref{fig:scale}, both SubMAPG-G variants retain 50-65\% of CSG-l's cumulative utility despite the domain
shift. Recall that CSG-l uses centralized coordination, whose advantage grows with the number of agents and targets. Notably, SubMAPG-G(fixed) with $r_{\mathrm{com}}=\SI{25}{\meter}$ performs comparably to the scaled communication variant, suggesting that the learned policy generalizes to larger systems without increasing communication range.

\paragraph{Robustness Analysis.}

In the following analysis, we consider a closed environment with 12 agents and 12 targets, and perform ablations on individual parameters of the environment to isolate their specific effects.

\textit{Robustness to Target Motion Patterns.} 
We assess robustness against diverse target behaviors using Static:Linear:Random ratios of 10:1:1, 5:4:3, and 1:3:8. As shown in \autoref{fig:ratio}, SubMAPG-G rivals or exceeds the centralized CSG-g, and SubMAPG-M remains competitive with CSG-l. Although the performance gap narrows as stochasticity increases in \autoref{fig:1_3_8_cum}, both variants maintain strong tracking capabilities. Furthermore, SubMAPG-M’s consistent superiority over G-MLP in all configurations validates the efficacy of our difference reward formulation.

\begin{figure}[htbp]
    \centering
    \resizebox{0.95\textwidth}{!}{%
        \begin{minipage}{\textwidth}
            \begin{subfigure}{0.32\textwidth}
                \includegraphics[width=\linewidth]{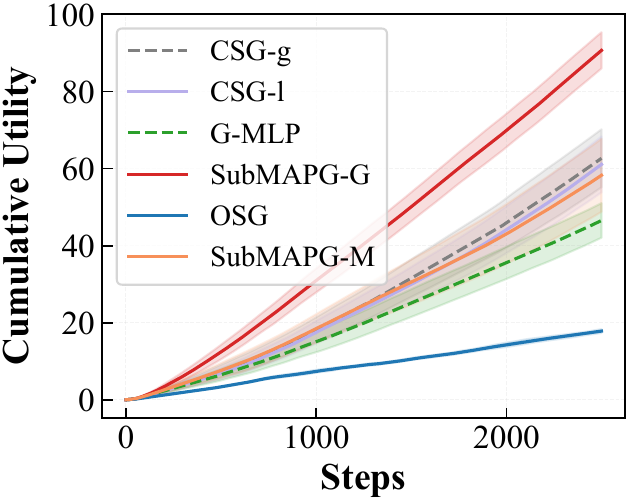}
                \caption{10:1:1}
                \label{fig:10_1_1_cum}
            \end{subfigure}
            \hfill
            \begin{subfigure}{0.32\textwidth}
                \includegraphics[width=\linewidth]{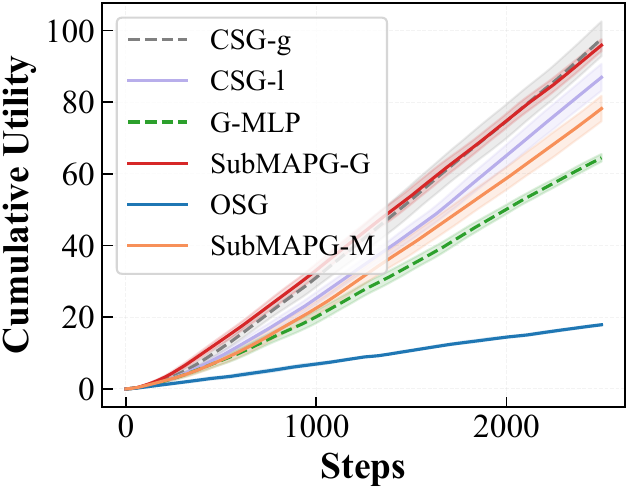}
                \caption{5:4:3}
                \label{fig:5_4_3_cum}
            \end{subfigure}
            \hfill
            \begin{subfigure}{0.32\textwidth}
                \includegraphics[width=\linewidth]{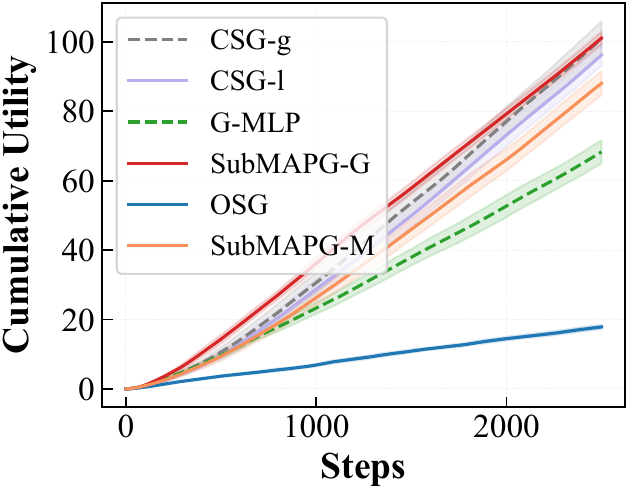}
                \caption{1:3:8}
                \label{fig:1_3_8_cum}
            \end{subfigure}
        
            \vspace{0.1cm}
           
            \begin{subfigure}{0.32\textwidth}
                \includegraphics[width=\linewidth]{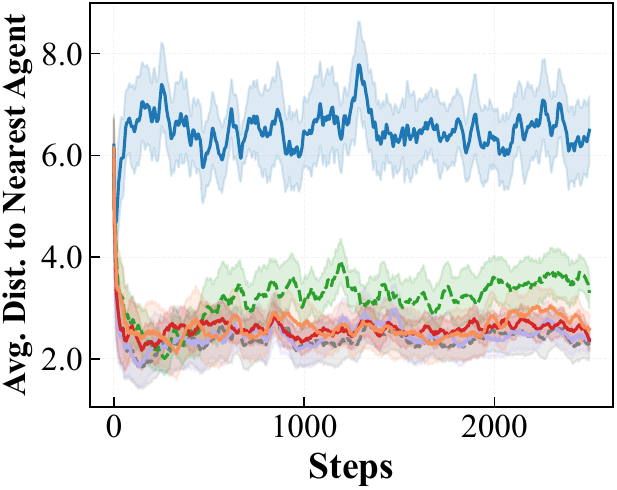}
                \caption{10:1:1}
                \label{fig:10_1_1_dist}
            \end{subfigure}
            \hfill
            \begin{subfigure}{0.32\textwidth}
                \includegraphics[width=\linewidth]{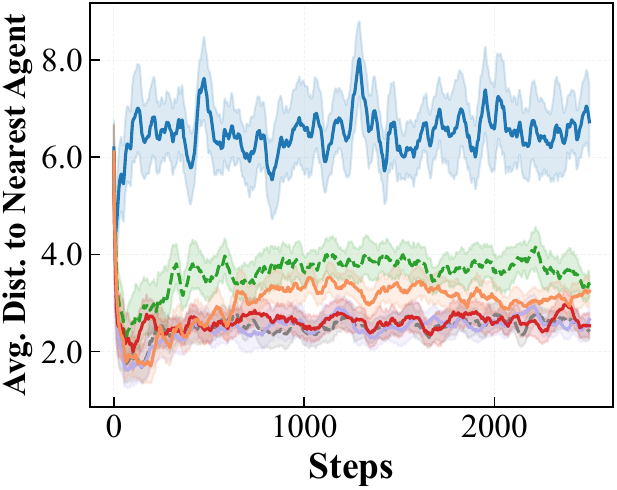}
                \caption{5:4:3}
                \label{fig:5_4_3_dist}
            \end{subfigure}
            \hfill
            \begin{subfigure}{0.32\textwidth}
                \includegraphics[width=\linewidth]{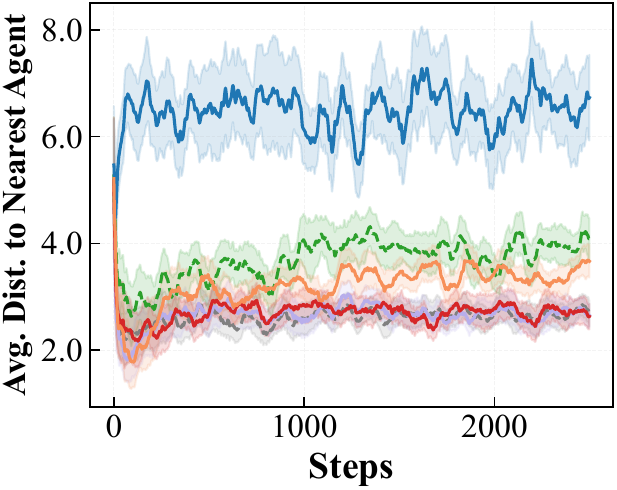}
                \caption{1:3:8}
                \label{fig:1_3_8_dist}
            \end{subfigure}
        \end{minipage}%
    }
    \caption{Tracking performance under different target-motion compositions. Columns correspond to Static:Linear:Random ratios of 10:1:1, 5:4:3,
    and 1:3:8. Top row: cumulative utility $\sum_{\tau=1}^{t} F_\tau$. Bottom row: average distance from each target to its nearest agent.
    Curves report the mean over 5 seeds; shaded regions indicate one standard deviation.}
    \label{fig:ratio}
\end{figure}

\textit{Robustness to Communication Range.}
We vary $r_{\mathrm{com}}\in\{5,15,35\}$\,m while fixing $r_{\mathrm{sen}}=10$\,m. As shown in Figure~\ref{fig:comm}, performance improves with larger communication range but exhibits diminishing gains beyond $15$\,m. At a restricted range of $r_{\mathrm{com}}=5$\,m, limited observation acquisition hampers all learning-based methods. In Figure~\ref{fig:com_15_cum} and \ref{fig:com_15_dist}, SubMAPG-G recovers near-greedy performance, whereas SubMAPG-M shows continued performance gains as $r_{\mathrm{com}}$ increases from $15$\,m to $35$\,m. These results indicate that the GNN-based variant is more resilient to limited communication.

\begin{figure}[htbp]
    \centering
    \resizebox{\textwidth}{!}{%
        \begin{minipage}{0.95\textwidth}
            \begin{subfigure}{0.32\textwidth}
                \includegraphics[width=\linewidth]{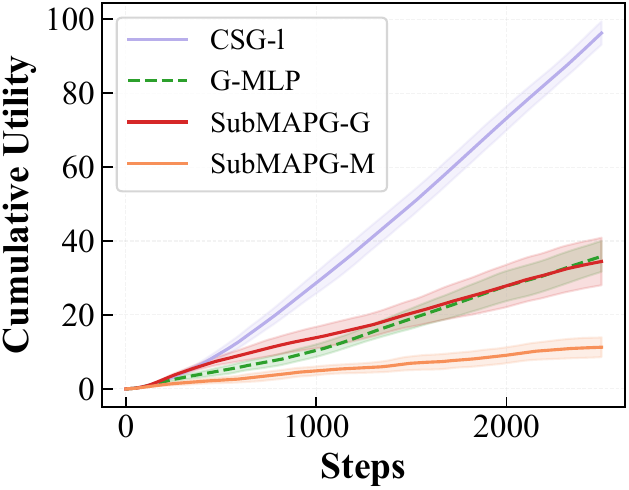}
                \caption{$r_{\mathrm{com}}=\SI{5}{\meter}$}
                \label{fig:com_5_cum}
            \end{subfigure}
            \hfill
            \begin{subfigure}{0.32\textwidth}
                \includegraphics[width=\linewidth]{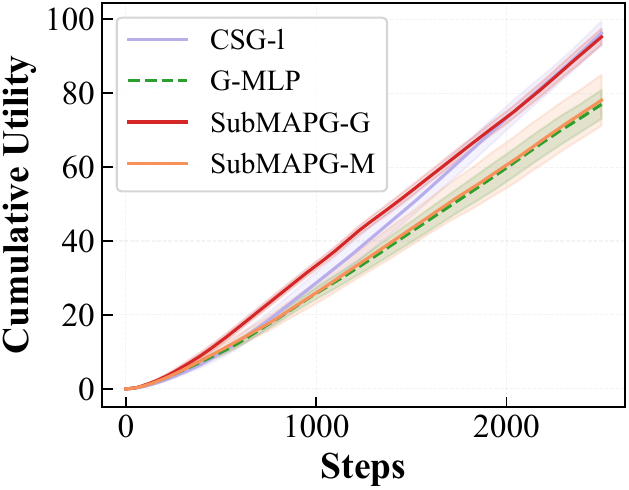}
                \caption{$r_{\mathrm{com}}=\SI{15}{\meter}$}
                \label{fig:com_15_cum}
            \end{subfigure}
            \hfill
            \begin{subfigure}{0.32\textwidth}
                \includegraphics[width=\linewidth]{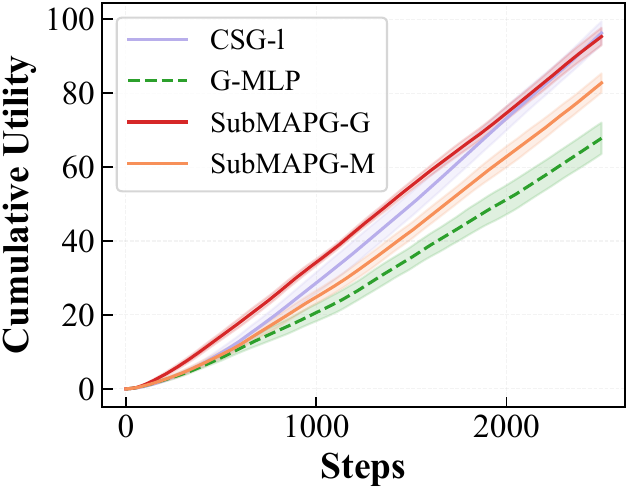}
                \caption{$r_{\mathrm{com}}=\SI{35}{\meter}$}
                \label{fig:com_35_cum}
            \end{subfigure}   
            
            \vspace{0.1cm}
           
            \begin{subfigure}{0.32\textwidth}
                \includegraphics[width=\linewidth]{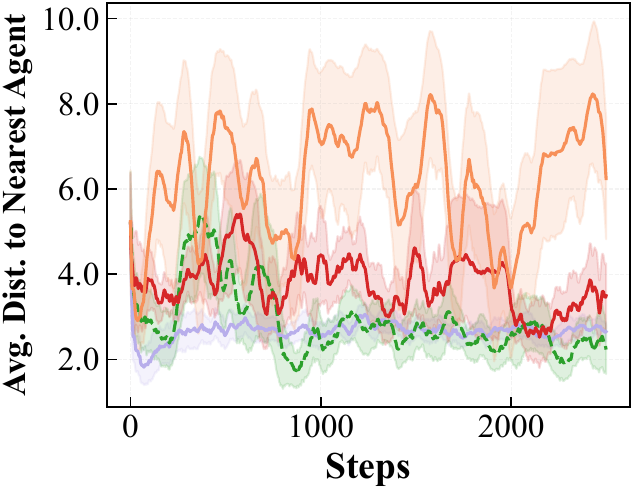}
                \caption{$r_{\mathrm{com}}=\SI{5}{\meter}$}
                \label{fig:com_5_dist}
            \end{subfigure}
            \hfill
            \begin{subfigure}{0.32\textwidth}
                \includegraphics[width=\linewidth]{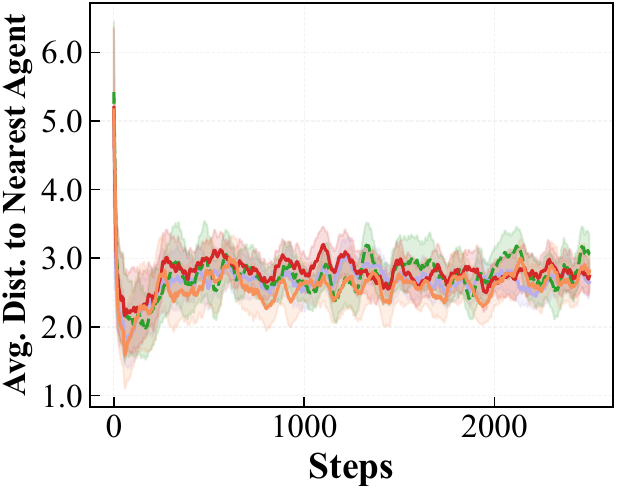}
                \caption{$r_{\mathrm{com}}=\SI{15}{\meter}$}
                \label{fig:com_15_dist}
            \end{subfigure}
            \hfill
            \begin{subfigure}{0.32\textwidth}
                \includegraphics[width=\linewidth]{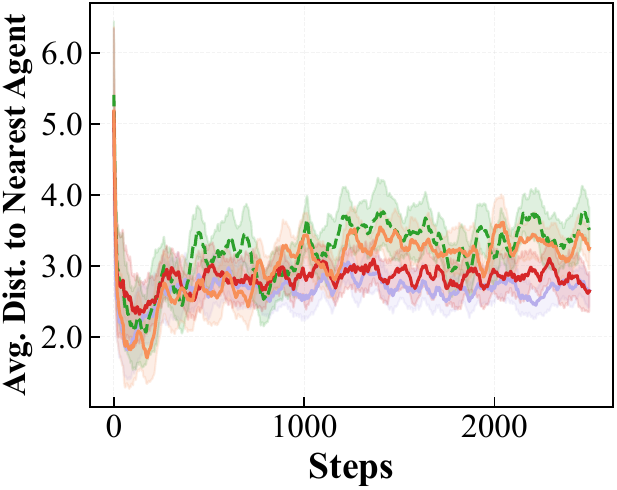}
                \caption{$r_{\mathrm{com}}=\SI{35}{\meter}$}
                 \label{fig:com_35_dist}
            \end{subfigure}
        \end{minipage}%
    }
    \caption{Tracking performance under different communication ranges with fixed sensing radius $r_{\mathrm{sen}}=\SI{10}{\meter}$. Columns correspond to $r_{\mathrm{com}}=\SI{5}{\meter}$, $\SI{15}{\meter}$, and $\SI{35}{\meter}$. Top row: cumulative utility $\sum_{\tau=1}^{t} F_\tau$. Bottom row: average distance from each target
    to its nearest agent. Curves report the mean over 5 seeds; shaded regions indicate one standard deviation.}
    \label{fig:comm}
\end{figure}

\end{document}